\journal{Theoretical Computer Science}
\newcommand{\com}{\mathbin{;}}
\newcommand{\CMTT}{\textit{MTT}}
\newcommand{\CMTTR}{\textit{MTT}^{\textit{R}}}
\newcommand{\CMTTU}{\textit{MTT}^{\textit{U}}}
\newcommand{\CATT}{\textit{ATT}}
\newcommand{\CATTR}{\textit{ATT}^{\textit{R}}}
\newcommand{\CATTU}{\textit{ATT}^{\textit{U}}}
\newcommand{\CBREL}{\textit{B}\textit{-REL}}
\newcommand{\CTREL}{\textit{T}\textit{-REL}}
\newcommand{\CTRREL}{\textit{T}^{\textit{R}}\textit{-REL}}
\newcommand{\MTT}{\text{MTT}}
\newcommand{\MTTR}{\text{MTT}^{\text{R}}}
\newcommand{\MTTU}{\text{MTT}^{\text{U}}}
\newcommand{\ATT}{\text{ATT}}
\newcommand{\ATTU}{\text{ATT}^{\text{U}}}
\newcommand{\BREL}{\textrm{B-REL}}
\newcommand{\TREL}{\text{T}\text{-REL}}
\newcommand{\TRREL}{\text{T}^{\text{R}}\text{-REL}}
\newcommand{\CMTTRC}{\textit{MTT}^{\textit{R}}_{\text{C}}}
\newcommand{\CMTTUC}{\textit{MTT}^{\textit{U}}_{\text{C}}}
\newcommand{\CMTTC}{\textit{MTT}_{\text{C}}}
\newcommand{\CMTTUFV}{\textit{MTT}^{\textit{U}}_{\text{FV}}}
\newcommand{\CMTTFV}{\textit{MTT}_{\text{FV}}}
\newcommand{\CMTTRFV}{\textit{MTT}^{\textit{R}}_{\text{FV}}}
\newcommand{\CMTTDFV}{\textit{MTT}_{\text{DFV}}}
\newcommand{\CMTTRDFV}{\textit{MTT}^{\textit{R}}_{\text{DFV}}}
\newcommand{\CMTTUDFV}{\textit{MTT}^{\textit{U}}_{\text{DFV}}}
\newcommand{\MTTDFV}{\text{MTT}_{\text{DFV}}}
\newcommand{\MTTRDFV}{\text{MTT}^{\text{R}}_{\text{DFV}}}
\newcommand{\MTTUDFV}{\text{MTT}^{\text{U}}_{\text{DFV}}}
\newcommand{\la}{\langle}
\newcommand{\ra}{\rangle}
\newcommand{\rank}{\mathrm{rk}}
\newcommand{\rhs}{\mathrm{rhs}}
\newcommand{\size}{\mathrm{size}}
\newcommand{\nf}{\mathit{nf}}
\newcommand{\ct}{\mathit{ct}}
\newcommand{\Sts}{\mathit{Sts}}
\newcommand{\Top}{\mathit{top}}
\newcommand{\dom}{\mathrm{dom}}
\def \<#1>{{\langle {#1} \rangle}}
\newcommand{\oc}{\mathrm{oc}}
\newenvironment{romanenumerate}{\begin{enumerate}}{\end{enumerate}}
\newcommand{\BOX}{%
  \ifmmode\else\leavevmode\unskip\penalty9999\hbox{}\nobreak\hfill\fi
  \quad\hbox{$\Box$}}
\newcommand\QED{\BOX}
\newtheorem{corollary}{Corollary}
\newtheorem{theorem}{Theorem}
\newtheorem{lemma}{Lemma}
\newtheorem{claim}{Claim}
\newdefinition{definition}{Definition}
\newproof{proof}{Proof}
\begin{document}

\begin{frontmatter}



\title{Characterizing Attributed Tree Translations in Terms of Macro Tree Transducers}


\author[label1]{Kenji Hashimoto}

\affiliation[label1]{organization={Graduate School of Informatics, Nagoya University},
            addressline={Furo-cho Chikusa-ku}, 
            city={Nagoya},
            postcode={464-8603}, 
            state={Aichi},
            country={Japan}}

\author[label2]{Sebastian Maneth}

\affiliation[label2]{organization={AG Datenbanken, Universit\"at Bremen}, 
            addressline={P.O. Box 330 440}, 
            postcode={28334}, 
            state={Bremen},
            country={Germany}}

\begin{abstract}
It is well known that attributed tree transducers can be equipped with 
``regular look-around'' in order to obtain a more robust class of
translations. We present two characterizations of this class in terms
of macro tree transducers (MTTs): the first one is a static restriction on 
the rules of the MTTs, where the MTTs need to be equipped with regular
look-around. The second characterization is a dynamic one, where
the MTTs only need regular look-ahead. 
\end{abstract}



\begin{keyword}
Macro tree transducers \sep  Attributed tree transformations


\end{keyword}

\end{frontmatter}




\section{Introduction}

Attributed tree transducers (ATTs)~\cite{DBLP:journals/actaC/Fulop81} are
instances of attribute grammars~\cite{DBLP:journals/mst/Knuth68,DBLP:journals/mst/Knuth71} 
where the only semantic domain
is trees, and the only available operation is tree top-concatenation. 
Most definitions of ATTs define them as total deterministic devices, thus realizing
total functions from input trees to output trees (cf., e.g.,\cite{DBLP:series/eatcs/FulopV98}).
ATTs strictly generalize top-down tree transducers: an ATT without 
inherited attributes is exactly a top-down tree transducer. 
On the other hand, there are bottom-up tree transformations that \emph{cannot}
be realized by ATTs~\cite{DBLP:series/eatcs/FulopV98}. The latter ``deficiency'' can be remedied
if the ATTs are equipped with \emph{regular look-around}; 
this yields a robust class of translations which coincides 
with the class of 
tree-to-dag (to tree) translations definable in MSO logic~\cite{BE00}.
Intuitively, regular look-around amounts to preprocessing the input tree
via an “attributed relabeling”; such a relabeling is equivalent to a
top-down relabeling with regular look-ahead (or, to preprocessing the
input tree first via a bottom-up relabeling, followed by a top-down relabeling)~\cite{EM99}. 

While ATTs can be seen as an operational model, 
macro tree transducers (MTTs)~\cite{DBLP:journals/jcss/EngelfrietV85} 
(introduced independently in~\cite{Eng80} and 
in~\cite{DBLP:journals/tcs/CourcelleF82,DBLP:journals/tcs/CourcelleF82a})
are a denotational model, more akin to functional programming, see e.g.~\cite{DBLP:conf/icfp/BahrD13}. 
MTTs are strictly more powerful than ATTs.
For instance, they can translate a monadic input tree of height $n$
into a monadic output tree of height $2^n$.
An ATT cannot do this, because the number of distinct subtrees of an output tree is linearly
bounded in
the size of the corresponding input tree (property ``\textbf{LIN}'';
see, e.g.,~\cite{DBLP:series/eatcs/FulopV98}).
In the context of XML translation, ATTs with nested pebbles have been
considered~\cite{DBLP:journals/jcss/MiloSV03} which can be simulated by compositions
of MTTs~\cite{DBLP:journals/acta/EngelfrietM03}.

F{\"u}l{\"o}p and Vogler~\cite{FV99} were the first to present a characterization
of ATTs in terms of MTTs. Their characterization is based on a particular 
construction from ATTs to MTTs: each synthesized attribute of the ATT becomes
a state of the MTT, and each inherited attribute becomes a parameter of a state
of the MTT. 
This causes that for a given input symbol the parameter trees of all
state calls in the rules for that symbol are equal.
This equality is the first part of their restriction (called \emph{consistency}) on MTTs. 
This may be a somewhat inflexible syntactic restriction. In fact the non-deleting normal form 
of some consistent MTT is not consistent.
In their construction from MTT to ATT, it may happen that the resulting
ATT is \emph{circular} (and hence not well-defined). 
Thus, the second part of their restriction requires that the ATT resulting from their
construction is \emph{non-circular}. 
Let us refer to their (full) restriction as \textit{FV-orig}.

In this paper we formulate restrictions on MTTs to characterize ATTs.
Our first restriction, called \emph{FV property} is similar to 
consistency (part one of FV-orig) and differs in the following ways: (1) it is defined for MTTs with regular look-around, 
(2) the MTTs are nondeleting (with respect to the parameters), and (3) for such MTTs we can construct ATTs 
with regular look-around that are guaranteed to be non-circular. 
In the FV property we relax  the requirement of the consistency on equality between the parameter trees of all state calls in the rules 
for the same symbols by using a ``parameter renaming mapping''.

Our second characterization of ATTs with regular look-around in terms of MTTs
(with regular look-ahead) is called 
the \emph{dynamic FV property}.
It generalizes the FV property in several ways.
First, we do not compare argument trees of different states.
Second, argument trees (of the same state) are not required to be syntactically equal,
but are only required to be semantically equal (i.e., must evaluate to the same output tree).
And third, we only consider states and argument trees that are reachable. 
In this characterization, it suffices to consider MTTs with regular look-ahead.
The reason for this is that MTTs with the dynamic FV property are closed under
left-composition with top-down relabelings; a property that we conjecture not to
hold for MTTs with the FV property and regular-look ahead only. 
%
It should be noted that the generality of the dynamic FV property comes at a price:
while the FV property is easily decidable, we do not know how to decide the dynamic FV property.
In fact, we are able to show that deciding the dynamic FV property is \emph{at least} as
difficult as deciding equivalence of ATTs. 
The equivalence problem of ATTs is a long standing open problem;
for some subclasses of ATTs with regular look-around, e.g., MSO-definable tree-to-tree translations,
equivalence is decidable~\cite{Eng80,DBLP:journals/jacm/SeidlMK18}.
Note that MTTs with monadic output trees (i.e., a class incomparable to ATTs) has
decidable equivalence~\cite{DBLP:journals/jacm/SeidlMK18}
(see also, \cite{DBLP:conf/fsttcs/BoiretPS18,DBLP:journals/siglog/Bojanczyk19a}).
---
All our results also hold for \emph{partial} transducers; this can be achieved by allowing
the look-ahead and look-around to be partial. The characterization by F{\"u}l{\"o}p and Vogler~\cite{FV99}
is defined for total transducers only.

\section{Preliminaries}


For a non-negative integer $k$, we denote by $[k]$ the
set $\{1,\dots,k\}$. 
For functions $f,g$ we denote by $f\com g$ their
sequential composition that maps $a\in\dom(f)$ to $g(f(a))$.
For classes of functions $F$ and $G$, 
$F\com G$ denotes $\{ f\com g\mid f\in F, g\in G\}$.


\medskip

\noindent
\textbf{Trees and Tree Substitution.}\quad
A ranked alphabet $\Sigma$ consists of an alphabet together with
a mapping $\rank_\Sigma$ that associates to each symbol of the alphabet
a non-negative integer (its \emph{rank}). 
By $\Sigma^{(k)}$ we denote the set of symbols $\sigma$ in $\Sigma$
for which $\rank_\Sigma(\sigma)=k$. 
We also write $\sigma^{(k)}$ to indicate that the symbol
$\sigma$ has rank $k$. 
The set $T_\Sigma$ of \emph{trees} (\emph{over $\Sigma$}) is 
the smallest set $T$ of strings such that 
if $t_1,\dots,t_k\in T$, $k\geq 0$, and $\sigma\in\Sigma^{(k)}$,
then $\sigma(t_1,\dots,t_k)$ is in $T$.
We write $\sigma$ for a tree of the form $\sigma()$.
For a set $S$ disjoint with $\Sigma$ we define $T_\Sigma(S)$ as
$T_{\Sigma'}$ where $\Sigma'=\Sigma\cup\{ s^{(0)}\mid s\in S\}$.
The set $V(t)$ of nodes of a tree $t=\sigma(t_1,\dots,t_k)$ is defined as
$\{\varepsilon\}\cup \{ i u\mid i\in[k], u\in V(t_i) \}$. 
For $u, u'\in V(t)$, we say that $u$ is a descendant of $u'$ if $u'$ is a prefix of $u$. 
If $u$ is a descendant of $u'$ and $u\neq u'$, we say that $u$ is a proper descendant of $u'$. 
We let $u0=u$ for every node $u$.
For $u\in V(t)$, $t[u]$ denotes the label of $u$,
$t/u$ denotes the subtree rooted at $u$,
and for a tree $t'$, $t[u\leftarrow t']$ denotes the tree obtained from $t$ by
replacing the subtree rooted at $u$ by the tree $t'$.

Let $t,t_1,\dots,t_k$ be trees and $\sigma_1,\dots,\sigma_k$ be symbols of rank zero.
Then $t[\sigma_i\leftarrow t_i\mid i\in[k]]$ denotes the result of replacing
in $t$ each occurrence of $\sigma_i$ by the tree $t_i$.
We also define a more powerful type of tree substitution, where inner nodes
of a tree may be replaced.
We fix the set $Y=\{ y_1,y_2,\dots \}$ of \emph{parameters}, and denote $\{y_1,\dots,y_m\}$ by $Y_m$.
Now let $\sigma_1,\dots,\sigma_k$ be symbols of arbitrary rank $m_1,\ldots,m_k$, respectively, and $t_i\in T_\Sigma(Y_{m_i})$ for $i\in [k]$. 
Then $t[\![ \sigma_i\leftarrow t_i\mid i\in[k] ]\!]$ denotes the result
of replacing in $t$ each subtree $\sigma_i(s_1,\dots,s_{m_i})$
by the tree $t_i[y_j\leftarrow s_j'\mid j\in[m_i]]$ where 
$s_j'=s_j[\![ \sigma_i\leftarrow t_i\mid i\in[k] ]\!]$.


\medskip

\noindent
\textbf{Macro Tree Transducers.}\quad
We fix the set $X=\{x_1,x_2, \dots\}$ of \emph{input variables} and
assume it to be disjoint from $Y$ and all other alphabets. 
We denote $\{x_1,\dots,x_k\}$ by $X_k$. 
A \emph{(total deterministic) macro tree transducer} (\emph{MTT} for short) is
a tuple $M=(Q,\Sigma,\Delta,q_0,R)$ where $Q$ is a ranked alphabet of states,
$\Sigma$ and $\Delta$ are ranked alphabets
of input and output symbols, respectively, $q_0\in Q^{(0)}$ is
the initial state, and $R$ is a set of rules.
For every $q\in Q^{(m)}$ and $\sigma\in\Sigma^{(k)}$ there is
exactly one rule in $R$ of the form
$\langle q,\sigma(x_1,\dots,x_k)\rangle(y_1,\dots,y_m)\to \zeta$
where $\zeta\in T_{\Delta\cup \langle Q,X_k\rangle}(Y_m)$. 
Note that we follow the style of the definition of MTTs in \cite{EM99}. 
For a set $S$, 
$\langle Q,S\rangle$ denotes the ranked set 
$\{\langle q,s\rangle^{(n)}\mid q\in Q^{(n)}, n\geq 0, s\in S\}$. 
Note that we use this notation even for an infinite set $S$, 
e.g., $\langle Q, X\rangle$ for the variable set $X$.  
The tree $\zeta$ in the right-hand side is also denoted $\rhs_M(q,\sigma)$.
The semantics of an MTT is defined as follows.
Let $\sigma\in\Sigma^{(k)}$, $k\geq 0$,
$s_1,\dots,s_k \in T_\Sigma$, and $q\in Q^{(m)}$ with $m\geq 0$.
Then $M_q(\sigma(s_1,\dots,s_k))$ denotes the tree
$\rhs_M(q,\sigma)[\![\langle q',x_i\rangle \leftarrow M_{q'}(s_i)\mid
q'\in Q, i\in[k]]\!]$.
The \emph{translation realized by $M$}, denoted $\tau_M$,
is defined as $\{ (s,t)\mid s\in T_\Sigma, t=M_{q_0}(s)\}$.
The class of all translations realized by MTTs is denoted $\CMTT$.


\medskip
\noindent
\textbf{Attributed Tree Transducers.}\quad
An \emph{attributed tree transducer} is a tuple
$A=(S,I,\Sigma,\Delta,$ $\alpha_0,R)$ where $S$ and $I$ are disjoint finite sets
of synthesized and inherited attributes, respectively, $\Sigma$ and $\Delta$ 
are ranked alphabets of input and output symbols, respectively, 
$\alpha_0\in S$ is the output attribute, and $R$ is a collection of
sets $R_\sigma$, $\sigma\in\Sigma$ of rules. Let $\sigma\in\Sigma^{(k)}$ and $k\geq 0$. 
Let $\pi$ be a variable for paths, and we define $\pi 0 = \pi$.  
For every $\alpha\in S$ the set $R_\sigma$ contains a rule
of the form $\alpha(\pi)\to t$ and for every $\beta\in I$ and
$i\in[k]$ the set $R_\sigma$ contains a rule of the form
$\beta(\pi i)\to t'$, and the trees $t,t'$ are in $T_\Delta( \{\alpha'(\pi i)\mid \alpha'\in S, i\in [k]\}
\cup \{ \beta'(\pi)\mid \beta'\in I\})$. 
The right-hand sides $t,t'$ of these rules are
denoted $\rhs_A(\sigma,\alpha(\pi))$ and 
$\rhs_A(\sigma,\beta(\pi i))$, respectively.

To define the semantics of the ATT $A$ on an input tree $s\in T_\Sigma$, 
we first define the \emph{dependency graph of $A$ on $s$} as
$D_A(s)=(V,E)$, where 
$V=\{(\alpha_0,\varepsilon)\}\cup (S\cup I)\times (V(s)-\{\varepsilon\})$ and
$E=\{ ((\gamma',uj),(\gamma,ui))\mid u\in V(s), \gamma'(\pi j)\text{ occurs in } \rhs_A(s[u],\gamma(\pi i)), 0\leq i,j\leq \rank_\Sigma(s[u]), \gamma, \gamma'\in S\cup I\}$.
If $D_A(s)$ contains a cycle for some $s\in T_\Sigma$ then
$A$ is called \emph{circular}.
Let $N=\{\alpha_0(\varepsilon)\}\cup \{a(u)\mid a\in S\cup I, u\in V(s)-\{\varepsilon\}\}$. 
For trees $t,t'\in T_\Delta(N)$, $t\Rightarrow_{A,s} t'$ holds if
$t'$ is obtained from $t$ by 
replacing a node labeled $\gamma(ui)$ by $\rhs_A(s[u],\gamma(\pi i))[\gamma'(\pi i)\leftarrow \gamma'(u i) \mid \gamma'\in S\cup I, 0\leq i\leq \rank_\Sigma(s[u])]$. 
If $A$ is non-circular, then every $t\in T_\Delta(N)$ has a unique normal form
with respect to $\Rightarrow_{A,s}$ which we denote by $\nf(\Rightarrow_{A,s},t)$.
The \emph{translation realized by $A$}, denoted $\tau_A$, is defined as
$\{(s,\nf(\Rightarrow_{A,s},\alpha_0(\varepsilon)))\mid s\in T_\Sigma\}$.
The class of all translations realized by non-circular ATTs is denoted $\CATT$.


\medskip
\noindent
\textbf{Regular Look-Ahead and Regular Look-Around.}~
For the classes $\CMTT$ and $\CATT$
we define \emph{regular look-ahead} and \emph{regular look-around} by
means of pre-composition with the classes $\BREL$ and $\TRREL$, respectively.
Here $\BREL$ is the class of deterministic bottom-up finite state 
relabelings $(P,\Sigma,\Sigma',F,R)$ where $P$ is a finite set of states, 
$\Sigma, \Sigma'$ are ranked alphabets, the set $F\subseteq P$ of final states, and 
a set $R$ of relabeling rules. The set $R$ contains for every
$p_1,\dots,p_k\in P$ and $\sigma\in\Sigma^{(k)}$, $k\geq 0$, exactly one rule of the form
$\sigma(p_1(x_1),\dots,p_k(x_k))\to p(\sigma'(x_1,\dots,x_k))$
where $p\in P$ and $\sigma'\in\Sigma'^{(k)}$.
They are evaluated in the obvious way, like a bottom-up tree automaton with the set $F$ of final states.
Note that we assume $F=P$ in this paper since we treat only total transducers for simplicity of discussion, 
but our results can be extended to \emph{partial} transducers by allowing relabeling transducers to be partial. 
The class $\TRREL$ is defined as $\BREL\com \TREL$, where
$\TREL$ is a deterministic top-down relabeling 
$(Q,\Sigma,\Sigma',q_0,R)$ which contains for every $q\in Q$ and $\sigma\in\Sigma^{(k)}$
exactly one rule of the form $q(\sigma(x_1,\dots,x_k))\to\sigma'(q_1(x_1),\dots,q_k(x_k))$,
where $\sigma'\in\Sigma'^{(k)}$ and $q_1,\dots,q_k\in Q$.
We use the superscripts \text{R} (\text{U}) to indicate the presence of regular 
look-ahead (look-around).
For instance $\CATTU = \CTRREL\com \CATT$ is the class of translations realized by attributed tree
transducers with regular look-around (cf. Section~4 of~\cite{EM99}); this is the class
for which we give new characterizations in terms of $\MTT$s in this paper.

\section{FV Property for Nondeleting MTTs with Regular Look-Around}

Let us recall the definition of the consistency of F{\"u}l{\"o}p and Vogler~\cite{FV99}. 
In the following, let $M=(Q,\Sigma,\Delta,q_0,R)$ be an $\MTT$. 

\begin{definition}
\label{def:importance}
\rm
Let $\sigma\in \Sigma^{(k)}$, $k\geq 0$, and $q\in Q$. 
Let $\zeta=\rhs_M(q, \sigma)$ and $v\in V(\zeta)$.  
The node $v$ is \emph{important} in $\zeta$ for $s_1,\ldots,s_k\in T_\Sigma$
if the symbol $*$ occurs in $\zeta[v\leftarrow *][\![\<q', x_i> \leftarrow M_{q'}(s_i)\mid q'\in Q, i\in [k]]\!]$. 
If $v$ is important in $\zeta$ for some  $s_1,\ldots,s_k\in T_\Sigma$, then
we say that $v$ is \emph{important} in $\zeta$. 
\end{definition}

\begin{definition}\label{def:consistent}
\rm
The $\MTT$ $M$ 
is \emph{consistent} if the following condition holds 
for every $\sigma\in \Sigma^{(k)}$, $k\geq 0$, and $q_1,q_2\in Q$: 
for $i=1,2$ let $\zeta_i=\rhs_M(q_i,\sigma)$ 
and $w_i\in V(\zeta_i)$ 
such that $\zeta_i[w_i]\in \<Q, \{x_l\}>$ for some $l\in [k]$. 
Let $\<p_i^{(m_i)}, x_l>=\zeta_i[w_i]$ for $i=1,2$. 
Then, for every $j\in [\min\{m_1, m_2\}]$, if $w_1j$ and $w_2j$ are important in $\zeta_1$ and $\zeta_2$, respectively, 
then $\Top(\zeta_1/w_1j)=\Top(\zeta_2/w_2j)$ where $\Top(\zeta)$ is defined as follows:
\[
\Top(\zeta) = 
\begin{cases}
q'(\pi i)	 & \text{if $\zeta[\varepsilon]=\langle q', x_i\rangle \in \<Q,X>$}\\
\delta(\Top(\zeta_1),\ldots,\Top(\zeta_k)) & \text{if $\zeta=\delta(\zeta_1,\ldots,\zeta_k)$ where $\delta\in \Delta$}\\
y_j(\pi) & \text{if $\zeta=y_j\in Y$.}
\end{cases}
\]
\end{definition}
The class of all translations realized by consistent MTTs is denoted $\CMTTC$. 

In Figure~\ref{fig:abcd} on the left we see the dependency graph of an ATT which translates monadic trees
of the form $\#(a^n(e))$ to monadic output trees of the form $a^n(b^n(c^n(d^n(e))))$; here shown for $n=2$. 
The top right part shows the rules of a consistent MTT. Observe the two bottom-most rules, where the state 
$q_1$ deletes its second parameter ($y_2$) and the state $q_2$ deletes its first parameter, 
and the two rules above which have $\bot$ at the non-important argument positions in their right-hand sides. 
Note that non-important parameters 
just for padding are necessary for an MTT to satisfy the consistency. The bottom right part in
Figure~\ref{fig:abcd} 
shows an equivalent MTT without the redundant parameters which does not satisfy the consistency. 

\begin{figure}[t]
\centering
\begin{minipage}{0.3\textwidth}
\input{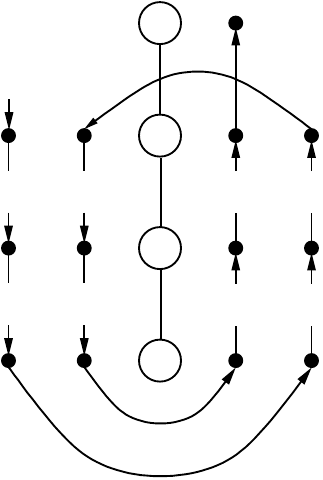_t}
\end{minipage}
\begin{minipage}{0.65\textwidth}
\[
\begin{array}{lcl}
\langle q_0,\#(x_1)\rangle&\to&\langle q_1,x_1\rangle(\langle q_2,x_1\rangle(\bot,e),\bot)\\
\langle q_1,a(x_1)\rangle(y_1,y_2)&\to& a(\langle q_1,x_1\rangle(b(y_1),\bot))\\
\langle q_2,a(x_1)\rangle(y_1,y_2)&\to& c(\langle q_2,x_1\rangle(\bot,d(y_2)))\\
\langle q_1,e\rangle(y_1,y_2) &\to& y_1\\
\langle q_2,e\rangle(y_1,y_2) &\to& y_2\\[1mm]
\hline\\[-4mm]
\langle q_0,\#(x_1)\rangle&\to&\langle q_1,x_1\rangle(\langle q_2,x_1\rangle(e))\\
\langle q_1,a(x_1)\rangle(y_1)&\to& a(\langle q_1,x_1\rangle(b(y_1)))\\
\langle q_2,a(x_1)\rangle(y_1)&\to& c(\langle q_2,x_1\rangle(d(y_1)))\\
\langle q_1,e\rangle(y_1) &\to& y_1\\
\langle q_2,e\rangle(y_1) &\to& y_1
\end{array}
\]
\end{minipage}
\caption{An ATT and MTTs that translate $\#(a^n(e))$ to $a^n(b^n(c^n(d^n(e))))$}
\label{fig:abcd}
\end{figure}

F{\"u}l{\"o}p and Vogler construct from a consistent MTT an ATT which
may possibly be circular. Let us review such an example from~\cite{FV99}.
Consider the MTT with the rules shown in the left of Figure~\ref{fig:circ}, where
all rules with left-hand sides that are \emph{not} shown, have
as right-hand side the output leaf $\#$.
\begin{figure}[t]
\begin{minipage}{0.68\textwidth}
\[
\begin{array}{lcl}
\la q_0,\sigma(x_1,x_2)\ra&\to&\la q,x_1\ra(\la q_2,x_2\ra(\la q_1,x_1\ra(\$)))\\
\la q,e\ra(y_1)&\to&y_1\\
\la q_2,e\ra(y_1)&\to&y_1\\
\la q_1,e'\ra(y_1)&\to&y_1\\[1cm]
\end{array}
\]
\end{minipage}
\hspace*{-3.0cm}
\begin{minipage}{0.40\textwidth}
\input{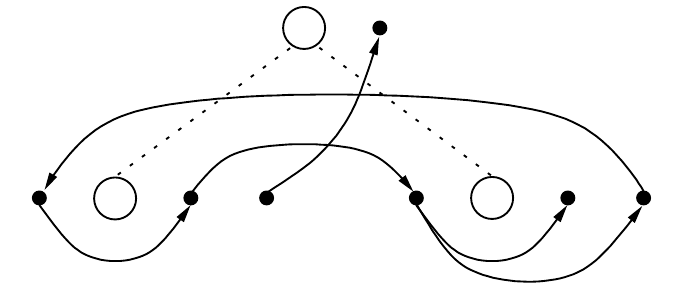_t}
\end{minipage}
\caption{The consistent MTT $M_c$ and the circular ATT obtained by the construction of~\cite{FV99}}
\label{fig:circ}
\end{figure}
As the reader may verify, this MTT indeed is consistent:
the $(q_0,\sigma)$-rule is consistent, because the parameter of $q$ is important
(for inputs of the form $\sigma(e,s)$), 
but the one of $q_1$ is not important (because it is 
deleted by $q_1$ for inputs $\sigma(e,s)$, or by $q$ for inputs $\sigma(e',s)$). 
A part of the dependency graph for the input tree $\sigma(e',e)$
is shown in Figure~\ref{fig:circ}.
The idea is as follows. Since $\la q,x_1\ra$ appears in a non-nested position
in $\rhs(q_0,\sigma)$, there is an edge from $(q,1)$ to $(q_0,\varepsilon)$.
The inherited attribute ``$y_1$'' at the $e'$-node has an incoming edge
from $(q_2,2)$, because the call $\la q_2,x_2\ra$ appears in the first parameter
position of $\langle q,x_1\ra$ in $\rhs(q_0,\sigma)$. The bottom-most
edges from the nodes $(y_1,1)$ and $(y_1,2)$ stem from rules
with right-hand side $y_1$.


We now define the FV property for nondeleting MTTs, by adapting the definition of consistency to nondeleting MTTs.
An MTT is \emph{nondeleting}, if every parameter that appears in the left-hand side
of a rule, also appears in the right-hand side. 
In order to formalize MTTs such as $M_{a}$ in the bottom-right of Figure~\ref{fig:abcd}
we use mappings of the form $\rho: Q\times\mathbb{N}\to\mathbb{N}$. Intuitively,
if $\rho(q,j)=j'$, then the $j$th parameter of state $q$ of an MTT is renamed by the parameter with index $j'$ in the corresponding consistent MTT. 
So, for $M_a$ it holds that $\rho(q_1,1)=1$ and $\rho(q_2,1)=2$. 
As we will state, in our construction of an equivalent ATT via a consistent MTT, the parameter with the index $j'$ of the state $q$ 
corresponds to the $j'$th inherited attribute of the resulting ATT.

\begin{definition}\label{def:equality-after-renaming}
\rm
Let $\rho: Q\times \mathbb{N}\to \mathbb{N}$. 
For $q\in Q^{(m)}$, we denote by $\Psi^\rho_q$ the renaming of parameters $[y_l\leftarrow y_{l'} \mid l\in [m], l'=\rho(q, l)]$. 
For $q,q'\in Q$, 
we define the relation $\sim^{q,q'}_\rho$ on 
trees $\xi_1,\xi_2\in T_{\Delta\cup\<Q,X>}$ such that
$\xi_i[\varepsilon]\in\<Q,X>$ as follows:
$\xi_1 \sim^{q,q'}_\rho \xi_2$ if $\xi_1[\varepsilon]=\<q_1^{(m_1)}, x>$ and $\xi_2[\varepsilon]=\<q_2^{(m_2)}, x>$ for some $q_1,q_2\in Q$ and 
$x\in X$, and for every $j_1\in [m_1]$ and $j_2\in [m_2]$, if  $\rho(q_1, j_1)=\rho(q_2,j_2)$ then $(\xi_1/j_1)\Psi^{\rho}_{q} = (\xi_2/j_2)\Psi^{\rho}_{q'}$. 
\end{definition}

\begin{definition}
\rm
\label{def:FVproperty}
Let $\rho:Q \times \mathbb{N} \to \mathbb{N}$
such that for every $q\in Q$, 
$\rho(q,i)\not=\rho(q,j)$ if $i\not=j$ and $i,j\in[\rank_Q(q)]$. 
Suppose that the $\MTT$ $M$ is nondeleting. 
The $\MTT$ $M$ 
\emph{has the FV property with $\rho$} 
if the following condition holds for every $\sigma\in \Sigma^{(k)}$, $k\geq 0$, and $q_1,q_2\in Q$:
let $\xi_1$ and $\xi_2$ be any subtrees of $\rhs_M(q_i, \sigma)$ for $i=1,2$, respectively, 
such that $\xi_1[\varepsilon],\xi_2[\varepsilon]\in \<Q,\{x_j\}>$ for some $j\in [k]$. 
Then $\xi_1\sim^{q_1,q_2}_{\rho} \xi_2$ must hold.
We say that $M$ \emph{has the FV property} if it has the FV property with some $\rho$. 
We denote by $\CMTTFV$ the class of all translations realized by $\MTT$s with the FV property. 
\end{definition}
%


We now present our construction from MTTs with the FV property to consistent MTTs. 
Intuitively, according to the mapping $\rho$, in the construction, 
we increase the number of parameters of every state to the maximum number $\tilde{m}$ of the range of $\rho$, 
and each $j$th argument tree of each call $\<q, x_i>$ in the right-hand side of rules are moved 
to the $\rho(q, j)$th argument position of the corresponding call $\<q, x_i>^{(\tilde{m})}$. 
The remaining argument parts of state calls are filled by a dummy label $\bot$. 

\newcommand{\expand}{e}

\begin{definition}
\rm
\label{def:MTTFV2MTTC}
Suppose that the nondeleting MTT $M$ satisfies the FV property with $\rho:Q\times\mathbb{N}\to \mathbb{N}$.
Let $\tilde{m}=\max\{\rho(q,j) \mid q\in Q, j\in [\rank_Q(q)]\}$. 
The \emph{consistent MTT associated with $M$}, denoted by $\mathcal{E}(M)$, is the MTT $(Q_e,\Sigma,\Delta,q_0,R_e)$ with $Q_e=Q^{(0)}\cup \{q^{(\tilde{m})}\mid q\in Q-Q^{(0)}\}$ 
and $R_e$ is the smallest set $R'$ satisfying the following. 
Note that each state of $\mathcal{E}(M)$ corresponds to the state with the same name in $M$ 
but its rank is $\tilde{m}$ if its original rank is not zero, and zero otherwise. 
For every $q\in Q_e$ and $\sigma \in \Sigma^{(k)}$ with $k\geq 0$, 
$R'$ contains the $(q, \sigma)$-rule with $\expand_q(\rhs_M(q,\sigma))$ in the right-hand side
where $\expand_q(\zeta)$ is defined as follows:
\begin{itemize}
    \item $\expand_q(\zeta)=\<q',x_i>$ if $\zeta=\<q',x_i>\in \<Q^{(0)}, X_k>$. 
    \item  $\expand_q(\zeta)=\<q', x_i>(s_1,\ldots,s_{\tilde{m}})$ 
        if $\zeta=\langle q', x_i\rangle(\zeta_1,\ldots,\zeta_m)$ such that $\<q', x_i>\in \<Q^{(m)}, X_k>$ and $m>0$, 
        where for $j\in [\tilde{m}]$, $s_j=\expand_q(\zeta_i)$ 
        if there exists $i$ such that $\rho(q', i)=j$, and  
        $s_j=\bot$ otherwise, where $\bot$ is any label in $\Delta^{(0)}$. 
    \item  $\expand_q(\zeta)=\delta(\expand_q(\zeta_1),\ldots,\expand_q(\zeta_{m}))$ 
        if $\zeta=\delta(\zeta_1,\ldots,\zeta_m)$ with $\delta\in \Delta$.
    \item $\expand_q(\zeta)=y_{\rho(q,j)}$ if $\zeta=y_j\in Y_{\rank_Q(q)}$.
\end{itemize}
\end{definition}

We can see that the above construction transforms the bottom right MTT 
with the FV property in Figure~\ref{fig:abcd} into the top right MTT, 
which is consistent. 

We shall prove the correctness of the construction of Definition~\ref{def:MTTFV2MTTC}. 
We show that $\mathcal{E}(M)$ is a consistent MTT equivalent with $M$ 
if the nondeleting MTT $M$ has the FV property. 

\begin{lemma}
\label{lem:equivalence_MTTFV2MTTC}
Suppose that the nondeleting MTT $M$ has the FV property with $\rho$. Then $M$ and $\mathcal{E}(M)$ are equivalent. 
\end{lemma}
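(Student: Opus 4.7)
My plan is to prove, by structural induction on $s\in T_\Sigma$, the strengthened invariant
\[\mathcal{E}(M)_q(s) \;=\; M_q(s)\,\Psi^{\rho}_q\qquad\text{for every }q\in Q.\]
Since $q_0\in Q^{(0)}$, the renaming $\Psi^{\rho}_{q_0}$ is vacuous, so specialising the invariant to $q=q_0$ gives $\mathcal{E}(M)_{q_0}(s)=M_{q_0}(s)$ for every $s$, which is precisely the equivalence claimed.

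The inductive step I would reduce to a commutation property of the expansion operator $\expand_q$ of Definition~\ref{def:MTTFV2MTTC}: for every $q\in Q$, every $\zeta\in T_{\Delta\cup\langle Q,X_k\rangle}(Y_{\rank_Q(q)})$, and all $t_1,\dots,t_k\in T_\Sigma$,
\[\expand_q(\zeta)[\![\langle q',x_i\rangle\leftarrow\mathcal{E}(M)_{q'}(t_i)\mid q'\in Q_e,\,i\in[k]]\!] \;=\; \bigl(\zeta[\![\langle q',x_i\rangle\leftarrow M_{q'}(t_i)\mid q'\in Q,\,i\in[k]]\!]\bigr)\Psi^{\rho}_q.\]
Taking $\zeta=\rhs_M(q,\sigma)$ and $s=\sigma(t_1,\dots,t_k)$ then matches the definitions of $\mathcal{E}(M)_q(s)$ and $M_q(s)\,\Psi^{\rho}_q$, so the outer induction step follows immediately. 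The commutation property itself is proved by a nested induction on $\zeta$, freely using the outer hypothesis $\mathcal{E}(M)_{q'}(t_i)=M_{q'}(t_i)\,\Psi^{\rho}_{q'}$ on the smaller subtrees.

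The routine cases of the inner induction are $\zeta=y_j$, $\zeta=\delta(\zeta_1,\dots,\zeta_n)$ with $\delta\in\Delta$ (where $\Psi^{\rho}_q$ commutes with an output symbol), and $\zeta=\langle q',x_i\rangle$ with $q'\in Q^{(0)}$ (where $M_{q'}(t_i)\in T_\Delta$ is parameter-free so $\Psi^{\rho}_q$ acts trivially). The only substantive case is $\zeta=\langle q',x_i\rangle(\zeta_1,\dots,\zeta_n)$ with $n>0$: by the construction, $\expand_q(\zeta)$ is a rank-$\tilde m$ call whose $\rho(q',l)$-th argument is $\expand_q(\zeta_l)$ for $l\in[n]$ and whose remaining positions hold $\bot$. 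After substitution, the outer IH fills that call with $M_{q'}(t_i)\,\Psi^{\rho}_{q'}$, and the parameters occurring in $M_{q'}(t_i)\,\Psi^{\rho}_{q'}$ lie entirely in $\{y_{\rho(q',l)}\mid l\in[n]\}$, so the $\bot$-padded positions are never read. Injectivity of $\rho(q',\cdot)$ on $[\rank_Q(q')]$ then lets me rewrite the resulting tree as $M_{q'}(t_i)[y_l\leftarrow \expand_q(\zeta_l)[\![\cdot]\!]\mid l\in[n]]$, and the inner IH replaces each $\expand_q(\zeta_l)[\![\cdot]\!]$ by $(\zeta_l[\![\cdot]\!])\,\Psi^{\rho}_q$. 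Finally, $\Psi^{\rho}_q$ commutes past the outer substitution into $M_{q'}(t_i)$ because the parameters $y_1,\dots,y_n$ of $q'$ are already substituted out, while $\Psi^{\rho}_q$ renames only parameters of $q$.

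The main obstacle I anticipate is purely the bookkeeping of three simultaneous parameter index sets -- $Y_{\rank_Q(q)}$, $Y_{\rank_Q(q')}$, and $Y_{\tilde m}$ -- together with the verification that the $\bot$-padding introduced by $\expand_q$ is absorbed harmlessly because the padded positions carry no parameter occurrences in the expanded tree. Notably, the FV property itself is not invoked in this equivalence proof: only the injectivity of $\rho(q,\cdot)$ on $[\rank_Q(q)]$ built into the definition of a renaming mapping is needed. The FV hypothesis and the non-deleting assumption are reserved for the companion claim that $\mathcal{E}(M)$ is consistent, not for the present equivalence.
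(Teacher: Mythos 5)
Your proposal is correct and follows essentially the same route as the paper: the same strengthened invariant $\mathcal{E}(M)_q(s)=M_q(s)\Psi^\rho_q$, proved by an outer structural induction on $s$ with an inner induction on subtrees of the right-hand side, and the same key observations in the nontrivial state-call case (the parameters of $M_{q'}(t_i)\Psi^\rho_{q'}$ lie in the image of $\rho(q',\cdot)$, so the $\bot$-padded positions are never read, and injectivity of $\rho(q',\cdot)$ permits the re-indexing). Your closing remark that only the per-state injectivity of $\rho$, and not the full FV property, is needed here is also consistent with the paper, which reserves the FV property for the consistency and non-circularity lemmas.
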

\begin{proof}
Let $M'=\mathcal{E}(M)=(Q',\Sigma,\Delta,q_0,R')$. 
Recall that $\Psi^\rho_q=[y_l\leftarrow y_{l'}\mid l\in [m], l'=\rho(q,l)]$ for $q\in Q^{(m)}$ and $m\geq 0$. 
We prove the following two statements (i) and (ii). The lemma follows from (i). 
Let $s$ be an arbitrary tree in $T_\Sigma$. 
\begin{itemize}
    \item[(i)] For every $q\in Q$, $M_q(s)\Psi^\rho_q=M'_q(s)$.
    
    \item[(ii)] For every $q\in Q$, $u\in V(\zeta_q)$, and $k\geq 0$ such that 
    $s[\varepsilon]\in \Sigma^{(k)}$, 
    $(\zeta_q/u)\theta\Psi^\rho_q=\expand_q(\zeta_q/u)\theta'$ where $\zeta_q=\rhs_M(q,\sigma)$, 
    $\theta=[\![\langle r, x_i\rangle \leftarrow M_r(s/i) \mid r\in Q, i\in [k]]\!]$, and
    $\theta'=[\![\langle r, x_i\rangle \leftarrow M'_r(s/i) \mid r\in Q', i\in [k]]\!]$. 
\end{itemize}
We first prove that (ii) implies (i) for all $s\in T_\Sigma$.

\paragraph*{(ii) implies (i)}
Let $s=\sigma(s_1,\ldots, s_k)\in T_\Sigma$ where $\sigma\in \Sigma^{(k)}$, $k\geq 0$, and 
$s_1,\ldots, s_k\in T_\Sigma$. 
Let $q\in Q^{(m)}$.  
Let $\zeta=\rhs_M(q,\sigma)$, 
$\theta=[\![\langle r, x_i\rangle \leftarrow M_r(s/i) \mid r\in Q, i\in [k]]\!]$, 
and $\theta'=[\![\langle r, x_i\rangle \leftarrow M'_r(s/i) \mid r\in Q', i\in [k]]\!]$. 
Assume that (ii) holds for $s$. With (ii) for $u=\varepsilon$, we get $M_q(s)\Psi^{\rho}_q = \zeta\theta\Psi^{\rho}_q = \expand_q(\zeta)\theta' = M'_q(s)$ because $\rhs_{M'}(q,\sigma)=\expand_q(\zeta_q)$ by the construction of Definition~\ref{def:MTTFV2MTTC}. 

Next, with the property that (ii) implies (i), we prove that (ii) holds for all $s\in T_\Sigma$ 
by induction on the structure of $s$. 
They imply that (i) also holds for all $s$. 

\paragraph*{Base case of (ii)} 
Let $s=\sigma\in \Sigma^{(0)}$ and $k=0$. Let $q\in Q^{(m)}$,  
$\zeta=\rhs_M(q,\sigma)$, and let $u\in V(\zeta)$ and $\xi=\zeta/u$. 
Note that $\zeta \in T_{\Delta}(Y_m)$ and thus $\xi\in T_{\Delta}(Y_m)$. 
By the definition of $\expand_q$ and $\xi\in T_{\Delta}(Y_m)$, 
we get $\xi\Psi^\rho_q = \expand_q(\xi)$. 
Since $\theta$ and $\theta'$ are empty substitutions when $k=0$, 
$\xi\theta\Psi^\rho_q=\expand_q(\xi)\theta'$.

\paragraph*{Induction step of (ii)}
Let $s=\sigma(s_1,\ldots, s_k)\in T_\Sigma$ where $\sigma\in \Sigma^{(k)}$, $k>0$, and 
$s_1,\ldots, s_k\in T_\Sigma$. 
Let $q\in Q^{(m)}$.  
Let $\zeta_q=\rhs_M(q,\sigma)$, $\theta=[\![\langle r, x_i\rangle \leftarrow M_r(s/i) \mid r\in Q, i\in [k]]\!]$, and $\theta'=[\![\langle r, x_i\rangle \leftarrow M'_r(s/i) \mid r\in Q', i\in [k]]\!]$.
Now we prove that $\xi\theta\Psi^{\rho}_q=\expand_q(\xi)\theta'$ holds 
for every subtree $\xi$ of $\zeta_q$ by induction on the structure of $\xi$, which implies that (ii) holds for $s$. Let us denote by (IHs) the induction hypothesis of the outer induction on $s$, and 
by (IH$\xi$) that of the inner induction on $\xi$. 
\begin{description}
\item[Case 1.] $\xi=y_j$. It is trivial from the fact that $\Psi^\rho_p(y_j)=y_{\rho(q,j)}=\expand_q(y_j)$. 

\item[Case 2.] $\xi=\delta(\xi_1,\ldots,\xi_l)$. It holds that $\xi\theta\Psi^\rho_q=\delta(\xi_1\theta\Psi^\rho_q, \ldots, \xi_l\theta\Psi^\rho_q)$ and 
$\expand_q(\xi)\theta'=\delta(\expand_q(\xi_1)\theta', \ldots, \expand_q(\xi_l)\theta')$. 
It follows by (IH$\xi$) that $\xi_i\theta\Psi^{\rho}_q = \expand_q(\xi_i)\theta'$ 
for every $i\in [l]$. 

\item[Case 3.] $\xi=\<q', x_i>$ with $q'\in Q^{(0)}$. 
By applying (IHs) to $s/i$ and using the fact that (ii) implies (i), 
$M_{q'}(s/i)\Psi^\rho_q=M'_{q'}(s/i)$ holds.
Thus, we get $\xi\theta\Psi^\rho_q=M_{q'}(s/i)\Psi^\rho_q=M'_{q'}(s/i)=\<q',x_i>\theta'=\expand_q(\xi)\theta'$. 

\item[Case 4.] $\xi=\langle q', x_i\rangle(\xi_1,\ldots,\xi_{m'})$ with $q'\in Q^{(m')}$ and $m'>0$. We get (a) $M_{q'}(s/i)\Psi^\rho_q=M'_{q'}(s/i)$ by (IHs) and (ii)$\Longrightarrow$(i). 
Recall that (b) $\expand_p(\xi)=\<q',x_i>(\xi'_1,\ldots,\xi'_{\tilde{m}})$ where for $j\in [\tilde{m}]$, 
$\xi'_j=\expand_q(\xi_l)$ if $\rho(q, l)=j$, and $\xi'_j=\bot$ otherwise. 
In addition, it follows from (a) that (c) 
the parameters appearing in $M'_{q'}(s/i)$ are in $\{y_j \mid j=\rho(q', l), l\in [m']\}$. 
Hence, 
\begin{align*}
\xi\theta\Psi^\rho_q
&= M_{q'}(s/i)[y_l\leftarrow \xi_l\theta \mid l\in [m']]\Psi^\rho_q\\
&= M_{q'}(s/i)[y_l\leftarrow \xi_l\theta\Psi^\rho_q \mid l\in [m']] \\
&= M_{q'}(s/i)[y_l \leftarrow \expand_p(\xi_l)\theta' \mid l\in [m']]& \text{(by (IH$\xi$))}\\
&= M_{q'}(s/i)\Psi^\rho_q[y_j\leftarrow \expand_p(\xi_l)\theta' \mid j=\rho(q',l), l\in [m']]\\
&= M'_{q'}(s/i)[y_j\leftarrow \expand_p(\xi_l)\theta' \mid j=\rho(q',l), l\in [m']] & \text{(by (a))}\\
&= M'_{q'}(s/i)[y_j\leftarrow \xi'_j\theta' \mid j\in [\tilde{m}]] & \text{(by (b),(c))}\\
&= \<q',x_i>(\xi'_1,\ldots,\xi'_{\tilde{m}})\theta'=\expand_q(\xi)\theta'.
\end{align*}
\end{description}
\QED
\end{proof}

The resulting consistent MTT $\mathcal{E}(M)$ has additional properties. 
We say that the $j$th parameter of state $q$ in $M$ is \emph{permanent} 
if for every $q'\in Q$, $\sigma\in \Sigma$, $v\in V(\zeta)$ where $\zeta=\rhs_M(q', \sigma)$, 
if $\zeta[v]=\<q,x_i>$ for some $x_i\in X$ then $vj$ is important in $\zeta$. 

\begin{lemma}
\label{lem:FV2C_PROP}
Suppose that the nondeleting MTT $M$ has the FV property with $\rho$. 
Let $M'=\mathcal{E}(M)$. Then $M'$ has the following properties. 
\begin{itemize}
    \item[(i)] In the right-hand side $\zeta$ of any rule of $M'$, 
    for every $v\in V(\zeta)$, 
    if $v=v'j$ with some $v'\in V(\zeta)$ and $j\in [\tilde{m}]$ such that 
    $\zeta[v]=\<q', x_i>\in \<Q_e,X>$ and $\rho(q', l)\neq j$ for all $l\in [\rank_Q(q')]$, then $v$ is not important in $\zeta$ and $\zeta[v]=\bot$; otherwise, $v$ is important in $\zeta$.    
    \item[(ii)] The $j$th parameter of $q$ is permanent in $M'$ if $y_j\in Y$ appears in $\rhs_{M'}(q, \sigma)$ for some $\sigma\in \Sigma$. 
\end{itemize}
\end{lemma}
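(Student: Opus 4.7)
My plan is to prove part (i) first; part (ii) then follows almost immediately. For part (i) I split into the two cases in the statement (interpreting the inner ``$\zeta[v]$'' in the hypothesis as ``$\zeta[v']$''): the ``if'' case where $v$ is a padding position and the ``otherwise'' case where it is not. In the padding case, $\zeta[v]=\bot$ can be read off directly from Definition~\ref{def:MTTFV2MTTC}: whenever $v'$ carries a state-call label $\<q',x_i>$ and the argument index $j$ misses $\{\rho(q',l)\mid l\in[\rank_Q(q')]\}$, the construction of $\expand_{q}$ fills position $v=v'j$ literally with $\bot$. To see that such a $v$ is not important, I replace the subtree at $v$ by $*$ and apply the important substitution: the $*$ becomes the value bound to the parameter $y_j$ of state $q'$ in $M'_{q'}(s_i)$, so it suffices to show that $y_j$ does not occur in $M'_{q'}(s)$ for any $s$. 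By the construction (and the nondeletingness of $M$), the only parameters occurring in any $\rhs_{M'}(q',\sigma'')$ have indices in $\{\rho(q',l)\mid l\in[\rank_Q(q')]\}$, and a routine structural induction on $s$ propagates this invariant from the right-hand sides to $M'_{q'}(s)$.

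In the non-padding case, the key observation is that every $\bot$ introduced by $\expand_q$ is a leaf, so if $v$ is not itself a padding position then none of its proper ancestors is either. Consequently, along the path from $\varepsilon$ down to $v$, every transition from a state-call node $\<q',x_i>$ into its $l$th child lands at an index $l\in\{\rho(q',l')\mid l'\in[\rank_Q(q')]\}$. I will then establish the auxiliary claim that for any state $q'$, any $l'\in[\rank_Q(q')]$, and any $s\in T_\Sigma$, the parameter $y_{\rho(q',l')}$ actually occurs in $M'_{q'}(s)$. This combines the nondeletingness of $M$ (so $y_{l'}$ appears in $\rhs_M(q',\sigma'')$, and hence $y_{\rho(q',l')}$ appears in $\rhs_{M'}(q',\sigma'')$) with an induction on $s$: whenever such an occurrence sits inside a state call in the right-hand side, the construction of $\expand_{q'}$ places it at a valid argument index of the form $\rho(r,l'')$ with $l''\in[\rank_Q(r)]$, on which the induction hypothesis can be invoked. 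Granted the claim, I pick any $s_1,\dots,s_k\in T_\Sigma$, substitute $*$ at $v$, and trace the important substitution outward: at every state-call ancestor of $v$ the relevant parameter survives in the evaluated state call and carries $*$ with it, so $*$ appears in the final tree and $v$ is important.

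Part (ii) then follows directly. If $y_j$ occurs in $\rhs_{M'}(q,\sigma)$ then by construction of $\expand_q$ we must have $j=\rho(q,l)$ for some $l\in[\rank_Q(q)]$; hence, for every occurrence of a state call $\<q,x_i>$ at any position $w$ in any right-hand side of $M'$, the child position $wj$ falls into the ``otherwise'' clause of part (i) and is therefore important, which is exactly permanence of the $j$th parameter of $q$ in $M'$. The main obstacle is the auxiliary claim in the non-padding case: while intuitively clear from the nondeletingness of $M$, its proof requires a careful induction that tracks, across the important substitution, how parameter occurrences inside state-call arguments are relocated by $\expand_q$ to argument positions prescribed by $\rho$; everything else is routine bookkeeping on the construction of Definition~\ref{def:MTTFV2MTTC}.
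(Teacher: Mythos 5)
Your proof is correct and follows essentially the same route as the paper's: both arguments reduce the importance of a position $v'j$ below a state call $\langle q',x_i\rangle$ to whether $y_j$ occurs in $M'_{q'}(s)$, which holds exactly when $j\in\{\rho(q',l)\mid l\in[\rank_Q(q')]\}$ by the nondeletingness of $M$ together with the definition of $\expand_q$ in Definition~\ref{def:MTTFV2MTTC}, and part (ii) is derived from part (i) identically. The only cosmetic differences are that the paper organizes the importance argument as an induction on the distance from the root and cites statement (i) from the proof of Lemma~\ref{lem:equivalence_MTTFV2MTTC} for the parameter-occurrence facts, whereas you trace the path from $v$ outward and re-derive those facts by a direct induction on $s$; your reading of the statement's ``$\zeta[v]=\langle q',x_i\rangle$'' as referring to $\zeta[v']$ is also the intended one.
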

\begin{proof}
Let $\zeta=\rhs_{M'}(q, \sigma)$.  
We can prove Property~(i) by induction on the distance from the root in $\zeta$. 
The root is trivially important in $\zeta$. 
Suppose $v=v'j\in V(\zeta)-\{\varepsilon\}$. 
Since the rank of $\zeta[v']$ is greater than zero and thus $\zeta[v']\neq \bot$, 
$v'$ is important in $\zeta$ by the induction hypothesis.  
If $\zeta[v']\in \Delta\cup Y$, $v$ is also important in $\zeta$ 
trivially from the importance of $v'$. Suppose $\zeta[v']=\<q', x_i>$. 
Let $m'$ be the rank of $q'$ in $M$. 
It follows from Statement (i) in the proof of Lemma~\ref{lem:equivalence_MTTFV2MTTC} that 
$\rho(q', l)= j$ for some $l\in [m']$ if and only if $y_j$ appears in $M'_{q'}(s)$ for all $s\in T_\Sigma$. Thus, $v$ is important in $\zeta$ if and only if $\rho(q', l)= j$ 
for some $l\in [m']$. Recall that $\rhs_{M'}(q, \sigma)=\expand_q(\rhs_M(q, \sigma))$ from Definition~\ref{def:MTTFV2MTTC}. By the definition of $\expand_q$, if $\rho(q', l)\neq j$ 
for all $l\in [m']$, we get $\zeta[v]=\bot$. 

Assume that $y_j$ occurs in $\rhs_{M'}(q,\sigma)$ for some $\sigma\in \Sigma$. 
Since $\rhs_{M'}(q, \sigma)=\expand_q(\rhs_M(q, \sigma))$, by the definition of $\expand_q$, 
$y_l$ occurs in $\rhs_M(q,\sigma)$ and $\rho(q, l)=j$ for some $l\in [m]$ where $m$ is the rank of $q$ in $M$.  
Let $q'\in Q$, $\sigma'\in \Sigma, v\in V(\zeta)$ where $\zeta=\rhs_{M'}(q',\sigma)$. 
Suppose that $\zeta[v]=\<q, x_i>$ for some $x_i\in X$. 
Since $\rho(q, l)=j$, it follows from Property~(i) that $vj$ is important in $\zeta$. 
Hence, the $j$th parameter of $q$ is permanent in $M'$. 
\QED
\end{proof}

\begin{lemma}
\label{lem:FV2C_consistent}
Suppose that the nondeleting MTT $M$ has the FV property with $\rho$. Then $\mathcal{E}(M)$ is consistent.
\end{lemma}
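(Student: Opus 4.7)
The plan is to reduce the consistency of $M' := \mathcal{E}(M)$ to the FV property of $M$ via a single commutation identity. Fix $\sigma \in \Sigma^{(k)}$, $q_1, q_2 \in Q_e$, write $\zeta'_i = \rhs_{M'}(q_i, \sigma) = \expand_{q_i}(\rhs_M(q_i, \sigma))$, and pick nodes $w_i \in V(\zeta'_i)$ with $\zeta'_i[w_i] = \langle p_i, x_l\rangle$ for a common $l \in [k]$. If either $p_i$ has rank zero in $M'$ the consistency clause is vacuous, so suppose both have rank $\tilde{m}$ in $M'$. Fix $j \in [\tilde{m}]$ for which $w_1 j$ and $w_2 j$ are both important; the goal is to establish $\Top(\zeta'_1/w_1 j) = \Top(\zeta'_2/w_2 j)$.

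First I would pull everything back through $\expand$. By Lemma~\ref{lem:FV2C_PROP}(i), importance of $w_i j$ furnishes an index $l_i \in [\rank_Q(p_i)]$ with $\rho(p_i, l_i) = j$; and because $\expand_{q_i}$ preserves positions of state calls and output symbols, the node $w_i$ corresponds to a node $\bar w_i$ in $\rhs_M(q_i, \sigma)$ carrying the same label $\langle p_i, x_l\rangle$, with $\zeta'_i / w_i j = \expand_{q_i}(\xi_i / l_i)$ where $\xi_i := \rhs_M(q_i, \sigma)/\bar w_i$. Applying the FV property to the pair $(\xi_1, \xi_2)$ --- which share the input variable $x_l$ at the root and satisfy $\rho(p_1, l_1) = j = \rho(p_2, l_2)$ --- yields the syntactic equality
\[
(\xi_1/l_1)\,\Psi^\rho_{q_1} \;=\; (\xi_2/l_2)\,\Psi^\rho_{q_2}.
\]

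The key lemma linking this equality to what consistency demands is the identity $\Top(\expand_q(\zeta)) = \Top(\zeta\,\Psi^\rho_q)$ for every subtree $\zeta$ of a right-hand side of $M$, proved by a short induction on $\zeta$: at a state call $\Top$ already collapses the entire subtree to $q'(\pi i)$, so the $\bot$-padding introduced by $\expand_q$ is invisible; at an output symbol $\delta$ both operations distribute over the children; at a parameter $y_j$ both sides reduce to $y_{\rho(q, j)}(\pi)$. Composing this identity on both sides of the FV equality gives
\[
\Top(\zeta'_1/w_1 j) \;=\; \Top((\xi_1/l_1)\,\Psi^\rho_{q_1}) \;=\; \Top((\xi_2/l_2)\,\Psi^\rho_{q_2}) \;=\; \Top(\zeta'_2/w_2 j),
\]
which is exactly the consistency condition of Definition~\ref{def:consistent}.

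The main step to get right is the identity $\Top \circ \expand_q = \Top \circ (\cdot\,\Psi^\rho_q)$: it is precisely the bridge between \emph{syntactic equality modulo parameter renaming} (what the FV property supplies) and \emph{equality of top skeletons} (what consistency requires). Everything else is bookkeeping, with the only subtleties being that the rank-zero case must be singled out before introducing the indices $l_i$, and that the per-state injectivity of $\rho$ in its parameter coordinate is what makes those indices unambiguous once $j$ is fixed.
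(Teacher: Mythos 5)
Your proposal is correct and follows essentially the same route as the paper: pull the nodes $w_i j$ back through $\expand_{q_i}$ using Lemma~\ref{lem:FV2C_PROP}(i) to obtain indices $l_i$ with $\rho(p_i,l_i)=j$, apply the FV property to get $(\xi_1/l_1)\Psi^\rho_{q_1}=(\xi_2/l_2)\Psi^\rho_{q_2}$, and transfer this to the consistency condition. The only (harmless) difference is the last step: the paper observes that $\expand_{q}(\cdot)$ is determined by the $\Psi^\rho_{q}$-renamed tree and concludes the stronger syntactic equality $\zeta_1/w_1j=\zeta_2/w_2j$, whereas you prove only the weaker identity $\Top\circ\expand_q=\Top\circ\Psi^\rho_q$, which still suffices.
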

\begin{proof}
Let $M'=\mathcal{E}(M)=(Q_e,\Sigma,\Delta,q_0,R_e)$. 
Let $\sigma\in \Sigma^{(k)}$ with $k\geq 0$. Let $q_1,q_2\in Q_e$.  
Let $p_1^{(m_1)}, p_2^{(m_2)}\in Q_e$ and $l\in [k]$. 
For $i=1,2$, let $\zeta_i=\rhs_{M'}(q_i, \sigma)$, $w_i\in V(\zeta_i)$ such that 
$\zeta_i[w_i]=\<p_i, x_l>$. 
Let $j\in [\min\{m_1,m_2\}]$ such that $w_1j$ and $w_2j$ are important in $\zeta_1$ and $\zeta_2$, 
respectively. 
Note that $k>0$ and $m_1=m_2=\tilde{m}$ because of the existence of $w_i$ and $j$.  
From Lemma~\ref{lem:FV2C_PROP}(i), we get 
$\rho(p_1, l_1)=\rho(p_2, l_2)=j$ for some $l_1, l_2\in [\tilde{m}]$. 
Since $\zeta_i = \expand_{q_i}(\zeta_{q_i})$ where $\zeta_{q_i}=\rhs_M(q_i, \sigma)$, from the definition of $\expand_{q}$, 
there exists node $w_i'\in V(\zeta_{q_i})$ such that  $\zeta_{q_i}[w_i']=\<p_i, x_l>$ and $\expand_{q_i}(\zeta_{q_i}/w_i'l_i)=\zeta_i/w_ij$. 
By the FV property of $M$, $(\zeta_{q_1}/w_1'l_1)\Psi^\rho_{q_1}=(\zeta_{q_2}/w_2'l_2)\Psi^\rho_{q_2}$.
Thus, we get $\expand_{q_1}(\zeta_{q_1}/w_1'l_1)=\expand_{q_2}(\zeta_{q_2}/w_2'l_2)$, and thus $\zeta_1/w_1j=\zeta_2/w_2j$. This implies $\Top(\zeta_1/w_1j)=\Top(\zeta_2/w_2j)$. 
\QED
\end{proof}

We shall show that $\mathcal{E}(M)$ satisfies FV-orig if the nondeleting $M$ has the FV property. That is, the resulting ATT obtained 
from $\mathcal{E}(M)$ by the construction $\Omega$ given in [13] is circular. 
Note that this does not hold for consistent MTTs in general. 
For the proof, we give a simplified version of the same construction here
because the style of the definition of MTTs is different with that of [13] and 
we focus on the consistent MTTs with the properties in Lemma~\ref{lem:FV2C_PROP} as inputs. The difference with the original $\Omega$ is just that 
important/non-important nodes in the right-hand side of rules are explicit 
in $\mathcal{E}(M)$, and thus we define $\Omega$ without the recursive functions \textit{SUB} and \textit{DECOMPOSE} in the original $\Omega$. It is straightforward that the following definition of $\Omega$ is equivalent with the original one for $\mathcal{E}(M)$. 

\begin{definition}
\rm
\label{def:MTTC2ATT}
Suppose that the nondeleting $\MTT$ $M$ satisfies the FV property with $\rho$. 
Let $M'=\mathcal{E}(M)=(Q, \Sigma, \Delta, q_0, R)$. 
Let $\tilde{m}$ be the maximum rank of states in $Q$. 
The \emph{attributed tree transducer associated with $M'$}, denoted by $\Omega(M')$, is 
the $\ATT$ $(S, I, \Sigma, \Delta, q_0, R')$ defined as follows:
$S=Q$,
$I=[y_i \mid i\in [\tilde{m}]]$, and
$R'=\bigcup_{\sigma\in \Sigma} R_{\sigma}$.  
For every $\sigma \in \Sigma^{(k)}$ with $k\geq 0$, 
$R_{\sigma}$ is constructed as follows:
\begin{enumerate}
 \item For $q\in Q$, let $\zeta=\rhs_{M'}(q,\sigma)$, 
 \begin{itemize}
  \item  let the rule $q(\pi) \to \Top(\zeta)$ be in $R_{\sigma}$, and
  \item  for every $v\in V(\zeta)$ such that $\zeta[v]=\langle q', x_i\rangle$ where $q'\in Q^{(\tilde{m})}$ and $i\in [k]$, 
            let the rule $y_j(\pi i) \to \Top(\zeta/vj)$ be in $R_{\sigma}$ for each $j\in [\tilde{m}]$ such that $vj$ is important in $\zeta_q$. 
 \end{itemize}

 \item If there is no rule with $\beta(\pi i)$ in the left-hand side for $\beta\in I$ and $i\in [k]$, 
         let the dummy rule $\beta(\pi i) \to \bot$ be in $R_{\sigma}$ where $\bot$ is any label in $\Delta^{(0)}$. 
\end{enumerate}
\end{definition}

Now we show that $\Omega(\mathcal{E}(M))$ is non-circular.  
For this, we give the following technical lemma (Lemma~\ref{lem:same-state-self-nested}) which deduces from the existence of certain paths in the dependency graph of $\Omega(\mathcal{E}(M))$
certain properties about occurrences of state calls in the right-hand sides of $\mathcal{E}(M)$, which the proof of the non-circularity of $\Omega(\mathcal{E}(M))$ (Lemma~\ref{lem:non-circularity-FV}) is based on. 

\begin{lemma}\label{lem:same-state-self-nested}
\rm
Suppose that $M$ is the consistent MTT obtained by $\mathcal{E}$ 
from a nondeleting MTT with the FV property.  
Let $\Omega(M)=(S, I, \Sigma, \Delta, q_0, R')$. 
Let $s$ be an arbitrary tree in $T_\Sigma$. 
\begin{romanenumerate}
\item
For every $q\in S$, and $y_\ell\in I$, 
if there exists a path from $(y_\ell,\varepsilon)$ to $(q,\varepsilon)$ in $D_{\Omega(M)}(s)$,
then the $\ell$th parameter of $q$ is permanent. 
\item
For every $q,q'\in S$ and $i, i'\in [\rank_\Sigma(s[\varepsilon])]$, 
if there exists a nonempty path from $(q,i)$ to $(q',i')$ in $D_{\Omega(M)}(s)$ 
such that 
$\<q', x_{i'}>=\zeta[v']$ for some $\tilde{q}\in Q$ and $v'\in V(\zeta)$ where $\zeta=\rhs_M(\tilde{q}, s[\varepsilon])$,  
then there exists a proper descendant $v$ of $v'$ in $\zeta$ such that $\zeta[v]=\<q, x_{i}>$. 
\end{romanenumerate}
\end{lemma}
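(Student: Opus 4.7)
The plan is to prove both parts by induction on the path length $n$, relying on two key facts about the MTT $M$ (which, by hypothesis, equals $\mathcal{E}(M_0)$ for some nondeleting MTT $M_0$ with the FV property): the \emph{strong consistency} derived in the proof of Lemma~\ref{lem:FV2C_consistent} (any two state calls on the same variable $x_i$ appearing in right-hand sides of $M$ for a common symbol $\sigma$ have identical $j$th argument subtrees whenever the $j$th position is important for both), and the \emph{permanence criterion} of Lemma~\ref{lem:FV2C_PROP}(ii) ($y_j \in \rhs_M(q, \sigma)$ for some $\sigma$ implies that the $j$th parameter of $q$ is permanent in $M$). Preliminary to the induction, I would catalogue the edges of $D_{\Omega(M)}(s)$: by inspecting Definition~\ref{def:MTTC2ATT}, at a node $u$ with $s[u] = \sigma$ every edge has one of four shapes, together with its witnessing syntactic condition: (a) $(y_{l'}, u) \to (q', u)$ with $y_{l'} \in \rhs_M(q', \sigma)$; (b) $(q'', uj) \to (q', u)$ with $\langle q'', x_j\rangle$ occurring at an important position in $\rhs_M(q', \sigma)$; (c) $(y_{l'}, u) \to (y_l, ui)$ with $y_{l'} \in \zeta^*/v^* l$ for some occurrence $\zeta^*[v^*] = \langle p^*, x_i\rangle$ in a right-hand side of $M$ for $\sigma$ with $v^* l$ important; and (d) $(q'', uj) \to (y_l, ui)$ with $\langle q'', x_j\rangle \in \zeta^*/v^* l$ in such a subtree.

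For part (ii), a direct edge $(q, i) \to (q', i')$ is impossible since no rule of $\Omega(M)$ has a synthesized attribute at a non-root position on its left-hand side, so $n \geq 2$. In the base case $n = 2$, outgoing-edge analysis at $(q, i)$ and incoming-edge analysis at $(q', i')$ force $v_1 = (y_j, i')$ for some $j$; the first edge is of type~(d), giving $\langle q, x_i\rangle \in \zeta^*/v^* j$ for a call at $x_{i'}$, while the second edge is of type~(a), giving $y_j \in \rhs_M(q', s[i'])$. The permanence criterion then ensures $v' j$ is important in $\zeta$, and strong consistency identifies $\zeta^*/v^* j = \zeta/v' j$, placing $\langle q, x_i\rangle$ at a proper descendant of $v'$ in $\zeta$. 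For $n \geq 3$, I would strengthen the inductive statement to allow paths from $(q, i)$ to an arbitrary ``root-level'' endpoint $w$ (synthesized or inherited at the root or at a root-child), with the conclusion that $\langle q, x_i\rangle$ appears as a proper descendant of the $\zeta$-position associated with $w$; the inductive step then splits the path at its last edge, distinguishes the two possible shapes of the last-edge source (inherited at the target child, or synthesized at a grandchild), applies the induction hypothesis to a strictly shorter prefix, and uses one application of strong consistency to transfer the witness occurrence into $\zeta$.

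For part (i), the base case $n = 1$ is an edge of type~(a), which directly gives $y_\ell \in \rhs_M(q, \sigma)$ and hence permanence via Lemma~\ref{lem:FV2C_PROP}(ii). For $n \geq 2$, I would analyse the last edge entering $(q, \varepsilon)$: a type-(a) source $(y_{l'}, \varepsilon)$ leaves a strictly shorter path from $(y_\ell, \varepsilon)$ to $(y_{l'}, \varepsilon)$ to which the induction can be applied on a simpler target, while a type-(b) source $(q'', j)$ supplies an important occurrence $\langle q'', x_j\rangle$ in $\rhs_M(q, \sigma)$ and leaves a shorter path from $(y_\ell, \varepsilon)$ to $(q'', j)$---a situation handled by a variant of the generalized part~(ii) induction, with $y_\ell$ playing the structural role previously played by a state call at its first type-(c) step. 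Either way the chain of intermediate edges and strong-consistency transfers ultimately places $y_\ell$ inside $\rhs_M(q, \sigma)$, whence the permanence criterion finishes. The main obstacle will be the bookkeeping for these transfers: each invocation of strong consistency requires importance of the target argument position, and these importance conditions are furnished precisely by the type-(a) edges along the path via the permanence criterion; formulating the generalized inductive statement so that it uniformly handles all possible root-level endpoints and both starting-vertex flavours ($(q, i)$ for part~(ii) and $(y_\ell, \varepsilon)$ for part~(i)) is the technically delicate part.
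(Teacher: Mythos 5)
Your local machinery is sound and matches the paper's: the edge classification (a)--(d) is correct; your base case of part~(ii) (a length-two path through $(y_j,i')$, combining the permanence criterion of Lemma~\ref{lem:FV2C_PROP}(ii) with consistency to transfer the occurrence of $\langle q,x_i\rangle$ from $\zeta^*/v^*j$ into $\zeta/v'j$) is exactly the base case of the paper's inner induction; and your observation that $\mathcal{E}(M)$ satisfies the stronger ``identical argument subtrees'' form of consistency is correct (it is what the proof of Lemma~\ref{lem:FV2C_consistent} actually establishes), although only the $\Top$-equality is needed.

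The gap is in the global induction scheme. You induct on path length with a generalized statement whose endpoints are confined to ``root level'' (the root and its children). But in $D_{\Omega(M)}(s)$ the only incoming edges of a synthesized vertex $(q',i')$ at a root-child come from \emph{inside} the subtree $s_{i'}$ (from an inherited vertex $(y_l,i')$ or from a synthesized vertex at a grandchild $i'j$), and between two consecutive root-level edges the path makes an excursion of unbounded depth into some subtree $s_{i_c}$, entering at an inherited vertex $(y_{\ell_c},i_c)$ and re-emerging at a synthesized vertex $(q_c,i_c)$. Consequently the prefix obtained by deleting the last edge can end at a vertex that is not root-level, so your stated induction hypothesis does not apply to it; and each excursion is precisely an instance of part~(i) \emph{for the subtree} $s_{i_c}$, which your proposal formulates and proves only at the root of $s$. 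This is not mere bookkeeping: the permanence facts that license each application of consistency in part~(ii) are exactly what part~(i) for the subtrees delivers. The paper closes this loop with a mutual structural induction on $s$: assuming~(i) for $s_1,\dots,s_k$, it proves~(ii) for $s$ by an inner induction on the number of root-level edges (treating every subtree excursion as a black box via~(i)), and then proves~(i) for $s$ using~(ii) for $s$ together with~(i) for the subtrees. To repair your argument you would have to generalize both statements to arbitrary nodes $u$ of $s$ and induct on a measure that decreases into subtrees --- at which point you have essentially reconstructed the paper's structural induction.
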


\begin{proof}
We prove that (i) and (ii) hold for every $s\in T_\Sigma$ by induction on the structure of $s$ 
in the following way. 
\begin{itemize}
\item Base case. Let $s\in \Sigma^{(0)}$. Since (ii) follows from $\rank_\Sigma(s[\varepsilon])=0$,  
we prove only that (i) holds for $s$ below. 

\item Induction step. Let $s=\sigma(s_1,\ldots,s_k)$ be an arbitrary tree in $T_\Sigma$ such that $k>0$, $\sigma\in \Sigma^{(k)}$, and $s_1,\ldots,s_k\in T_\Sigma$. Under the induction hypothesis of (i), denoted by (IHi) henceforth, for the subtrees of $s$, we first prove that (ii) holds for $s$, 
and then by using (ii) for $s$ as well, we prove that (i) holds for $s$. 
\end{itemize}

\paragraph*{Base case of (i)}
Let $s=\sigma\in \Sigma^{(0)}$. 
Let $q\in S$ and $y_\ell\in I$. 
Assume that there exists a path from $(y_\ell,\varepsilon)$ to $(q,\varepsilon)$ in $D_{\Omega(M)}(s)$. 
Since $\sigma\in \Sigma^{(0)}$, the path consists of only a direct edge from $(y_\ell,\varepsilon)$ to $(q,\varepsilon)$. 
The edge originates from a rule in $R'_{\sigma}$ such that $q(\pi)$ and $y_\ell(\pi)$ occur in the lhs and rhs, respectively. 
From the construction $\Omega(M)$ in Definition~\ref{def:MTTC2ATT}, 
the rhs is $\Top(\rhs_M(q, \sigma))$. 
Since $y_\ell(\pi)$ occurs in the rhs, by the definition of $\Top$, 
$y_\ell$ occurs in $\rhs_M(q, \sigma)$. 
From Lemma~\ref{lem:FV2C_PROP}(ii), the $\ell$th parameter of $q$ is permanent. 

\paragraph*{Induction step of (ii)} 
Let $s=\sigma(s_1,\ldots,s_k)\in T_\Sigma$ with $k>0$ and $\sigma\in \Sigma^{(k)}$. 
Let $q, q'\in S$ and $i, i'\in [k]$. 
Hereafter, we abbreviate $\rhs_M(p, \sigma)$ as $\zeta_p$ for $p\in Q$. 
Assume that there exists a nonempty path $w$ from $(q,i)$ to $(q',i')$ in $D_{\Omega(M)}(s)$, and that there exists $\tilde{q}\in Q$ and $v'\in V(\zeta_{\tilde{q}})$ such that $\<q', x_{i'}>=\zeta_{\tilde{q}}[v']$. 

We can regard the path $w$ as a sequence of nodes of $D_{\Omega(M)}(\sigma)$ such that 
every two consecutive nodes are connected by an edge of  $D_{\Omega(M)}(\sigma)$, or 
a path of a subgraph $D_{\Omega(M)}(s_i)$ for some $i\in [k]$. 
Let $\leftarrow_\varepsilon$ and $\leftarrow_i$ denote connections by an edge of $D_{\Omega(M)}(\sigma)$ and 
a path of a subgraph $D_{\Omega(M)}(s)/i$ for $i\in [k]$, respectively. 
Then, let $q_1=q'$, $i_1=i'$, $q_n=q$, and $i_n=i$, and we can write the path $w$ as follows: 
\begin{multline*}
(q', i')=(q_1,i_1)\leftarrow_{i_1} (y_{\ell_1},i_1)  \leftarrow_\varepsilon (q_2,i_2) \\\leftarrow_{i_2} \cdots \leftarrow_\varepsilon (q_{n-1},i_{n-1})\leftarrow_{i_{n-1}} (y_{\ell_{n-1}},i_{n-1})  \leftarrow_\varepsilon (q_n,i_n)=(q,i)
\end{multline*}
for some $n>1$ where $i_j\in [k]$ and $q_j \in S$ for all $j\in [n]$, and $y_{\ell_j} \in I$ for all $j\in [n-1]$. For such $w$, now we prove that there exists a proper descendant $v$ of $v'$ in $\zeta_{\tilde{q}}$ such that $\zeta_{\tilde{q}}[v]=\<q, x_i>$.
To achieve this, 
we show that 
for every $c$ where $2\leq c \leq n$,  
there exists a proper descendant $v$ of $v'$ in $\zeta_{\tilde{q}}$ such that $\zeta_{\tilde{q}}[v]=\<q_c, x_{i_c}>$. 
It can be shown by induction on $c$. 
Let us denote the induction hypothesis of the inner induction on $c$ by (IHc). 

(Base case) 
Let $c=2$. We have $(q_1,i_1)\leftarrow_{i_1} (y_{\ell_1},i_1) \leftarrow_\varepsilon (q_2,i_2)$. 
Since the edge $(q_{1},i_{1})\leftarrow_{i_{1}} (y_{\ell_{1}},i_{1})$ is in $D_{\Omega(M)}(s)/i_{1}$, 
an edge from $(y_{\ell_{1}},\varepsilon)$ to $(q_{1},\varepsilon)$ is in $D_{\Omega(M)}(s/i_{1})$. 
By (IHi) to the edge, the $\ell_1$th parameter of $q_1$ is permanent. 
Recall that $\zeta_{\tilde{q}}[v']=\<q',x_{i'}>(=\<q_1,x_{i_1}>)$. 
Thus, $v'\ell_1$ is important in $\zeta_{\tilde{q}}$. 
Moreover, the edge from $(q_2,i_2)$ to $(y_{\ell_{1}},i_{1})$ originates from a rule in $R'_{\sigma}$ 
such that $y_{\ell_{1}}(\pi i_{1})$ and $q_{2}(\pi i_2)$ occur in the lhs and rhs, respectively. 
From the construction of Definition~\ref{def:MTTC2ATT}, 
for some $\hat{q}, \check{q}\in Q$, $u\in V(\zeta_{\hat{q}})$, 
the rhs is $\Top(\zeta_{\hat{q}}/u\ell_{1})$,  
$\zeta_{\hat{q}}[u]=\<\check{q}, x_{i_{1}}>$, and 
$u\ell_1$ is important in $\zeta_{\hat{q}}$. 
Since $\zeta_{\tilde{q}}[v']=\<q_1, x_{i_1}>$,  $\zeta_{\hat{q}}[u]=\<\check{q}, x_{i_{1}}>$, 
and $v'\ell_1$ and $u\ell_1$ are important in $\zeta_{q}$ and $\zeta_{\hat{q}}$, respectively, 
we get $\Top(\zeta_{\tilde{q}}/v'\ell_{1})=\Top(\zeta_{\hat{q}}/u\ell_{1})$ by the consistency of $M$. 
 Since $\<q_2, x_{i_2}>$ occurs in $\Top(\zeta_{\hat{q}}/u\ell_1)$, it also occurs in $\Top(\zeta_{\tilde{q}}/v'\ell_1)$.
By the definition of $\Top$, there exists a proper descendant $v$ of $v'$ in $\zeta_{\tilde{q}}$ such that $\zeta_{\tilde{q}}[v]=\<q_2, x_{i_2}>$. 

(Induction step) 
By applying (IHc) to the sub-path $(q_1,i_1)\leftarrow_{i_1} \cdots \leftarrow_{\varepsilon} (q_{c-1},i_{c-1})$,  
it holds that there exists a descendant $v_{c-1}$ of $v'$ in $\zeta_{\tilde{q}}$ such that $\zeta_{\tilde{q}}[v_{c-1}]=\<q_{c-1}, x_{i_{c-1}}>$. 
From the part $(q_{c-1},i_{c-1})\leftarrow_{i_{c-1}} (y_{\ell_{c-1}},i_{c-1}) \leftarrow_\varepsilon (q_c,i_c)$, with a similar discussion as (Base case),  
we can show that there exists a proper descendant $v_c$ of $v_{c-1}$ in $\zeta_{\tilde{q}}$ such that $\zeta_{\tilde{q}}[v_c]=\<q_c, x_{i_c}>$. 
Therefore, the node $v_c$ is a proper descendant of $v'$ such that $\zeta_{\tilde{q}}[v_c]=\<q_c, x_{i_c}>$.

\paragraph*{Induction step of (i)}
Let $s=\sigma(s_1,\ldots,s_k)\in T_\Sigma$ with $k>0$ and $\sigma\in \Sigma^{(k)}$. 
Let $q\in S$ and $\ell\in I$. 
Assume that there exists a path $w$ from $(y_\ell,\varepsilon)$ to $(q,\varepsilon)$ in $D_{\Omega(M)}(s)$. 
Hereafter, we abbreviate $\rhs_M(p, \sigma)$ as $\zeta_{p}$ for $p\in Q$. 

For the case that $w$ consists of only a direct edge from $(y_\ell,\varepsilon)$ to $(q,\varepsilon)$, 
we can complete this case with the same discussion as the base case of the proof of Lemma~\ref{lem:same-state-self-nested}.

For the other case, we can assume that the path $w$ goes from $(y_\ell,\varepsilon)$ to $(q,\varepsilon)$ via at least one synthesized attribute node. 
As in the proof of Induction step of (ii), we can regard the path $w$ as 
$(q,\varepsilon)\leftarrow_\varepsilon (q_1,i_1)\leftarrow^* (q_n,i_n)\leftarrow_{i_n} (y_{\ell_n},i_n) \leftarrow_\varepsilon (y_\ell,\varepsilon)$
where $i_1, i_n\in [k]$, $q, q_1, q_n \in S$, $y_\ell, y_{\ell_n}\in I$, and $\leftarrow^*$ is 
the reflexive transitive closure of $(\bigcup_{i\in [k]} \leftarrow_i)\cdot \leftarrow_\varepsilon$. 
First, we show from the sub-path $(q,\varepsilon)\leftarrow_\varepsilon (q_1,i_1)\leftarrow^* (q_n,i_n)$ that 
there exists a node $v_n$ in $\zeta_{q}$ such that $\zeta_{q}[v_n]=\<q_{n}, x_{i_{n}}>$.
The edge $(q,\varepsilon)\leftarrow_\varepsilon (q_1,i_1)$ originates from a rule such that $q(\pi)$ and $q_1(\pi i_1)$ occur in the lhs and rhs. 
From the construction of Definition~\ref{def:MTTC2ATT}, the rhs is $\Top(\zeta_q)$. 
By the definition of $\Top$ and the fact that $q_1(\pi i_1)$ occurs in $\Top(\zeta_q)$,  
$\zeta_q[v_1]=\<q_1,x_{i_1}>$ for some $v_1\in V(\zeta_q)$. 
From the sub-path $(q_1,i_1)\leftarrow^* (q_n,i_n)$ in $w$, 
we can obtain the fact that there exists a descendant $v_n$ of $v_1$ in $\zeta_{q}$ such that $\zeta_{q}[v_n]=\<q_{n}, x_{i_{n}}>$ as follows. 
If the sub-path $(q_1,i_1)\leftarrow^* (q_n,i_n)$ in $w$ is empty, we have $q_n=q_1$ and $i_n=i_1$,   
and thus $\zeta_{q}[v_1]=\<q_{n}, x_{i_{n}}>$. 
Otherwise, since (ii) holds for $s$ under (IHi) for $s_1,\ldots,s_k$ (by the induction step of (ii)), by the existence of the nonempty path from $(q_n,i_n)$ to $(q_1,i_1)$ with 
$\zeta_q[v_1]=\<q_1,x_{i_1}>$, 
there exists a proper descendant $v_n$ of $v_1$ in $\zeta_{q}$ such that $\zeta_{q}[v_n]=\<q_{n}, x_{i_{n}}>$. 
Next, by (IHi) to the sub-path $(q_n,i_n) \leftarrow_{i_n} (y_{\ell_n},i_n)$, the $\ell_n$th parameter of $q_n$ is permanent. Thus, $v_n\ell_n$ is important in $\zeta_q$. 
The edge $(\ell_n,i_n)\leftarrow_\varepsilon (\ell,\varepsilon)$ originates from a rule such that $\ell_n(\pi i_n)$ and 
$\ell(\pi)$ occur in the lhs and rhs, respectively. 
From the construction of Definition~\ref{def:MTTC2ATT}, 
for some $q', q''\in Q$ and $u\in V(\zeta_{q'})$, 
the rhs is $\Top(\zeta_{q'}/u\ell_n)$ and $\zeta_{q'}[u]=\<q'', x_{i_n}>$ and $u\ell_n$ is important in $\zeta_{q'}$. 
Since $\zeta_q[v_n]=\<q_n,x_{i_n}>$, $\zeta_{q'}[u]=\<q'',x_{i_n}>$, and $v_n\ell_n$ and $u\ell_n$ are important in $\zeta_q$ and $\zeta_{q'}$, we get $\Top(\zeta_q/v_n\ell_n)=\Top(\zeta_{q'}/u\ell_n)$ by the consistency of $M$. 
since $y_\ell$ appears in $\Top(\zeta_{q'}/u\ell_n)$, 
$y_\ell$ appears in $\Top(\zeta_q/v_n\ell_n)$. 
Thus, $y_\ell$ appear in $\zeta_q/v_n\ell_n$. 
From Lemma~\ref{lem:FV2C_PROP}(ii),  
the $\ell$th parameter of $q$ is permanent. 
\QED
\end{proof}

\begin{lemma}\label{lem:non-circularity-FV}
\rm
If the nondeleting MTT $M$ has the FV property then $\Omega(\mathcal{E}(M))$ is non-circular.
\end{lemma}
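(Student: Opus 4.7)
My plan is to argue by contradiction. Suppose, for some $s\in T_\Sigma$, that $D_{\Omega(\mathcal{E}(M))}(s)$ contains a cycle $C$; I will feed $C$ into Lemma~\ref{lem:same-state-self-nested}(ii) to force an infinite strictly descending chain of positions inside a single finite right-hand side, yielding a contradiction.

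First I would observe that $(q_0,\varepsilon)$ cannot lie in $C$: since $q_0$ is synthesized and of rank zero, it can appear in right-hand sides of $\Omega(\mathcal{E}(M))$ only as $q_0(\pi i)$ with $i\geq 1$, so $(q_0,\varepsilon)$ has no outgoing edge in the dependency graph. Because $(q_0,\varepsilon)$ is the only vertex located at $\varepsilon$, no vertex of $C$ sits at $\varepsilon$. Let $u^*$ be the lowest common ancestor (in $s$) of the tree nodes of $C$, and set
\[
u_{\mathrm{top}}\;=\;\begin{cases}\text{the parent of }u^* & \text{if }u^*\in C,\\ u^* & \text{otherwise.}\end{cases}
\]
In both cases $u_{\mathrm{top}}\notin C$, no sibling of $u^*$ lies in $C$, and $u_{\mathrm{top}}$ is a strict ancestor of every tree node of $C$; hence every edge of $C$ arises from a rule at $u_{\mathrm{top}}$ or at a strict descendant of $u_{\mathrm{top}}$. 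Under the natural bijection $(\gamma,u_{\mathrm{top}}w)\mapsto(\gamma,w)$, $C$ transfers to a cycle $C''$ in $D_{\Omega(\mathcal{E}(M))}(s'')$, where $s'':=s/u_{\mathrm{top}}$ and every vertex of $C''$ sits at depth at least one in $s''$.

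Next I would argue that $C''$ contains a synthesized attribute $(q_0',i)$ with $i\in[\rank_\Sigma(s''[\varepsilon])]$. In case $u^*\in C$, pick any $C$-vertex at $u^*$: if synthesized, we are done; if inherited, its incoming edge in $C$ comes from the rule at $u_{\mathrm{top}}$ and---since no sibling of $u^*$ and no vertex at $u_{\mathrm{top}}$ lies in $C$---its source must be a synthesized $(q_0',u^*)\in C$. In case $u^*\notin C$, the cycle must cross between two different sibling subtrees of $u^*$ via an edge drawn from the rule at $u^*=u_{\mathrm{top}}$, and the source of any such crossing edge is synthesized at a child of $u^*$. Either way, the desired $(q_0',i)$ lies in $C''$.

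Finally, since $(q_0',i)$ has an outgoing edge in $C''$ and $i\geq 1$, that edge must arise from the rule at $\varepsilon$ of $s''$, which forces the state call $\langle q_0',x_i\rangle$ to occur at some position $v_0$ in $\zeta:=\rhs_{\mathcal{E}(M)}(\tilde q,s''[\varepsilon])$ for some $\tilde q$. Treating $C''$ as a nonempty path from $(q_0',i)$ back to itself, Lemma~\ref{lem:same-state-self-nested}(ii) hands us a proper descendant $v_1$ of $v_0$ in $\zeta$ with $\zeta[v_1]=\langle q_0',x_i\rangle$; iterating with $v_1,v_2,\ldots$ in place of $v_0$ produces an infinite strictly descending chain of positions in the finite tree $\zeta$, which is absurd. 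The main obstacle is the geometric case analysis in the second and third paragraphs---verifying that $C$ transfers faithfully to a cycle in $D_{\Omega(\mathcal{E}(M))}(s'')$ and that $C''$ necessarily contains a synthesized vertex at depth one of $s''$; once this is in place, the descent argument is an almost mechanical iteration of Lemma~\ref{lem:same-state-self-nested}(ii).
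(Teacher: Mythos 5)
Your proof is correct, and it shares the paper's overall skeleton --- argue by contradiction, pass to a minimal subtree $s''$ containing the cycle, locate a synthesized vertex $(q_0',i)$ at depth one on the cycle, and feed the cycle into Lemma~\ref{lem:same-state-self-nested}(ii) --- but it finishes by a genuinely different mechanism. The paper applies Lemma~\ref{lem:same-state-self-nested}(ii) exactly once to obtain one proper descendant $v$ of $v'$ with $\zeta[v]=\langle q_0',x_i\rangle$, and then derives the contradiction by extracting a ``loop end'' pair of nested occurrences of the same state call and playing $\Top(\zeta/uj)\neq\Top(\zeta/u'j)$ (forced by the nesting) against $\Top(\zeta/uj)=\Top(\zeta/u'j)$ (forced by consistency, Lemma~\ref{lem:FV2C_consistent}, together with the importance/permanence facts of Lemma~\ref{lem:FV2C_PROP}). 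You instead iterate Lemma~\ref{lem:same-state-self-nested}(ii): since its hypothesis is the existence of the (unchanged) nonempty path together with an occurrence $\zeta[v']=\langle q_0',x_i\rangle$, each application supplies a fresh witness $v_{n+1}$ strictly below $v_n$ with the same label, and the finite size of $\zeta$ does the rest. This is a legitimate reading of the lemma (its proof is uniform in the witness $v'$), and it buys a cleaner ending that needs neither the consistency of $\mathcal{E}(M)$ nor the permanence properties at this point. Your setup is also more explicit than the paper's where it matters: the paper simply asserts that the minimal subtree yields a cycle through a synthesized node $(q,i)$ with $i\in[k]$, whereas you justify this via the $u^*/u_{\mathrm{top}}$ case split (cycle touching the root of the minimal subtree versus cycle spread over several child subtrees) and the syntactic fact that a right-hand-side occurrence $\gamma'(\pi j)$ with $j\geq 1$ forces $\gamma'$ to be synthesized. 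Both proofs rest equally on Lemma~\ref{lem:same-state-self-nested}(ii); yours simply extracts more from it.
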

\begin{proof}
Let $M'=\mathcal{E}(M)$. 
Let $A=\Omega(M')=(S, I, \Sigma, \Delta, q_0, R')$. 
The proof is by contradiction. Assume that $A$ is circular. 
Then there exists a tree $s\in T_\Sigma$ such that $D_{A}(s)$ has a cycle. 
Let $u\in V(s)$ be a node such that $s/u$ is a minimal subtree that includes the cycle. 
Then $s/u\in T_\Sigma-\Sigma^{(0)}$ because no cycle can be made in $D_{A}(\sigma)$ for any 
$\sigma\in \Sigma^{(0)}$. 
Let $\sigma=s[u]\in \Sigma^{(k)}$ with $k>0$. 
There is a cycle from some synthesized attribute node $(q,i)$ to itself via at least one inherited attribute node 
in $D_{A}(s/u)$ for some $q\in S$ and $i\in [k]$. 
Since there exists an outgoing edge of $(q,i)$, by the construction, 
there exist $\tilde{q}\in Q$ and $v'\in V(\zeta)$ such that $\zeta[v']=\<q, x_i>$ where 
$\zeta=\rhs_{M'}(\tilde{q},\sigma)$. 
From Lemma~\ref{lem:same-state-self-nested}, there exists a proper descendant $v$ of $v'$ in $\zeta$ such that $\zeta[v]=\<q, x_i>$. 
Then, let us consider the loop end of state calls in $\zeta$ involved by the existence of such nodes $v$ and $v'$, that is, we can choose distinct two nodes $u$ and $u'$ in $\zeta$ such that $\zeta[u]=\zeta[u']=\<q',x_j>\in \<Q,X>$, $u'$ is a descendant of $uj$ for some $j\in [\tilde{m}]$, any call $\<q'', x_l>\in \<Q,X>$ over the path from $u$ to $u'$ does not occur in $\zeta/u'j$. This implies $\Top(\zeta/uj)\neq \Top(\zeta/u'j)$. 
On the other hand, $\zeta[uj]\neq \bot$. From Lemma~\ref{lem:FV2C_PROP}(i), $uj$ is important in $\zeta$ and thus $y_j$ occurs in $\rhs_{M'}(q',\sigma')$ for some $\sigma'\in \Sigma$. Since the $j$th parameter of $q'$ is permanent from Lemma~\ref{lem:FV2C_PROP}(ii), 
$u'j$ is also important in $\zeta$. Since $M'$ is consistent from Lemma~\ref{lem:FV2C_consistent}, $\Top(\zeta/uj)=\Top(\zeta/u'j)$, which is a contradiction. 
\QED
\end{proof}


For the nondeleting MTT $M$ with the FV property, 
from Lemmas~\ref{lem:equivalence_MTTFV2MTTC}, \ref{lem:FV2C_consistent}, and \ref{lem:non-circularity-FV}, 
$\mathcal{E}(M)$ satisfies the FV-orig property (i.e., \emph{attributed-like} in \cite{FV99}). 
Lemma~3.18 in \cite{FV99} says that for an MTT with the FV-orig property, 
there is an equivalent ATT. 
From that, we obtain the following lemma and corollary. 
\begin{lemma}
\rm
$\CMTTFV\subseteq \CATT$. 
\end{lemma}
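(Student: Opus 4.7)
The plan is to take an arbitrary $\tau \in \CMTTFV$, witnessed by a nondeleting MTT $M$ with the FV property with respect to some $\rho$, and chain together the three preceding lemmas with the known construction from \cite{FV99} to obtain an equivalent ATT.

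First I would apply $\mathcal{E}$ of Definition~\ref{def:MTTFV2MTTC} to form the MTT $\mathcal{E}(M)$. Lemma~\ref{lem:equivalence_MTTFV2MTTC} gives $\tau_{\mathcal{E}(M)} = \tau_M = \tau$, Lemma~\ref{lem:FV2C_consistent} gives that $\mathcal{E}(M)$ is consistent, and Lemma~\ref{lem:non-circularity-FV} gives that the ATT $\Omega(\mathcal{E}(M))$ obtained from the construction of Definition~\ref{def:MTTC2ATT} is non-circular. By definition, these two facts together are exactly the FV-orig property (the \emph{attributed-like} condition) of F\"ul\"op and Vogler applied to $\mathcal{E}(M)$.

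Second I would invoke Lemma~3.18 of \cite{FV99}, which states that every MTT with the FV-orig property has an equivalent (non-circular) ATT; as noted in the excerpt, that lemma's construction coincides with $\Omega$ up to the cosmetic differences handled in Definition~\ref{def:MTTC2ATT}. Applied to $\mathcal{E}(M)$, it yields an ATT $A$ with $\tau_A = \tau_{\mathcal{E}(M)} = \tau$, so $\tau \in \CATT$, proving the inclusion.

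There is essentially no obstacle at this stage: the real work has already been done in establishing Lemma~\ref{lem:non-circularity-FV}, where the structural consequences of the FV property (permanence of parameters via Lemma~\ref{lem:FV2C_PROP}, together with consistency) were used to exclude cycles that the construction $\Omega$ could in principle introduce from an arbitrary consistent MTT. Once those ingredients are in hand, the lemma is an immediate corollary of chaining equivalence, consistency, non-circularity, and the translation result from \cite{FV99}.
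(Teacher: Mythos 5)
Your proposal is correct and follows exactly the paper's argument: the paper also derives the lemma by combining Lemmas~\ref{lem:equivalence_MTTFV2MTTC}, \ref{lem:FV2C_consistent}, and \ref{lem:non-circularity-FV} to conclude that $\mathcal{E}(M)$ satisfies FV-orig (the \emph{attributed-like} condition), and then invokes Lemma~3.18 of \cite{FV99} to obtain an equivalent ATT. No gaps.
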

\begin{corollary}\label{cor:MTTRF-ATTR}
\rm
$\CMTTRFV\subseteq \CATTR$ and $\CMTTUFV\subseteq \CATTU$. 
\end{corollary}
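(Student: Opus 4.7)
The plan is to exploit the fact that the superscripts \textsf{R} and \textsf{U} are both defined uniformly as pre-composition with the respective relabeling classes $\CBREL$ and $\CTRREL$. So the corollary should follow almost immediately from the previous lemma $\CMTTFV \subseteq \CATT$ together with the elementary monotonicity of functional composition with respect to set inclusion.

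Concretely, I would first spell out the definitional identities
\[
\CMTTRFV \;=\; \CBREL \com \CMTTFV, \qquad \CMTTUFV \;=\; \CTRREL \com \CMTTFV,
\]
and analogously
\[
\CATTR \;=\; \CBREL \com \CATT, \qquad \CATTU \;=\; \CTRREL \com \CATT,
\]
following the convention already announced in the Preliminaries (e.g.\ $\CATTU = \CTRREL \com \CATT$). These are just the natural uses of the superscript notation applied to the subclass of MTTs enjoying the FV property.

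Second, I would invoke the trivial monotonicity property: for classes of relations $F \subseteq F'$ and any class $G$, one has $G \com F \subseteq G \com F'$. Applying this with $G = \CBREL$ and $F = \CMTTFV$, $F' = \CATT$ (using the previous lemma for the inclusion $\CMTTFV \subseteq \CATT$) yields
\[
\CMTTRFV \;=\; \CBREL \com \CMTTFV \;\subseteq\; \CBREL \com \CATT \;=\; \CATTR.
\]
An identical argument with $\CTRREL$ in place of $\CBREL$ gives $\CMTTUFV \subseteq \CATTU$.

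Since the whole argument is a bookkeeping application of the preceding lemma, there is no real obstacle here: all the technical work lives in Lemmas~\ref{lem:equivalence_MTTFV2MTTC}, \ref{lem:FV2C_consistent}, and~\ref{lem:non-circularity-FV}, which together ensure that each $M \in \MTTFV$ has an equivalent non-circular ATT. The only thing worth double-checking is that the definitions of $\CMTTRFV$ and $\CMTTUFV$ in the paper are indeed intended as $\CBREL \com \CMTTFV$ and $\CTRREL \com \CMTTFV$; once this is fixed, the corollary is just functoriality of pre-composition.
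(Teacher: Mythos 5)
Your proposal is correct and matches the paper's (implicit) argument exactly: the paper states the corollary without proof as an immediate consequence of the preceding lemma $\CMTTFV\subseteq\CATT$, relying on precisely the definitional identities $\CMTTRFV=\CBREL\com\CMTTFV$, $\CATTR=\CBREL\com\CATT$ (and the analogous ones with $\CTRREL$) together with monotonicity of pre-composition. You have simply written out the bookkeeping the paper leaves to the reader.
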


Next, we show that the converse of the inclusion of Corollary~\ref{cor:MTTRF-ATTR}.
For this, we will show $\CMTTC\subseteq \CMTTRFV$. 
Before that, let us consider an example. 
For the consistent MTT $M_c$ of Figure~\ref{fig:circ}, 
let us construct its nondeleting normal form $B\com M'_c$. 
We will give the details of the construction in Lemma~\ref{lem:FVo-ourFV1}, 
and the normal form is in $\CMTTRFV$. 
We first define the bottom-up relabeling $B$ that realizes the look-ahead:
\[
\begin{array}{lr}
e\to p_1(e)&\text{for }p_1=\{ q\mapsto \{1\}, q_1\mapsto\emptyset, q_2\mapsto \{1\}\}\\
e'\to p_2(e')&\text{for }p_2=\{ q\mapsto \emptyset, q_1\mapsto\{ 1\}, q_2\mapsto \{1\}\}\\
\multicolumn{2}{l}{\sigma(p(x_1),p'(x_2))\to p_3([\sigma,p,p'](x_1,x_2)) \text{~~~for }
p_3=\{ q\mapsto\emptyset, q_1\mapsto\emptyset, q_2\mapsto\emptyset\}}
\end{array}
\]
where $p,p'\in\{p_1,p_2,p_3\}$.
The rules of the MTT $M_c'$ are as follows:
\[
\begin{array}{llcl}
r_1:&\la (q_0,\emptyset), [\sigma,p_1,p_1/p_2](x_1,x_2)\ra&&\\
&\multicolumn{3}{r}{\to~\la (q,\{1\}),x_1\ra(\langle  (q_2,\{1\}),x_2\ra(\la (q_1,\emptyset),x_1\ra))}\\
r_2:&\la (q_0,\emptyset), [\sigma,p_1,p_3](x_1,x_2)\ra&\to& \la (q,\{1\}),x_1\ra(\langle (q_2,\emptyset),x_2\ra)\\
r_3:&\la (q_0,\emptyset), [\sigma,p_2/p_3,\_](x_1,x_2)\ra&\to& \la (q,\emptyset),x_1\ra\\
&\la (\_,\emptyset),\_\ra&\to& \#\\
&\la (\_,\{1\}),\_\ra(y_1)&\to& y_1.
\end{array}
\]
Here the symbol ``$\_$'' denotes any state, or any input symbol (and look-ahead combination)
and the notation $p_1/p_2$ means that for both input symbols rules 
with the same right-hand side exist.
As the reader may verify, this transducer indeed has the FV property (with
$\rho((\_,\{1\}))=1$).
Note how the bottom-most symbol of $\rhs_{M_c}(q_0,\sigma)$, i.e., the $\$$-labeled leaf
does \emph{not} occur in any rule of $M_c'$.
This is because only three possibilities exist how the first rule $r$ of $M_c$ is evaluated
on an input tree $s$ (as expressed by the rules $r_1,r_2,r_3$ of $M_c'$): 
\begin{enumerate}
\item if $s=[\sigma,p_1,p_1/p_2](e, e/e')$, then $r$ evaluates to $\<q_1,e>$ (viz. $r_1$)
\item if $s=[\sigma,p_1,p_3](e, \sigma(\dots))$, then $r$ evaluates to $\<q_2,\sigma(\dots)>$ (viz. $r_2$)
\item if $s=[\sigma,p_2/p_3,\_](e'/\sigma(\dots), \dots)$, then $r$ evaluates to $\<q,e'/\sigma(\dots)>$ (viz. $r_3$).
\end{enumerate}
In each case, the MTT $M_c$ finally applies a rule with $\#$ in the right-hand side, i.e.,
a \emph{deleting} rule.

\begin{lemma}\label{lem:FVo-ourFV1}
\rm
$\CATT\subseteq \CMTTC\subseteq \CMTTRFV$. 
\end{lemma}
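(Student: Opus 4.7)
The plan is to prove the two inclusions separately. For $\CATT \subseteq \CMTTC$, I would apply the standard simulation of a non-circular ATT $A = (S, I, \Sigma, \Delta, \alpha_0, R)$ by an MTT: introduce for each synthesized attribute $\alpha \in S$ a state $q_\alpha$ of rank $|I|$, with parameters $y_\beta$ indexed by the inherited attributes. For each $\sigma \in \Sigma^{(k)}$ and $\alpha \in S$, let the rhs of the $(q_\alpha,\sigma)$-rule be obtained from $\rhs_A(\sigma,\alpha(\pi))$ by substituting each $\beta(\pi)$ by $y_\beta$ and each $\alpha'(\pi i)$ by $\<q_{\alpha'}, x_i>(T_{\beta_1,i},\ldots,T_{\beta_{|I|},i})$, where $T_{\beta,i}$ is the analogous transformation of $\rhs_A(\sigma,\beta(\pi i))$. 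Consistency is immediate since every state call $\<q_{\alpha'}, x_i>$ occurring in any rhs for symbol $\sigma$ carries the identical argument tuple $(T_{\beta_1,i},\ldots,T_{\beta_{|I|},i})$ depending only on $\sigma$ and $i$; in particular the $\Top$-equality required by Definition~\ref{def:consistent} holds at every position.

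For $\CMTTC \subseteq \CMTTRFV$, I would generalize the construction illustrated in the preceding example. Given a consistent MTT $M = (Q,\Sigma,\Delta,q_0,R)$, define a bottom-up relabeling $B$ whose state after reading $s \in T_\Sigma$ is the function $f : Q \to 2^{\mathbb{N}}$ with $f(q) = \{j \in [\rank_Q(q)] : y_j \text{ occurs in } M_q(s)\}$. The survival of $y_j$ in $M_q(\sigma(s_1,\ldots,s_k))$ is determined by $\rhs_M(q,\sigma)$ together with the functions $f_1,\ldots,f_k$ at the children, so $f$ is computable bottom-up, and I let $B$ relabel $\sigma$ in such a context to $[\sigma,f_1,\ldots,f_k]$. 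Then build the MTT $M'$ with states $(q,U)$ of rank $|U|$, initial state $(q_0,\emptyset)$, whose rule for state $(q,U)$ and input $[\sigma,f_1,\ldots,f_k]$ (defined precisely when $U = f(q)$) has rhs obtained from $\rhs_M(q,\sigma)$ by renaming each $y_j$ to $y_{\iota(j)}$ (with $\iota(j)$ the position of $j$ in $U$ in increasing order) and replacing every state call $\<q',x_i>(\xi_1,\ldots,\xi_{m'})$ by $\<(q', f_i(q')), x_i>$ applied, recursively, to the transformed $\xi_l$ for just those $l \in f_i(q')$. I would verify $\tau_B \com \tau_{M'} = \tau_M$ by induction on input trees, and observe that $M'$ is nondeleting because $j \in U$ holds exactly when $y_j$ appears at a ``surviving'' position of $\rhs_M(q,\sigma)$, which is precisely when it is kept by the transformation. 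Finally, I set $\rho((q,U),i) = u_i$ where $u_i$ is the $i$-th element of $U$.

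The main obstacle is verifying that $M'$ has the FV property with this $\rho$. Consider any two state-call-headed subtrees $\xi_1,\xi_2$ of right-hand sides of $M'$-rules for the same relabeled symbol, with $\xi_j[\varepsilon] = \<(q'_j, U'_j), x_i>$. Both descend from original calls $\<q'_j, x_i>$ in (possibly different) rhs's $\rhs_M(q^o_j,\sigma)$, and the sets $U'_j = f_i(q'_j)$ are both determined by the look-ahead at child $i$ alone. Whenever $\rho((q'_1,U'_1), l_1) = \rho((q'_2,U'_2), l_2) = u$, both sides refer to the same original parameter index $u \in f_i(q'_1) \cap f_i(q'_2)$; taking as witness any input $\sigma(s_1,\ldots,s_k)$ with the prescribed child look-ahead $f_i$ shows that the corresponding position is important in the sense of Definition~\ref{def:importance} in both $\rhs_M(q^o_1,\sigma)$ and $\rhs_M(q^o_2,\sigma)$, so consistency of $M$ yields $\Top$-equality of the two original argument subtrees. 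The plan is then to lift $\Top$-equality to syntactic equality of the transforms (after $\Psi^\rho$-renaming) by induction on the structure of the original subtree: base cases ($y_j$, output-symbol leaves) are direct; at an output-symbol node one recurses componentwise; at a state-call node one again invokes consistency to match arguments at the kept positions and recurses. The induction terminates because right-hand sides of $M$ are finite. The delicate point is aligning F\"ul\"op--Vogler ``importance'' (existential over inputs) with ``inclusion in $f_i(q')$'' (deterministic given the look-ahead); the witnessing-input argument above handles exactly this alignment.
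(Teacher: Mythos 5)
Your second inclusion ($\CMTTC\subseteq\CMTTRFV$) follows essentially the same route as the paper: the same bottom-up relabeling annotating each node with the map that sends $q$ to the set of parameter indices surviving in $M_q(s)$, the same states $(q,U)$ of rank $|U|$ with $\rho((q,U),i)=U(i)$, and the same two-step verification --- first align ``kept by the look-ahead'' with F\"ul\"op--Vogler importance via a witnessing input compatible with the look-ahead, then lift the $\Top$-equality supplied by consistency to syntactic equality of the transformed argument trees by structural induction, re-invoking consistency at each state-call node. (These two steps are exactly properties 1 and 2 of the Claim in the paper's proof.) The only cosmetic difference is that you leave $M'$ partial by omitting rules for $(q,U)$ with $U\neq f(q)$, whereas the paper inserts dummy right-hand sides to keep the transducer total; since the dummies contain no state calls this does not affect the FV property.

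The gap is in the first inclusion. The paper disposes of $\CATT\subseteq\CMTTC$ by citing~\cite{FV99}, but your spelled-out construction is not well defined: the trees $T_{\beta,i}$ are given by mutual recursion ($T_{\beta,i}$ replaces each $\alpha'(\pi i')$ by a call carrying all of $T_{\beta_1,i'},\ldots,T_{\beta_{|I|},i'}$), and this recursion need not terminate even for non-circular ATTs. For example, take $S=\{\alpha\}$, $I=\{\beta\}$, a binary $\sigma$ with rules $\alpha(\pi)\to\delta(\alpha(\pi 1),\alpha(\pi 2))$, $\beta(\pi 1)\to\alpha(\pi 2)$, $\beta(\pi 2)\to\alpha(\pi 1)$, and $\alpha(\pi)\to e$ at a leaf: the ATT is non-circular ($\alpha$ never depends on $\beta$), yet $T_{\beta,1}$ requires $T_{\beta,2}$ and vice versa, producing an infinite tree. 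The construction of~\cite{FV99} resolves this by cutting the unfolding off at a depth bounded by the number of attribute instances and inserting dummies, which is sound precisely because a repeated $(\beta,i)$ along a realized dependency path would contradict non-circularity. This also undercuts your remark that consistency is ``immediate because all calls carry identical argument tuples'': after the necessary cutoff, calls at different nesting depths no longer carry literally equal tuples, and one must instead argue that they agree at all \emph{important} positions --- which is exactly the weakening that the consistency condition is designed to permit.
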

\begin{proof}
It is well known that for any $\ATT$ there exists an equivalent consistent $\MTT$~\cite{FV99}.
It should be noted that in Theorem~5.11 of~\cite{EM99} it is proved that
$\CATT\subseteq\CMTTR$. The MTT constructed in the proof of that theorem in fact
has the FV property, so we could simply use that construction. 
However, to make this paper more self-contained we prefer to give another proof, purely in terms of MTTs. 

Let $M=(Q, \Sigma, \Delta, q_0, R)$ be a consistent $\MTT$. 
We show that the nondeleting normal form $M'$ of $M$ obtained by the construction 
in the proof of Lemma~6.6 of~\cite{EM99} has the FV property. 
We give the construction again, in our setting.
  
We denote by $P$ the set of all functions which associate with every $q\in Q^{(m)}$ a subset of $[m]$. 
For a subset $I$ of $\mathbb{N}$ we denote by 
$|I|$ the cardinality of $I$ and by 
$I(j)$ the $j$th element of $I$ with respect to $<$. 
We define the $\BREL$
$B=(P,\Sigma,\Sigma',P,R_B)$, 
where $\Sigma'=\{[\sigma, p_1,\ldots,p_k]\mid \sigma\in \Sigma^{(k)}, p_1,\ldots,p_k\in P\}$. 
For $\sigma\in \Sigma^{(k)}$ and 
$p_1,\ldots,p_k\in P$, let the rule 
\[
\sigma(p_1(x_1),\ldots,p_k(x_k))\to p_0([\sigma,p_1,\ldots,p_k](x_1,\ldots,x_k))
\]
be in $R_B$, where 
for every $q\in Q^{(m)}$, $p_0(q)=\oc(\rhs_M(q,\sigma))$ and for $t\in T_{\<Q,X_k>\cup \Delta}(Y_m)$, 
\[
 \oc(t) = 
\begin{cases}
 \{j\} & \text{if $t=y_j\in Y_m$} \\
 \bigcup_{i=1}^{l} \oc(t_i) & \text{if $t=\delta(t_1,\ldots,t_l)$ where $\delta\in \Delta^{(l)}$} \\
 \bigcup_{j\in p_i(r)} \oc(t_j) & \text{if $t=\<r,x_i>(t_1,\ldots,t_l)$ where $\<r,x_i>\in \<Q, X_k>^{(l)}$.} 
\end{cases}
\]
We define the transition function $h$ such that $h_\sigma(p_1,\ldots,p_k)=p_0$ if rule  $\sigma(p_1(x_1),\ldots,p_k(x_k))\to p_0([\sigma,p_1,\ldots,p_k](x_1,\ldots,x_k))$ is in $R_B$. 

Let $M'=(Q', \Sigma', \Delta \cup \{d^{(2)}\}, (q_0, \emptyset), R')$ be an $\MTT$  
where $Q'=\{(q, I)^{(|I|)}\mid q\in Q^{(m)}, I\subseteq [m]\}$. 
For every $(q,I)\in Q'$, $\sigma \in \Sigma^{(k)}$, $k\geq 0$, and $p_1,\ldots, p_k\in P$, let the rule
\[
\<(q,I),[\sigma, p_1,\ldots, p_k](x_1,\dots,x_k)>(y_1,\ldots,y_{|I|})\to \zeta
\]
be in $R'$, where $p_0=h_\sigma(p_1,\ldots,p_k)$
and for $I\neq p_0(q)$ let $\zeta=y_1$ if $|I|=1$ and 
otherwise let 
$\zeta=d(y_1,d(y_2,\ldots, d(y_{|I|-1}, y_{|I|})\cdots))$;
if $I=p_0(q)$
let $\zeta=\rhs_M(q,\sigma)\Theta_{\vec{p}}\theta_I$, 
where $\Theta_{\vec{p}}$ with $\vec{p}=\<p_1,\ldots,p_k>$ denotes the substitution
\[
[\![\<r,x_i>\leftarrow \<(r,I_r),x_i>(y_{I_r(1)}, \ldots, y_{I_r(n)})
\mid \<r,x_i>\in \<Q,X_k>, I_r=p_i(r), n=|I_r|]\!]
\]
and $\theta_I=[y_{I(j)}\leftarrow y_j\mid j\in [|I|]]$. 
We let $\rho((q,I), j) = I(j)$ for every $(q, I)\in Q'$ and $j\in [|I|]$.

\begin{claim}
The following two properties hold. 
\begin{enumerate} 
\item \label{claim:origin-delete}
For every $(q, I)\in Q'$, $\sigma\in \Sigma^{(k)}$, $k\geq 0$, and $p_1,\ldots, p_k\in P$, 
if  $\rhs_{M'}((q,I), [\sigma, \vec{p}])$, where $\vec{p}$ is the sequence $p_1,\ldots,p_k$, is not the dummy right-hand side, then 
there is an injective mapping $\psi: V(\zeta) \to V(\xi)$ 
where $\zeta=\rhs_{M'}((q,I), [\sigma, \vec{p}])$ and $\xi=\rhs_M(q, \sigma)$ 
such that for every $w\in V(\zeta)$, 
\begin{enumerate}
 \item $\zeta/w=(\xi/\psi(w))\Theta_{\vec{p}}\theta_I$ and $\psi(w)$ is important in $\xi$ 
 for some trees $s_1,\ldots, s_k\in T_\Sigma$ such that $B(\sigma(s_1,\ldots,s_k))[\varepsilon]=[\sigma,\vec{p}]$, and

 \item if $\zeta[w]=\<(q', I'), x_i>\in \<Q', X>$ then $\psi(wj)=\psi(w)\rho((q',I'),j)$ for every $j\in [|I'|]$. 
\end{enumerate}

\item \label{claim:top-extension}
Let $\xi_1=\rhs_M(q_1, \sigma)$ and $\xi_2=\rhs_M(q_2, \sigma)$ for states $q_1,q_2\in Q$. 
Let $p_1,\ldots,p_k\in P$ and $\vec{p}$ be the sequence $p_1,\ldots,p_k$. 
Assume that neither $\rhs_{M'}(q_1, [\sigma, \vec{p}])$ nor $\rhs_{M'}(q_2, [\sigma,\vec{p}])$ is the dummy right-hand side. 
Let $v_1\in V(\xi_1)$ and $v_2\in V(\xi_2)$. 
Assume that $v_1$ and $v_2$ are important in $\xi_1$ and $\xi_2$, respectively, for some $s_1, \ldots, s_k\in T_\Sigma$ 
such that $B(\sigma(s_1,\ldots,s_k))[\varepsilon]=[\sigma,\vec{p}]$. Then, $\Top(\xi_1/v_1)=\Top(\xi_2/v_2)$ implies $(\xi_1/v_1)\Theta_{\vec{p}}=(\xi_2/v_2)\Theta_{\vec{p}}$. 
\end{enumerate}
\end{claim}
%
Now we show that $M'$ has the FV property with $\rho$ by using the above claim. 
Let $(q_1, I_1), (q_2, I_2)\in Q'$, $\sigma\in \Sigma^{(k)}$, $k\geq 0$, and $p_1,\ldots, p_k\in P$. 
Let $\zeta_i=\rhs_{M'}((q_i,I_i), \sigma')$  for $i\in [2]$ where $\sigma'=[\sigma, p_1,\ldots,p_k])$
Let $w_1\in V(\zeta_1)$ and $w_2\in V(\zeta_2)$ such that $\zeta_1[w_1]=\<(q,I), x_i>$ and $\zeta_2[w_2]=\<(q',I'),x_i>$. 
Let $j\in [\rank_{M'}((q,I))]$ and $j'\in [\rank_{M'}((q',I'))]$ such that $\rho((q,I),j)=\rho((q',I'),j')=j''$. 
Let $\xi_i=\rhs_M(q_i, \sigma)$ for $i\in [2]$. 
Let $\psi_i$ be the mapping defined by the property~\ref{claim:origin-delete} of the claim.  
By property~\ref{claim:origin-delete}, $\psi_1(w_1j)$ and $\psi_2(w_2j')$ are important in $\xi_1$ and $\xi_2$, respectively, for some trees $s_1,\ldots,s_k$ such that $B(\sigma(s_1,\ldots,s_k))[\varepsilon]=[\sigma,\vec{p}]$. 
In addition, we get $\zeta_1/w_1j=(\xi_1/\psi_1(w_1j))\Theta_{\vec{p}}\theta_{I_1}$ and 
$\zeta_2/w_2j'=(\xi_2/\psi_2(w_2j'))\Theta_{\vec{p}}\theta_{I_2}$. 
Since $\Psi^\rho_{(q_i, I_i)}=\theta^{-1}_{I_i}$ for each $i\in [2]$, 
$(\zeta_1/w_1j)\Psi^\rho_{(q_1, I_1)}=(\xi_1/\psi_1(w_1j))\Theta_{\vec{p}}$ and 
$(\zeta_2/w_2j')\Psi^\rho_{(q_2, I_2)}=(\xi_2/\psi_2(w_2j'))\Theta_{\vec{p}}$. 
By the fact that $M$ is consistent, 
$\Top(\xi_1/\psi(w_1j))=\Top(\xi_2/\psi(w_2j'))$. 
By the second property of the claim, 
we get $(\xi_1/\psi_1(w_1j))\Theta_{\vec{p}}=(\xi_2/\psi_2(w_2j'))\Theta_{\vec{p}}$. 
Thus, $(\zeta_1/w_1j)\Psi^\rho_{(q_1, I_1)}=(\zeta_2/w_2j')\Psi^\rho_{(q_2, I_2)}$. 
Hence, $M'$ has the FV property with $\rho$. 
\QED
\end{proof}

Before we state our main theorem of this section, let us compare the circular ATT obtained 
by the construction of \cite{FV99}, and the ATT obtained from the nondeleting $M'_c$ with 
the FV property given before Lemma~\ref{lem:FVo-ourFV1}. 
We apply the construction of Definitions~\ref{def:MTTFV2MTTC} and \ref{def:MTTC2ATT} to the MTT $M_c'$
and generate the ATT $A_c=\Omega(\mathcal{E}(M_c'))$.
We have $S=\{ q_00, q0, q1, q_10, q_20, q_21 \}$; these are the states of $M_c'$ 
(in condensed notation). The inherited attributes of $A_c$ are $I=\{1\}$ ($y_1$ is abbreviated as $1$).
The set $R_{[\sigma,p_1,p_1/p_2]}$ contains the rule
\[
q_00(\pi)\to \Top(\expand_{q_00}(\rhs_{M_c'}(q_00,[\sigma,p_1,p_1/p_2])))=q1(\pi 1).
\]
Since the right-hand side of $r_1$ has two parameter trees, we obtain two more rules:
\[
\underbrace{\rho(q1,1)}_{=1}(\pi 1)\to\Top(\expand_{q_00}(\la q_21,x_2\ra(\la q_10,x_1\ra)))=q_21(\pi 2)
\]
and 
$1(\pi 2)\to\Top(\expand_{q_00}(\la q_10,x_1\ra))=q_10(\pi 1)$.
The dependency graph for the input tree $\sigma(e,e)$ 
depicted in Figure~\ref{fig:Mc1} shows the full set
of rules $R_{[\sigma,p_1,p_1/p_2]}$.
Note that for the input tree $\sigma(e',e)$ which causes the construction
of~\cite{FV99} for $M_c$ to generate a circular ATT, the corresponding dependency 
graph for $A_c$ looks rather innocent, see Figure~\ref{fig:Mc2}.
\begin{figure}[t]
\centering
\input{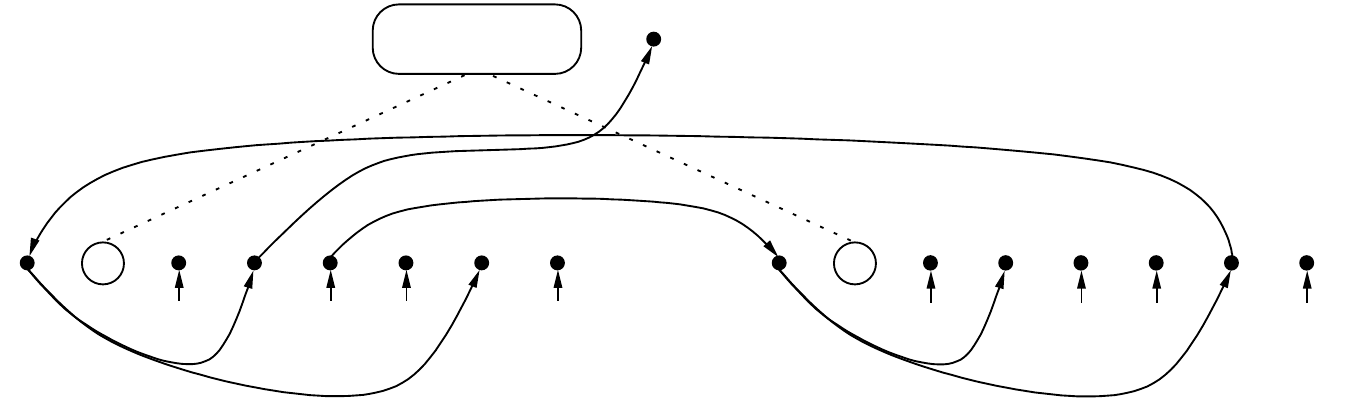_t}
\caption{The dependency graph of the ATT $A_c$ for the input tree $[\sigma,p_1,p_1](e,e)$}
\label{fig:Mc1}
\end{figure} 
\begin{figure}[t]
\centering
\input{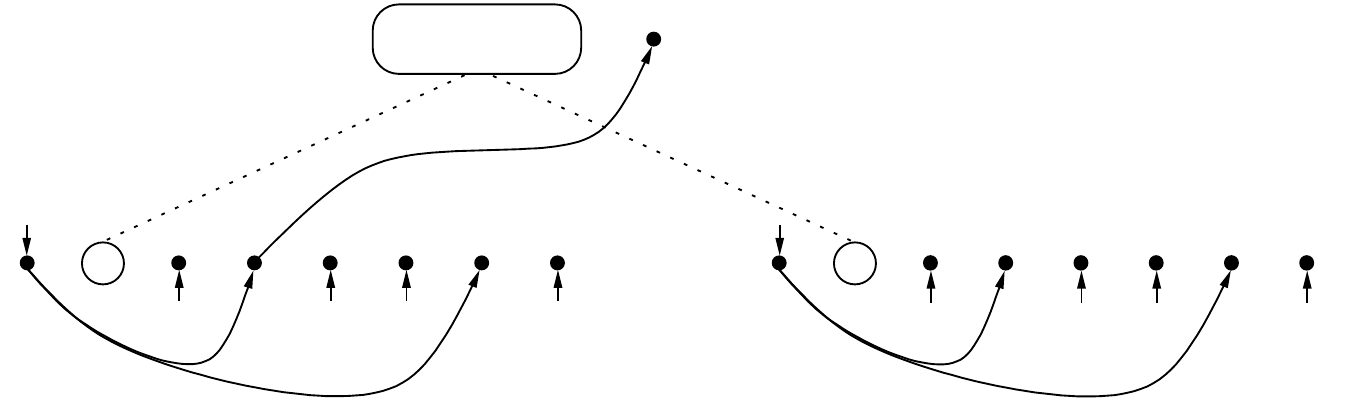_t}
\caption{The dependency graph of the ATT $A_c$ for the input tree $[\sigma,p_2,p_1](e',e)$}
\label{fig:Mc2}
\end{figure}

%


By Corollary~\ref{cor:MTTRF-ATTR} and Lemma~\ref{lem:FVo-ourFV1}, 
and the facts that $\CBREL\subseteq\CTRREL$ and that
 $\CBREL$ and $\CTRREL$ are closed under composition
we obtain the main result of this section.


\begin{theorem}\label{thm:ATTR=MTTRF}
\rm
$\CATTR=\CMTTRC=\CMTTRFV$ and $\CATTU=\CMTTUC=\CMTTUFV$. 
\end{theorem}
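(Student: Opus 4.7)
The plan is to stitch together the inclusions already proved: Corollary~\ref{cor:MTTRF-ATTR} gives $\CMTTRFV\subseteq\CATTR$ and $\CMTTUFV\subseteq\CATTU$, while Lemma~\ref{lem:FVo-ourFV1} gives $\CATT\subseteq\CMTTC\subseteq\CMTTRFV$. Together with the stated closure of $\CBREL$ and $\CTRREL$ under composition and the inclusion $\CBREL\subseteq\CTRREL$, a short diagram chase closes the two cycles of inclusions. No new transducer construction is needed.

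For the regular look-ahead equalities $\CATTR=\CMTTRC=\CMTTRFV$, I would first pre-compose the Lemma~\ref{lem:FVo-ourFV1} inclusions on the left with $\CBREL$ to obtain
\[
\CATTR \;=\; \CBREL\com\CATT \;\subseteq\; \CBREL\com\CMTTC \;=\; \CMTTRC,
\]
\[
\CMTTRC \;=\; \CBREL\com\CMTTC \;\subseteq\; \CBREL\com\CMTTRFV \;=\; \CBREL\com\CBREL\com\CMTTFV \;=\; \CBREL\com\CMTTFV \;=\; \CMTTRFV,
\]
where the next-to-last equality uses the closure of $\CBREL$ under composition. Combined with Corollary~\ref{cor:MTTRF-ATTR}'s $\CMTTRFV\subseteq\CATTR$, this yields the cycle $\CATTR\subseteq\CMTTRC\subseteq\CMTTRFV\subseteq\CATTR$, hence equality throughout.

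For the regular look-around equalities $\CATTU=\CMTTUC=\CMTTUFV$, I would similarly pre-compose with $\CTRREL$:
\[
\CATTU \;=\; \CTRREL\com\CATT \;\subseteq\; \CTRREL\com\CMTTC \;=\; \CMTTUC,
\]
\[
\CMTTUC \;=\; \CTRREL\com\CMTTC \;\subseteq\; \CTRREL\com\CMTTRFV \;=\; \CTRREL\com\CBREL\com\CMTTFV \;\subseteq\; \CTRREL\com\CMTTFV \;=\; \CMTTUFV,
\]
using $\CBREL\subseteq\CTRREL$ and the closure of $\CTRREL$ under composition to absorb the extra $\CBREL$ factor. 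Together with $\CMTTUFV\subseteq\CATTU$ from Corollary~\ref{cor:MTTRF-ATTR}, we again obtain a cycle $\CATTU\subseteq\CMTTUC\subseteq\CMTTUFV\subseteq\CATTU$, forcing equality.

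There is no real obstacle here; the only subtlety is keeping track of which closure property of $\CBREL$ and $\CTRREL$ is being invoked at each step so that the composed-on-the-left preprocessors do not accumulate. Consequently the theorem follows directly from the preceding lemmas and corollary without any additional construction.
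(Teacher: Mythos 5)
Your proof is correct and takes essentially the same route as the paper: the paper derives the theorem in one sentence from Corollary~\ref{cor:MTTRF-ATTR}, Lemma~\ref{lem:FVo-ourFV1}, the inclusion $\CBREL\subseteq\CTRREL$, and closure of $\CBREL$ and $\CTRREL$ under composition, which is exactly the diagram chase you spell out. Your write-up merely makes explicit the absorption of the extra $\CBREL$ factor that the paper leaves implicit.
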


%
%


\section{Dynamic FV Property}

We present a new property that characterizes ATT with
regular look-around in terms of MTTs with regular look-ahead. 
The \emph{dynamic FV} property is a strict generalization of the FV property. 
The idea is to require that during any derivation
the argument trees of any state should be semantically equal. 
In the following, let $M=(Q,\Sigma,\Delta,q_0,R)$ be an $\MTT$. 

\begin{definition}
\rm
Let $q\in Q$, $s\in T_\Sigma$, and $u\in V(s)$. 
The set of \emph{call trees of $q$ at $u$ on $s$}
is defined as $\ct_M(s,u,q)=\{\xi/v \mid v\in V(\xi), \xi[v]=\<q,x>\}$
where $\xi=M_{q_0}(s[u\leftarrow x])$. 
\end{definition}
\begin{definition}
\rm
The MTT $M$ has the \emph{dynamic FV property} on a set $L$ of trees if 
for every $s\in L$, $u\in V(s)$, $q\in Q^{(m)}$, $j\in [m]$, 
and $\xi_1, \xi_2\in\ct_M(s, u, q)$, 
$\xi_1/j[\![s/u]\!]=\xi_2/j[\![s/u]\!]$ where $[\![s/u]\!]=[\![\<q', x>\leftarrow M_{q'}(s/u) \mid q'\in Q]\!]$. 
We say that an $\MTTRDFV$ (resp. $\CMTTUDFV$) $E\com M$, where $E$ is a relabeling and $M$ is an $\MTT$, 
has the dynamic FV property if $M$ has the dynamic FV property on the image $E(T_\Sigma)$. 
\end{definition}

We denote by $\CMTTRDFV$ and $\CMTTUDFV$ the class of all translations realized by $\MTTR$s and $\MTTU$s 
with the dynamic FV property, respectively. 
As example we consider the MTT $M_{\text{dyn}}$ which does not have the FV property with any parameter renaming 
mapping $\rho$ but has the dynamic FV property. 
These are the rules of $M_{\text{dyn}}$:
\[
\begin{array}{lcllcl}
\<q_0,a(x_1)>&\to&f(\<q_1,x_1>(\<q_2,x_1>),\<q_1,x_1>(\<q_3,x_1>(e))) \\ 
\<q_1,a(x_1)>(y_1)&\to&a(\<q_1,x_1>(b(y_1))) \\
\<q_2,a(x_1)>&\to&a(a(\<q_2,x_1>))\\
\<q_3,a(x_1)>(y_1)&\to&a(\<q_3,x_1>(a(y_1)))\\
\<q_0,e>&\to&e\\
\<q_2,e>&\to&e\\
\<q_1,e>(y_1)&\to&y_1\\
\<q_3,e>(y_1)&\to&y_1
\end{array}
\]

The MTT $M_{\text{dyn}}$ translates trees of the form $a^n(e)$ to trees of the form 
$f(t, t)$ where $t=a^{n-1}b^{n-1}a^{2n-2}(e)$. 
Clearly, $M$ does not satisfy the FV property. 
On the other hand, the two argument trees of $q_1$ always evaluate to the same trees
and hence $M$ satisfies the dynamic FV property.
We show that the dynamic FV property is a generalization of the FV property. 
\begin{lemma}\label{lem:FVtoDFV}
\rm
$\CMTTFV \subseteq \CMTTDFV$. 
\end{lemma}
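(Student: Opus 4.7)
The plan is to reduce the dynamic FV property of $M$ to the well-definedness of the non-circular ATT obtained via the constructions of Section~3. Assume $M$ has the FV property with parameter renaming $\rho$, and let $M'=\mathcal{E}(M)$. By Lemmas~\ref{lem:equivalence_MTTFV2MTTC}, \ref{lem:FV2C_consistent}, and \ref{lem:non-circularity-FV}, $M'_q(s)=M_q(s)\Psi^\rho_q$ for every $s\in T_\Sigma$ and $q\in Q$, $M'$ is consistent, and $A=\Omega(M')$ is non-circular; hence the normal form $\nf(\Rightarrow_{A,s},y_j(u))$ is uniquely defined for every $s\in T_\Sigma$, $u\in V(s)\setminus\{\varepsilon\}$, and every inherited attribute $y_j$ of $A$.

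The core technical step is to extend Lemma~\ref{lem:equivalence_MTTFV2MTTC} from closed inputs in $T_\Sigma$ to inputs of the form $s[u\leftarrow x]$, at the level of call trees. Specifically, by induction on $s$ I would prove that for every $\xi\in\ct_M(s,u,q)$ there is a companion $\xi'\in\ct_{M'}(s,u,q)$ such that, for all $j\in[\rank_Q(q)]$,
\[
(\xi/j)[\![s/u]\!]_M \;=\; (\xi'/\rho(q,j))[\![s/u]\!]_{M'},
\]
where $[\![s/u]\!]_N$ abbreviates $[\![\<r,x>\leftarrow N_r(s/u)\mid r\in Q]\!]$. The induction mirrors claim~(ii) in the proof of Lemma~\ref{lem:equivalence_MTTFV2MTTC} and uses that $\expand_{(\cdot)}$ turns each state call $\<q',x_i>(\zeta_1,\ldots,\zeta_{m'})$ in $M$'s rhs into the $\bot$-padded call $\<q',x_i>(s_1,\ldots,s_{\tilde m})$ in $M'$'s rhs, placing $\expand_{q}(\zeta_l)$ into position $\rho(q',l)$.

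From the construction of $\Omega$ and the equivalence $A\equiv M'$ (which follows from $M'$ being attributed-like in the sense of~\cite{FV99}), I would then argue that whenever the $j$th parameter of $q$ is permanent in $M'$, every call tree $\xi'\in\ct_{M'}(s,u,q)$ satisfies $(\xi'/j)[\![s/u]\!]_{M'}=\nf(\Rightarrow_{A,s},y_j(u))$. Uniqueness of this normal form then yields dynamic FV for $M'$ at all permanent parameter positions. Combining with the call-tree correspondence: for $\xi_1,\xi_2\in\ct_M(s,u,q)$ with companions $\xi'_1,\xi'_2\in\ct_{M'}(s,u,q)$, Lemma~\ref{lem:FV2C_PROP}(ii) ensures that the $\rho(q,j)$th parameter of $q$ is permanent in $M'$, so the $M'$-evaluations at position $\rho(q,j)$ agree, and hence $(\xi_1/j)[\![s/u]\!]_M=(\xi_2/j)[\![s/u]\!]_M$, which is exactly the dynamic FV property.

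The main obstacle is the call-tree-level extension of Lemma~\ref{lem:equivalence_MTTFV2MTTC}: although the inductive pattern is the same as in Section~3, one must handle the placeholder $x$ in $s[u\leftarrow x]$ uniformly and check that $\expand_{(\cdot)}$ commutes with the outer substitution $[\![s/u]\!]$ modulo $\Psi^\rho_{q_0}$ (which is identity because $q_0$ has rank~$0$). Once this bookkeeping is in place, the reduction to inherited-attribute uniqueness of $A$ is routine.
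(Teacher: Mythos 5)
Your high-level strategy is workable in principle, but it is a long detour compared to what is actually needed, and as written it contains one genuine gap. The paper's own proof never touches $\mathcal{E}(M)$, $\Omega$, or non-circularity: it proves directly, by induction on the length of $u$ and using only the relation $\sim^{q,q'}_\rho$ from the FV property, that for reachable $q$ and any $t_1,t_2\in\ct_M(s,u,q)$ one has the \emph{syntactic} equality $t_1/j=t_2/j$ (more precisely: $t_1/j_1=t_2/j_2$ whenever $\rho(q_1,j_1)=\rho(q_2,j_2)$, for call trees of possibly different states). Syntactic equality of the argument trees trivially implies equality after applying $[\![s/u]\!]$, so the dynamic FV property follows with no semantic evaluation at all. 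Your plan replaces this one induction by three constructions plus two correspondence lemmas, and in doing so weakens syntactic equality to semantic equality, which is why you then need the ATT normal form as a common reference value.

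The gap is in your third step. You justify the identity $(\xi'/j)[\![s/u]\!]_{M'}=\nf(\Rightarrow_{A,s},y_j(u))$ by appeal to ``the construction of $\Omega$ and the equivalence $A\equiv M'$.'' But the equivalence available (via Lemma~3.18 of \cite{FV99}) is only equality of the \emph{translations} $\tau_A=\tau_{M'}$, i.e., agreement of the final outputs at the root. That says nothing about the value of an individual inherited attribute at an internal node $u$, and it cannot by itself force every call tree's $j$th argument to evaluate to that attribute's normal form. Establishing this node-level correspondence requires a separate mutual induction relating synthesized attributes to state evaluations and inherited attributes to parameter evaluations --- essentially a full correctness proof of $\Omega$ at the attribute level, analogous in size and structure to Lemma~\ref{lem:equivalence} in Section~4, and the paper proves no such statement for $\Omega$. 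Until that lemma is supplied, the chain $(\xi_1/j)[\![s/u]\!]=\nf(\Rightarrow_{A,s},y_{\rho(q,j)}(u))=(\xi_2/j)[\![s/u]\!]$ is not justified. (Your first technical step, the call-tree extension of Lemma~\ref{lem:equivalence_MTTFV2MTTC}, is believable and could be carried out; but note that once you are willing to do inductions of that kind, the paper's direct induction on $M$ itself already finishes the proof, making the passage through $\mathcal{E}(M)$ and $\Omega$ unnecessary.)
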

\begin{proof}
Let $M=(Q,\Sigma,\Delta,q_0,R)$ be an $\MTT$ that has the FV property with $\rho$.  
We give the following claim to prove the lemma.
\begin{claim}\label{cl:FV->DFV}
For every $s\in T_\Sigma$, $u\in V(s)$, $q^{(m)}\in \Sts_{M}(\{q_0\},s,u)$, $m\geq 0$, $j\in [m]$, 
and $t_1,t_2\in \ct_{M}(s, u, q)$, it holds that $t_1/j=t_2/j$. 
\end{claim}
To show that the claim holds, we prove  
the following property by induction on the length of $u$:  
for every $u\in \mathbb{N}^*$, $s\in T_\Sigma$ such that $u\in V(s)$, $q^{(m_1)}_1, q^{(m_2)}_2\in \Sts_{M}(\{q_0\}, s, u)$
with $m_1,m_2\geq 0$, $t_1\in \ct_{M}(s, u, q_1)$, $t_2\in \ct_{M}(s, u, q_2)$, $j_1\in [m_1]$, and $j_2\in [m_2]$, 
if $\rho(q_1, j_1)=\rho(q_2, j_2)$ then $t_1/j_1=t_2/j_2$. 
Since $\rho(q, j_1)=\rho(q,j_2)$ implies $j_1=j_2$, we obtain the claim.

It is clear from Claim~\ref{cl:FV->DFV} that $M$ has the dynamic FV property. 
\QED
\end{proof}

We say that $M$ is \emph{nonerasing} if for every $q\in Q$ and $\sigma\in \Sigma$, $\rhs_M(q,\sigma)\notin Y$. 
The nondeleting and nonerasing normal form can be obtained by the construction given in Lemmas~6.6 and 7.11 in \cite{EM99}. 
The construction preserves the dynamic FV property. 

\begin{lemma}\label{lem:NonDelPrvDFV}
\rm
Let $L\subseteq T_\Sigma$ and let $M=(Q,\Sigma,\Delta,q_0,R)$ be an $\MTT$ that has the dynamic FV property on $L$. 
Let $B\com M'$ be the nondeleting $\MTTR$ obtained from $M$ by the construction given in the proof of Lemma~\ref{lem:FVo-ourFV1}. 
Then $M'$ has the dynamic FV property on $B(L)$. 
\end{lemma}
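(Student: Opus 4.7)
The plan is to reduce the dynamic FV property of $M'$ on $B(L)$ to that of $M$ on $L$ by exploiting the tight correspondence between $M$-derivations on $s$ and $M'$-derivations on $B(s)$ given by the construction of Lemma~\ref{lem:FVo-ourFV1}. Recall that a state $(q,I)$ of $M'$ encodes $q$ together with the occurrence set $I$ of $M_q$'s output parameters, and that $I$ is deterministically fixed by the $B$-label at the current input node. Therefore, whenever $(q,I)$ is called at node $u$ during an $M'$-derivation on $B(s)$, the corresponding $q$-call at $u$ in the $M$-derivation on $s$ necessarily has $I$ as its parameter-occurrence set.

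The technical heart of the proof is the following correspondence claim, established by induction on the derivation: for every $s\in T_\Sigma$, $u\in V(s)$, and every $(q,I)\in Q'$ with $I$ equal to the $B$-state at $u$ applied to $q$, there is a surjection $\Phi:\ct_M(s,u,q)\to \ct_{M'}(B(s),u,(q,I))$ such that if $\xi'\in\ct_M(s,u,q)$ has the form $\<q,x>(t_1,\ldots,t_m)$ then $\Phi(\xi')$ has the form $\<(q,I),x>(\hat t_1,\ldots,\hat t_{|I|})$ with $\hat t_j[\![s'/u]\!]_{M'}=t_{I(j)}[\![s/u]\!]_M$ for every $j\in[|I|]$, where $s'=B(s)$ and the subscripts on $[\![\cdot]\!]$ indicate evaluation in $M$ versus $M'$. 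Surjectivity is straightforward since every $M'$-right-hand side is by construction a $\Theta_{\vec p}\theta_I$-rewrite of an $M$-right-hand side. The semantic equivalence $\hat t_j[\![s'/u]\!]_{M'}=t_{I(j)}[\![s/u]\!]_M$ is proved by nested induction: outer induction on the depth of the call in the $M$-derivation, using the inductively established fact that $M'_{(r,J)}(B(s/u))$ equals $M_r(s/u)$ after the $\theta_J$-renaming, and inner induction on tree structure, exploiting that $\Theta_{\vec p}$ replaces each $\<r,x_i>(t_1,\ldots,t_l)$ by $\<(r,I_r),x_i>(t_{I_r(1)},\ldots,t_{I_r(n)})$ where $I_r$ precisely captures the parameters of $r$ that affect the output.

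Given the correspondence, the lemma follows immediately. Fix $s\in L$, $u\in V(B(s))=V(s)$, $(q,I)\in Q'$ with $|I|=m'$, $j\in[m']$, and $\xi_1,\xi_2\in\ct_{M'}(B(s),u,(q,I))$. Choose preimages $\xi'_1,\xi'_2\in\ct_M(s,u,q)$ with $\Phi(\xi'_i)=\xi_i$. Since $M$ has the dynamic FV property on $L$ and $I(j)\in[m]$, we have $\xi'_1/I(j)[\![s/u]\!]_M=\xi'_2/I(j)[\![s/u]\!]_M$. Combining this with the semantic equivalence $\xi_i/j[\![s'/u]\!]_{M'}=\xi'_i/I(j)[\![s/u]\!]_M$ supplied by the correspondence yields $\xi_1/j[\![s'/u]\!]_{M'}=\xi_2/j[\![s'/u]\!]_{M'}$, as required.

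The main obstacle is verifying the semantic-equivalence part of the correspondence claim, which requires delicate bookkeeping to track how the occurrence sets $I_r$ propagate through nested state calls during $\Theta_{\vec p}$-rewriting and how the outer $\theta_I$-renaming interacts with the parameter substitutions inherent in $[\![\cdot]\!]$. A related subtle point is that $\Phi$ need not be injective, since different $M$-call trees may coincide after parameter projection; however, only surjectivity is needed for the reduction. Once the correspondence is carefully formulated, its verification largely echoes the semantic preservation argument implicit in the proof of Lemma~\ref{lem:FVo-ourFV1}, now strengthened to account for the behaviour of call trees rather than only full outputs.
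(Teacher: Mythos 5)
Your proposal is correct and takes essentially the same route as the paper: both rest on the two properties of the nondeleting construction (that $M'(B(s)[u\leftarrow x])$ is a second-order relabeling of $M(s[u\leftarrow x])$ dropping non-occurring arguments, and that $M'_{(q,I)}(B(s)/u)$ equals $M_q(s/u)$ up to the $\theta_I$-renaming), from which a node/call-tree correspondence is extracted and the dynamic FV property is transferred. The paper phrases the correspondence as an injective node map $\psi:V(M'(s''))\to V(M(s'))$ while you phrase it as a surjection on call trees in the opposite direction, but the content and the final transfer argument are identical.
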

\begin{proof}
Let $s\in L$ and $u\in V(s)$. Let $s'=s[u\leftarrow x]$ and $s''=B(s)[u\leftarrow x]$. 
Let $p_0=h_\sigma(p_1,\ldots,p_k)$ and $(\sigma, p_1,\ldots,p_k)=B(s)[u]$. 
Since the construction of the nondeleting normal form is the same as one given in Lemma 6.6 in \cite{EM99},   
we get the following claim from Claims 2 and 3 in the proof of Lemma 6.6 in \cite{EM99}.  
\begin{claim} The following two properties holds. 
\begin{enumerate}
\item\label{cl:del-prop1}
$M'(s'') = M(s')[\![\<q, x>\leftarrow \<(q, I),x>(y_{I(1)},\ldots, y_{I(n)})\mid I=p_0(q), n=|I|]\!]$. 
\item\label{cl:del-prop2}
$M'_{(q,I)}(B(s)/u)\theta_I=M_q(s/u)$ for every $q\in Q$ where $I=p_0(q)$. 
\end{enumerate}
\end{claim}
From the above claim, we get the injective mapping $\psi: V(M'(s''))\to V(M(s'))$ such that 
\begin{itemize}
 \item $\psi(\varepsilon) = \varepsilon$,  
 \item if $M'(s'')[w]\in \Delta^{(m)}$ then $M(s')[\psi(w)]=M'(s'')[w]$ and $\psi(wj)=\psi(w)j$ for $j\in [m]$, and 
 \item if $M'(s'')[w]=\<(q, I), x>\in \<Q', \{x\}>$ then $M(s')[\psi(w)]=\<q,x>$ and $\psi(wj)=\psi(w)I(j)$ for every $j\in [|I|]$. 
\end{itemize}
Then, the following property trivially holds. 
\begin{claim}\label{cl:del-prop1-general}
$M'(s'')/w = M(s')/\psi(w)\Theta_{p_0}$ for every $w\in V(M'(s''))$. 
\end{claim}

Let $w_1, w_2\in V_{\<Q', \{x\}>}(M'(s''))$ such that $M'(s'')[w_1]=M'(s'')[w_2]$, and let $\<(q,I), x>=M'(s'')[w_1]$. 
It follows from the third property of $\psi$ that $M(s')[\psi(w_1)]=M(s')[\psi(w_2)]=\<q,x>\in V_{\<Q,\{x\}>}(M(s'))$. 
Since $M$ has the dynamic FV property on $L$, 
$M(s')/\psi(w_1)j[\![s/u]\!]=M(s')/\psi(w_1)j[\![s/u]\!]$ for every $j\in [\rank_M(q)]$. 
By the first claim 
we get that $M'(s'')/w_1j'[\![B(s)/u]\!]=M'(s'')/w_2j'[\![B(s)/u]\!]$ for every $j'\in [|I|]$. 
Hence, $M'$ has the dynamic FV property on $B(L)$. 
\QED
\end{proof}

Next, we give a construction of the nonerasing normal form of an $\MTT$ in the style of this paper according to the proof of Lemma~7.11 of \cite{EM99}. 
Let $M=(Q,\Sigma,\Delta,q_0,R)$ be a nondeleting $\MTT$.  
For $p_1,\ldots,p_k\subseteq Q^{(1)}$, let $\Theta_{p_1\ldots p_k}=[\![\<q',x_i>\leftarrow y_1\mid \<q',x_i>\in \<Q,X_k>, q'\in p_i]\!]$. 
We first define a $\BREL$ $B=(2^{Q^{(1)}},\Sigma,\Sigma_M, 2^{Q^{(1)}}, R_B)$
where $\Sigma_M=\{[\sigma, p_1\ldots p_k]^{(k)}\mid \sigma\in \Sigma^{(k)}$, $k\geq 0$, $p_1,\ldots,p_k\subseteq Q^{(1)}\}$. 
For $\sigma\in \Sigma^{(k)}$ and $p_1,\ldots,p_k\subseteq Q^{(1)}$, 
let the rule 
\[
\sigma(p_1(x_1),\ldots,p_k(x_k))\to p([\sigma,p_1\ldots p_k](x_1,\ldots,x_k))
\]
be in $R_B$ where $p=\{q\in Q\mid \rhs_M(q,\sigma)\Theta_{p_1\ldots p_k}=y_1\}$. 
We construct $M'=(Q, \Sigma_M, \Delta, q_0, R')$. 
For $q\in Q^{(m)}$, $[\sigma, p_1\ldots p_k]\in \Sigma_M^{(k)}$, and $k, m\geq 0$, let the rule
\[
\<q, [\sigma, p_1\ldots p_k](x_1,\ldots, x_k)>(y_1,\ldots, y_m)\to \zeta
\]
be in $R'$ where $\zeta=\rhs_M(q,\sigma)\Theta_{p_1\ldots p_k}$ if $\rhs_M(q,\sigma)\Theta_{p_1\ldots p_k}\neq y_1$, 
and otherwise $\zeta=\bot(y_1)$ with  $\bot\in \Delta^{(1)}$. 

\begin{lemma}\label{lem:NonErasingPrvDFV}
\rm
Let $L\subseteq T_\Sigma$ and let $M=(Q,\Sigma,\Delta,q_0,R)$ be a nondeleting $\MTT$ that has the dynamic FV property on $L$. 
Let $B\com M'$ be the nondeleting and nonerasing $\MTTR$ obtained from $M$ by the construction given above. 
Then $M'$ has the dynamic FV property on $B(L)$. 
\end{lemma}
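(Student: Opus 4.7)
The plan is to follow the structure of the proof of Lemma~\ref{lem:NonDelPrvDFV}: I would establish a positional correspondence between the call trees of $M'$ on $B(s)$ and those of $M$ on $s$ for $s \in L$, then transport the dynamic FV property of $M$ across this correspondence.

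First, I would invoke the correctness of the nonerasing normal form from Lemma~7.11 of \cite{EM99}, which gives $M'_q(B(s)) = M_q(s)$ for every $s \in T_\Sigma$ and $q \in Q$. The crucial special case is that $M_q(s) = y_1$ whenever $q$ has rank~$1$ and belongs to the $B$-state $p$ reached at the root of $s$; this is exactly the condition under which $\Theta_{p_1 \dots p_k}$ rewrites $\<q, x_i>(t)$ to $t$, so the elimination of erasing calls in $M'$'s rules is semantically sound. In particular, the dummy $\bot(y_1)$-rules introduced when $\rhs_M(q,\sigma)\Theta_{p_1\ldots p_k}=y_1$ are never fired in productive derivations, since any external call to such a $q$ on that subtree is already removed by $\Theta$ in the caller's rule. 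Next, fixing $s \in L$ and $u \in V(s)$, and writing $\tilde{s} = s[u \leftarrow x]$ and $\widetilde{s'} = B(s)[u \leftarrow x]$, I would construct an injective mapping $\psi$ from the state-call positions of $M'_{q_0}(\widetilde{s'})$ into those of $M_{q_0}(\tilde{s})$ satisfying $M'_{q_0}(\widetilde{s'})[w] = M_{q_0}(\tilde{s})[\psi(w)]$ and
\[
(M'_{q_0}(\widetilde{s'})/wj)\,[\![B(s)/u]\!] \;=\; (M_{q_0}(\tilde{s})/\psi(w)j)\,[\![s/u]\!]
\]
for every $j \in [\rank_M(q)]$, where $\<q, x> = M'_{q_0}(\widetilde{s'})[w]$. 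The mapping $\psi$ would be built by induction on the derivation of $M'_{q_0}(\widetilde{s'})$, in parallel with the derivation of $M_{q_0}(\tilde{s})$: whenever the $M$-derivation applies a rule whose right-hand side contains an erasing rank-$1$ call $\<q', x_i>(t)$, the $\Theta$-substitution in the corresponding $M'$-rule inlines $t$, so $\psi$ must descend through this inlined argument to reach the next surviving state-call.

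With this claim in hand, the conclusion would follow directly: given $\xi_1, \xi_2 \in \ct_{M'}(B(s), u, q)$ occurring at positions $w_1, w_2$ in $M'_{q_0}(\widetilde{s'})$, the images $\psi(w_1), \psi(w_2)$ are positions in $M_{q_0}(\tilde{s})$ both labeled $\<q, x>$, and the dynamic FV property of $M$ on $L$ yields
\[
(M_{q_0}(\tilde{s})/\psi(w_1)j)\,[\![s/u]\!] \;=\; (M_{q_0}(\tilde{s})/\psi(w_2)j)\,[\![s/u]\!].
\]
Applying the key claim to both sides then transports this equality to $(\xi_1/j)\,[\![B(s)/u]\!] = (\xi_2/j)\,[\![B(s)/u]\!]$, which is exactly the dynamic FV property for $M'$ on $B(L)$.

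The main obstacle is the construction and correctness of $\psi$: since $\Theta$ eliminates state calls entirely, the positional structures of $M'_{q_0}(\widetilde{s'})$ and $M_{q_0}(\tilde{s})$ are not in a straightforward bijection, and one must carefully compose the $\Theta$-inlinings across rule-applications along each derivation path. The reconciliation would rest on the identity $M_{q'}(s/i)[y_1 \leftarrow t] = t$, which is valid whenever $q'$ is erasing on $s/i$ (as established via the semantic equivalence from the first step); combined with $M_{q'}(s/i) = M'_{q'}(B(s)/i)$, this ensures that the subtree fragments arising from inlined erased calls on the $M'$-side are exactly what the $[\![s/u]\!]$-substitution would produce on the $M$-side at the deleted $\<q', x>$-positions, so the two sides of the displayed equality in the key claim genuinely coincide.
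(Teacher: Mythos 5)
Your proposal is correct and takes essentially the same route as the paper's (very terse) proof, which likewise rests on the two facts that $M'(s'')$ equals $M(s')$ with erasing calls $\langle q,x\rangle(t)$ inlined to $t$ (formally, $M'(s'') = M(s')[\![\langle q, x\rangle\leftarrow y_1 \mid M_q(s/u)=y_1]\!]$) and that $M'_q(B(s)/u)=M_q(s/u)$ for non-erasing $q$, and then transports the dynamic FV property through a positional correspondence $\psi$ as in the proof of Lemma~\ref{lem:NonDelPrvDFV}. The only slip is that your opening identity $M'_q(B(s)) = M_q(s)$ for \emph{every} $q\in Q$ is too strong --- for erasing $q$ the dummy rule yields $\bot(y_1)\neq y_1$ --- but your subsequent treatment of inlined erased calls already supplies the needed restriction to $M_q(s)\neq y_1$, so the argument goes through.
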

\begin{proof} From the construction, we have the following properties. 
\begin{enumerate}
\item \label{cl:ers-prop1}
$M'(s'') = M(s')[\![\<q, x>\leftarrow y_1 \mid M_q(s/u)=y_1]\!]$. 
\item \label{cl:ers-prop2}
$M'_q(B(s)/u)=M_q(s/u)$ for every $q\in Q$ such that $M_q(s/u)\neq y_1$. 
\end{enumerate}

Similar to the discussion in the last part of the proof of Lemma~\ref{lem:NonDelPrvDFV} using a correspondence between $V(M'(s''))$ and $V(M(s'))$, 
we obtain the lemma by the above properties. 
\QED
\end{proof}

It follows from Lemmas~\ref{lem:NonDelPrvDFV} and \ref{lem:NonErasingPrvDFV} that for every $\MTTRDFV$ (resp. $\MTTUDFV$) 
there exists a nondeleting and nonerasing $\MTTRDFV$ (resp. $\MTTUDFV$) equivalent with it.

Next we show that the tree translation realized by any $\MTTUDFV$ can be expressed by an $\ATTU$.  
\begin{lemma}\label{lem:MTTDFV-TREL+ATT}
\rm
$\CMTTUDFV\subseteq \CATTU$. 
\end{lemma}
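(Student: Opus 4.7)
The plan is to establish the chain $\CMTTUDFV \subseteq \CMTTUFV \subseteq \CATTU$, where the second inclusion is already part of Theorem~\ref{thm:ATTR=MTTRF}. So the work is to convert an $\MTTUDFV$ into an equivalent $\MTTUFV$, i.e., to upgrade the dynamic (semantic) FV property into the static (syntactic) one by absorbing enough information into the look-around.

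Starting from a decomposition $E \com M \in \CMTTUDFV$ with $E \in \CTRREL$ and $M$ an $\MTT$, the first step is to normalize. Using Lemma~\ref{lem:NonDelPrvDFV} and then Lemma~\ref{lem:NonErasingPrvDFV}, I would replace $M$ by an equivalent nondeleting and nonerasing MTT while preserving the dynamic FV property; the additional bottom-up relabelings introduced by these lemmas are absorbed into the left-hand $\CTRREL$ factor using closure of $\CTRREL$ under left-composition with $\CBREL$ (as already invoked for Theorem~\ref{thm:ATTR=MTTRF}). After this step, I may assume $M$ is nondeleting and nonerasing, and the look-around is some $E_1 \in \CTRREL$.

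The second and main step is to enlarge the look-around so that the dynamic FV property becomes the static FV property. Given an input $t \in E_1(T_\Sigma)$ and a node $u$, the dynamic FV property says that all elements of $\ct_M(t,u,q)$ agree componentwise after the substitution $[\![t/u]\!]$. My plan is to enrich $E_1$ into a $\CTRREL$ relabeling $E_2$ that annotates each node $u$ with, for every state $q$ reachable at $u$ and every parameter index $j$, a canonical finite representative of the common semantic value of the $j$th argument tree. The bottom-up pass computes the set of $q$ that can be called at $u$ and the relevant equivalence data of the subtree $t/u$ under each state, while the top-down pass transmits, from the parent, which syntactic argument trees could possibly arrive at $u$ (a finite amount of information, because the right-hand sides of $M$ are finite and the number of states is finite). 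Using this annotation, one can split each state $q$ into copies $(q,c)$ indexed by the annotation $c$, and replace the syntactically varying argument trees inside the right-hand sides by a single chosen representative for each class; semantic correctness is guaranteed by the dynamic FV property. The resulting $\MTTU$ $M_2$ (together with the new $E_2$) then satisfies the static FV property with the parameter renaming $\rho$ that maps the $j$th parameter of $(q,c)$ to its canonical slot.

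The final step is immediate: $E_2 \com M_2 \in \CMTTUFV$ realizes the same translation as $E \com M$, and by Theorem~\ref{thm:ATTR=MTTRF} we have $\CMTTUFV = \CATTU$, which gives $\tau_{E \com M} \in \CATTU$ as required. The main obstacle is the second step, specifically showing that the canonical representative of each semantic equivalence class can indeed be selected by a $\CTRREL$ relabeling: the dynamic FV property gives uniqueness of the semantic value but not of a syntactic witness, so one must argue that the finitely many possible syntactic shapes arising from the right-hand sides of $M$, together with the regular information already propagated, suffice to pick a witness consistently across all call sites. This is where a careful bookkeeping of the call trees in $\ct_M(t,u,q)$ and the verification that their relevant invariants form a finite-state property of the input is needed.
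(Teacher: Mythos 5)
Your overall route differs from the paper's: you try to reduce the dynamic FV property to the static FV property and then invoke Theorem~\ref{thm:ATTR=MTTRF}, whereas the paper never passes through $\CMTTUFV$ at all. After the same normalization step (nondeleting and nonerasing, via Lemmas~\ref{lem:NonDelPrvDFV} and~\ref{lem:NonErasingPrvDFV}), the paper builds the ATT \emph{directly} (Definition~\ref{def:constructAttDFV}): a top-down relabeling annotates each node with the reachable state set $\Sts_M(\{q_0\},s,u)$; the ATT has a synthesized attribute per state and an inherited attribute per pair $\langle q,j\rangle$; for each $(\sigma',i,q,j)$ one representative argument tree is chosen (first occurrence in a fixed post-order traversal) and only its top part $\Top'$ becomes the rule for $\langle q,j\rangle(\pi i)$. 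Correctness is then proved \emph{semantically} using the dynamic FV property (Lemma~\ref{lem:equivalence}), non-circularity via a size-decrease argument on evaluated parameter trees (Lemmas~\ref{lem:size-increase2} and~\ref{lem:non-circularity}), and Theorem~15 of~\cite{BE00} repairs circularity outside the image of the relabeling.

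The gap in your proposal is exactly the step you flag as ``the main obstacle,'' and it is not a bookkeeping issue but the entire content of the lemma. The static FV property demands \emph{syntactic} equality (up to the renaming $\Psi^\rho$) of argument trees occurring in the right-hand sides, whereas the dynamic FV property only gives equality of \emph{evaluated} call trees $\xi_1/j[\![s/u]\!]=\xi_2/j[\![s/u]\!]$, and only for the same state $q$ and only on reachable configurations. To upgrade this you propose to splice a canonical syntactic representative into every call site. But a representative taken from $\rhs_M(q'',\sigma)$ may contain parameters $y_l$ that refer to the parameter list of $q''$; transplanted into $\rhs_M(q''',\sigma)$ these symbols are rebound to the arguments of $q'''$, and the $\sim^{q_1,q_2}_\rho$ condition then forces compatibility constraints between the parameters of $q''$ and $q'''$ that cascade upward through the rules. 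The dynamic FV property does not supply these cross-state constraints, so it is unclear that the rewriting preserves the translation, that it ever reaches a fixed point, or that the required annotations form a regular (finite-state) property of the input --- note that the paper shows deciding the dynamic FV property is as hard as ATT equivalence, which is a warning sign that the semantic equivalence classes you want to canonicalize are not finite-state objects. The paper sidesteps all of this by using the representative only to generate an ATT rule (a finite ``top part'') and deferring all equality reasoning to the semantic level in the correctness proof; it also still needs the separate size-decrease argument for non-circularity, which has no counterpart in your outline. As written, your proof establishes the easy reductions around the core and leaves the core unproved.
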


%
Henceforth, we assume w.l.o.g. that an $\MTTUDFV$ is nondeleting and nonerasing. 
We define the notion of reachable state.
\begin{definition}
\rm
Let $Q'\subseteq Q$ and $s\in T_\Sigma(X)$. 
For $u\in V(s)$, the \emph{set of reachable states of $s$ at $u$} from $Q'$, denoted by $\Sts_M(Q',s,u)$, 
is the set $\bigcup_{q'\in Q'} \{q\in Q\mid \exists v\in V(\xi_{q'}).~\xi_{q'}[v]=\<q, x>\}$
where $\xi_{q'}=M_{q'}(s[u\leftarrow x])$. 
%
\end{definition}
The idea for proving that for every $\MTTDFV$ there exists an equivalent $\ATTU$ is as follows. 
The dynamic FV property demands (semantic) equivalence of parameter trees
only for those states that are processing a given input node simultaneously. 
Therefore, we use the top-down relabeling to add to each input node $u$ the information
which states are processing $u$, i.e., the set $\Sts_M(\{q_0\},s,u)$, where
$s$ is the input tree. Then, we construct for regular look-around a composition 
of the bottom-up relabeling of the given $\MTTRDFV$ and the top-down relabeling. 
Further, the constructed $\ATT$ has a synthesized attribute for each state $q$ of the $\MTT$, and
an inherited attribute for each pair $\langle q,j\rangle$, where $j\in[\rank_Q(q)]$.
Last but not least, of those parameter trees which are known to be equivalent, we pick one 
and construct from its ``top-part'' a rule for the corresponding inherited attribute. 
The top-part is defined similarly to the definition of $\Top$ in Definition~\ref{def:consistent},
except that if we are computing the top-part of a parameter tree of a state appearing in $\rhs_M(q,\sigma)$, then an occurrence
of the parameter $y_j$ is now replaced by the inherited attribute $\langle q,j\rangle$.
%
The construction from $\MTTU$s with the dynamic FV to $\ATTU$ following the idea is given in Definition~\ref{def:constructAttDFV}. 

To illustrate our idea, consider the above example MTT $M_{\text{dyn}}$. 
We construct an ATT that has a rule set $R_{[\sigma,Q']}$ for each label $\sigma$ and subset $Q'$ 
of states reachable from $\{q_0\}$ (see Figure~\ref{fig:ex-dfv-att}). 
Here we focus on $Q'=\{q_0\}$ and $\sigma=a$. 
For the synthesized attribute $q_0$ in $Q'$, we construct its rule as the top part of $\rhs_M(q_0,a)$:
$q_0(\pi) \to f(q_1(\pi 1), q_1(\pi 1))$, 
where state calls $\<q, x_1>$ were replaced by $q(\pi 1)$. 
For the inherited attribute $\<q_1, 1>$ which corresponds to the first parameter of $q_1$, 
we construct the rhs of $\<q_1,1>(\pi 1)$ by choosing a candidate from the first arguments of $\<q_1,x_1>$ 
appearing in $\rhs_M(q', \sigma)$ for $q'\in Q'$. Here we choose $\<q_2,x_1>$ and 
construct the rule
$\<q_1,1>(\pi 1)\to q_2(\pi 1)$, 
where $\<q_2,x_1>$ was replaced by the synthesized attribute $q_2(\pi 1)$. 
The non-dummy rules of the ATT are shown in Figure~\ref{fig:ex-dfv-att}.
%
\begin{figure}[t]
\centering
\[
\begin{array}{rlcl}
R_{[a,\{q_0\}]}:  
& q_0(\pi) 			&\to& f(q_1(\pi 1), q_1(\pi 1))\\
& \<q_1,1>(\pi 1)	&\to& q_2(\pi 1)\\
& \<q_3,1>(\pi 1) 	&\to& e\\
R_{[a, \{q_1,q_2,q_3\}]}:
& q_1(\pi) 			&\to& a(q_1(\pi 1))\\
& q_2(\pi) 			&\to& a(a(q_2(\pi 1)))\\
& q_3(\pi) 			&\to& a(q_3(\pi 1))\\
& \<q_1,1>(\pi 1)	&\to& b(\<q_1,1>(\pi))\\
& \<q_3,1>(\pi 1)   &\to& a(\<q_3,1>(\pi))\\
R_{[e,\{q_0\}]}:
& q_0(\pi)       	 	&\to& e\\
R_{[e, \{q_1,q_2,q_3\}]}:
& q_1(\pi)			&\to& \<q_1,1>(\pi 1)\\
& q_2(\pi) 			&\to& e\\
& q_3(\pi) 			&\to& \<q_3,1>(\pi 1)
\end{array}
\]
\caption{The ATT constructed from the MTT $M_{\text{dyn}}$ with the dynamic FV property}
\label{fig:ex-dfv-att}
\end{figure}

\begin{definition}\label{def:constructAttDFV}
\rm
Let $H=(P,\Sigma,\Gamma,F,R_H)$ and $M=(Q,\Gamma,\Delta,q_0,R)$ be a $\TRREL$ and an $\MTT$ such that 
$H\com M$ has the dynamic FV property. 
Let $\bot$ be an arbitrary symbol in $\Delta^{(0)}$. 

\begin{enumerate}
\item
The top-down relabeling $E(M)=(2^Q,\Gamma,\Gamma_M,\{q_0\},R_E)$ is defined as follows. 
 Let $\Gamma_M=\Gamma\times 2^Q$. 
 For every $Q'\subseteq Q$ and $\sigma\in \Gamma^{(k)}$, 
 let $\sigma'=[\sigma, Q']$ be in $\Gamma_M^{(k)}$ and
 let
$\<Q', \sigma(x_1,\ldots,x_k)>\to \sigma'(\<Q_1,x_1>,\ldots,\<Q_k,x_k>)$ 
 be a rule in $R_E$, where $Q_i=\Sts_M(Q', \sigma(x_1,\ldots,x_k), i)$ for $i\in [k]$. 
It should be clear that for every input tree $s\in T_\Gamma$ and node $u\in V(s)$, 
$D(s)[u]=[s/u, \Sts_M(\{q_0\},s,u)]$. 

\item
The ATT $A(M)=(S, I, \Gamma_M, \Delta, q_0, R')$ is defined as follows. 
$S=Q$, $I=\{\langle q, j\rangle \mid q\in Q^{(m)}, m\geq 0, j\in [m]\}$, and
$R'=\bigcup_{\sigma'\in \Gamma_M} R'_{\sigma'}$ where $R'_{\sigma'}$ is defined below.  
Let $\sigma'=[\sigma,Q']\in\Gamma_M$ and $k=\rank_\Sigma(\sigma)$. 
For every $i\in[k]$, $q\in Q$, and $j\in[\rank_Q(q)]$, 
we fix trees $\zeta^{\sigma'}_{i,q,j}$ that will be used to define the $\rhs_A(\sigma',\<q, j>(\pi i))$.  
---
Let $F=\emptyset$.
We fix an order $q_1,q_2,\dots,q_n$ on the states in $Q'$.
We now traverse, starting with $\nu=1$ the trees $\zeta=\rhs_M(q_\nu,\sigma)$ in \emph{post-order}. 
Whenever a state call $\<q,x_i>$ (at node $v$ of $\zeta$) is
encountered and $\<q,x_i>\not\in F$, then we change $F$ to $F\cup\{\<q,x_i>\}$ and
we define $\zeta^{\sigma'}_{i,q,j}=\Top'_{q_\nu}(\zeta/vj)$ for all $j\in[\rank_Q(q)]$. 
The function 
$\Top'_{q}:T_{\Delta\cup \langle Q, X\rangle}(Y_m)\to T_\Delta(\{\alpha(\pi i)\mid\alpha\in S, 
i\geq 0\}\cup\{\beta(\pi)\mid\beta\in I\})$ for $q\in Q^{(m)}$ is defined as:  
$\Top'_q(\zeta)$ is obtained from $\Top(\zeta)$ by replacing $y_j\in Y_m$ with $\<q,j>$. 
%
For every $\sigma'=[\sigma,Q']\in \Gamma_M^{(k)}$ with $k\geq 0$, 
we construct the set $R'_{\sigma'}$ in the following way:
\begin{itemize}
\item
For $q\in Q'$, let the rule $q(\pi) \to \Top'_{q}(\rhs_M(q,\sigma))$ be in $R'_{\sigma'}$. 
For $q\in Q-Q'$, let the rule $q(\pi) \to \bot$ be in $R'_{\sigma'}$. 

\item 
For every $i\in [k]$, $q\in Q^{(m)}$, and $j\in [\rank_Q(q)]$, 
if $\zeta^{\sigma'}_{i,q,j}$ is defined then we add $\<q, j>(\pi i)\to \zeta^{\sigma'}_{i,q,j}$ to $R'_{\sigma'}$;
otherwise add $\<q, j>(\pi i)\to \bot$ to $R'_{\sigma'}$. 
\end{itemize}
\end{enumerate}
\end{definition}

Henceforth, we assume that  $\MTTU$ $H\com M$ has the dynamic FV property where and $M$ is non-deleting and 
non-erasing. Let $E=E(M)$, $E_A=H\com E$, and $A=A(M)$. To prove Lemma~\ref{lem:MTTDFV-TREL+ATT}, 
we show the non-circularity of $A$ on $E_A(T_\Sigma)$ (Lemma~\ref{lem:non-circularity}) and then 
the equivalence of $H\com M$ and $E_A\com A$ (Lemma~\ref{lem:equivalence}). 
Note that the ATT $A$ may not be non-circular on $T_\Gamma-E_A(T_\Sigma)$. 
However, by Theorem~15 of \cite{BE00}, we can construct a $\TRREL$ $E'$ and an $\ATT$ $A'$ 
such that $A'$ is non-circular on all input trees and $\tau_{E'}\com \tau_{A'}=\tau_{E_A}\com \tau_{A}$. 
Thus,  $\CMTTUDFV\subseteq \CATTU$ follows the two lemmas. 

Let $H=(P,\Sigma,\Gamma,F,R_H)$, $M=(Q,\Gamma,\Delta,q_0,R)$, and $[\![s]\!]=[\![\<r,x> \leftarrow M_r(s)\mid r\in Q]\!]$ for $s\in T_\Gamma$.
Let $A=(S, I, \Gamma_M, \Delta, q_0, R')$. 
First, we give a proof of the non-circularity of $A$ on the image of $E_A$. 
To prove it, we use the following lemma. 
\begin{lemma}\label{lem:size-increase2}
\rm
The following properties hold. 
\begin{enumerate}
\item For every $s\in T_\Gamma$, $s_0\in H(T_\Sigma)$, $u\in V(s_0)$ such that $s_0/u=s$, 
$\langle q_1,j\rangle \in I$, and $q_2\in S$, 
if there exists a path from $(\<q_1,j>,\varepsilon)$ to $(q_2,\varepsilon)$ in $D_{A}(E_{Q'}(s))$ 
where $Q'=\Sts_M(\{q_0\},s_0,u)$, 
then $\ct_M(s_0,u,q_1)\neq \emptyset$ and $\size(t_1/j[\![s]\!])<\size(t_2[\![s]\!])$ for every 
$t_1\in \ct_M(s_0,u,q_1)$ and $t_2\in \ct_M(s_0,u,q_2)$. 
\item For every $s\in T_\Gamma$, $s_0\in H(T_\Sigma)$, $u\in V(s_0)$ 
such that $s_0/u=s$ and $s_0[u]=\sigma \in \Gamma^{(k)}$ with $k>0$,
$\<q_1,j> \in I$, $a\in S\cup I$, and $i, i'\in [k]$,
if there exists a path from $(\<q_1, j>,i)$ to $(a,i')$ in $D_{A}(E_{Q'}(s))$ 
where $Q'=\Sts_M(\{q_0\},s_0,u)$, then
$\ct_M(s_0,ui,q_1)\neq \emptyset$ and
\begin{itemize}
 \item if $a=q_2\in S$ then $\size(t_1/j[\![s/i]\!])<\size(t_2[\![s/i']\!])$
          for every $t_1\in \ct_M(s_0,ui,q_1)$ and $t_2\in \ct_M(s_0,ui',q_2)$. 
 \item if $a=\<q_2, j'>\in I$ then $\size(t_1/j[\![s/i]\!])<\size(t_2/j'[\![s/i']\!])$
          for every $t_1\in \ct_M(s_0,ui,q_1)$ and $t_2\in \ct_M(s_0,ui',q_2)$. 
\end{itemize}
\end{enumerate}
\end{lemma}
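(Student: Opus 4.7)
The plan is to prove (1) and (2) simultaneously by induction on the path length $n \geq 1$. The underlying intuition is that each edge in $D_A(E_{Q'}(s))$ arises from an occurrence of one attribute inside the rhs of another attribute's rule, and by the construction in Definition~\ref{def:constructAttDFV} these rhs's are built from $\Top'$ applied to subtrees of $\rhs_M$. Hence an edge into an inherited attribute $\<q,j>$ corresponds to a $y_j$-occurrence in some fragment of an MTT rhs, while an edge into a synthesized attribute $q$ corresponds to a state-call occurrence in $\rhs_M(q,\sigma)$. The dynamic FV property equates the semantic value $t_1/j[\![s]\!]$ across all candidates $t_1 \in \ct_M(s_0,u,q)$, and the nondeleting plus nonerasing assumptions force every such $y_j$-occurrence to lie strictly below the root of $M_q(s)$, producing a strict size decrease whenever that parameter is substituted.

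For the base case ($n=1$) of (1), I will observe that the direct edge $(\<q_1,j>,\varepsilon)\to(q_2,\varepsilon)$ must arise from an occurrence of $\<q_1,j>(\pi)$ in $\Top'_{q_2}(\rhs_M(q_2,s[\varepsilon]))$; since $\Top'_{q_2}$ replaces $y_j$ by $\<q_2,j>$, this forces $q_1=q_2$ and $y_j$ to occur in $\rhs_M(q_2,s[\varepsilon])$, and it also forces $q_2\in Q'$ (otherwise the rule's rhs would be $\bot$ and carry no state calls), yielding $\ct_M(s_0,u,q_1)\neq\emptyset$. Any $t_2=\<q_1,x>(\xi_1,\ldots,\xi_m)\in\ct_M(s_0,u,q_1)$ then satisfies $t_2[\![s]\!]=M_{q_1}(s)[y_l\leftarrow \xi_l[\![s]\!]\mid l\in[m]]$, and by nondeleting/nonerasing $y_j$ occurs strictly below the root of $M_{q_1}(s)$, so $\xi_j[\![s]\!]$ is a proper subtree of $t_2[\![s]\!]$; combined with the dynamic FV equality $t_1/j[\![s]\!]=t_2/j[\![s]\!]$ this yields the strict inequality. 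The base case of (2) will be analogous, splitting on whether the edge arises at the root (via $\zeta^{\sigma'}_{i,q,j_0}=\Top'_{q_\nu}(\zeta/vj_0)$ for some chosen call $\<q,x_i>$ at node $v$ in $\rhs_M(q_\nu,\sigma)$) or inside the child-$i$ subtree (via $\Top'_a$ at that node).

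For the inductive step I will decompose the path into its first edge $v_0\to v_1$ plus a sub-path of length $n-1$ and apply the induction hypothesis. The case analysis hinges on where $v_1$ lives: if $v_1$ stays at root-level positions, IH applies directly; if $v_1$ descends into the $i$-subtree, I use the identity $\Sts_M(\{q_0\},s_0,u\cdot i)=\Sts_M(Q',\sigma(x_1,\ldots,x_k),i)$ (where $\sigma=s[\varepsilon]$, a direct consequence of the definition of $E(M)$) to switch to the smaller instance $(s/i,u\cdot i)$, invoke IH on the shorter sub-path, and re-compose via the base-case analysis of the entering edge. The size inequalities chain by transitivity, while call trees at child positions are related to call trees at $u$ by unfolding the nested MTT evaluation semantics one step.

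The main obstacle will be the bookkeeping in the inductive step: edges in $D_A$ can connect inherited and synthesized attributes across root-child boundaries in several ways, and each combination must be matched against the structure of MTT rhs's via $\Top'$. When the path alternates between root-level edges and descents into subtrees, the argument trees in $\ct_M(s_0,u\cdot i,q)$ must be identified with (semantic projections of) subtrees of call trees in $\ct_M(s_0,u,q')$ for appropriate $q'$. This identification is where the dynamic FV property is used a second time, letting us replace one representative call tree by another with the same semantic value so that the chain of size inequalities stays consistent.
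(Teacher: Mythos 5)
Your local arguments are sound and coincide with the paper's: the correspondence between edges of $D_{A}(E_{Q'}(s))$ and occurrences of attributes in $\Top'$-images of subtrees of $\rhs_M$, the use of the nondeleting and nonerasing assumptions to force $y_j$ strictly below the root of $M_q(s)$ (hence the strict size decrease), and the use of the dynamic FV property to make the inequalities independent of the chosen representatives in $\ct_M(\cdot,\cdot,\cdot)$. Your base case $n=1$ of Statement 1 is exactly the paper's argument.

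The gap is in the induction scheme. First, peeling off a single edge does not produce an instance of the lemma: from $(\langle q_1,j\rangle,i)$ the first edge leads in general to a node $(\gamma,i\cdot i'')$ with $i''\geq 1$, i.e., a node at depth two of $E_{Q'}(s)$, and neither Statement 1 nor Statement 2 says anything about paths starting there, so ``apply the IH to the sub-path of length $n-1$'' does not type-check. The decomposition has to be into maximal segments lying inside the subgraphs $D_{A}(E_{Q_i}(s_i))$ (each such segment runs from an inherited attribute at the root of $s_i$ to a synthesized attribute at the root of $s_i$ and is thus a Statement-1 instance for the subtree $s_i$), interleaved with single root-level edges coming from $R'_{[\sigma,Q']}$. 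Second, and more seriously, even with the correct decomposition an induction on path length alone is not well-founded: a Statement-2 path from $(\langle q_1,j\rangle,i)$ to $(q_2,i)$ that lies entirely inside $D_{A}(E_{Q_i}(s_i))$ is literally a Statement-1 instance for $s_i$ of the \emph{same} length $n$, so the sub-instance you invoke is not shorter, contrary to what your plan asserts; it only becomes shorter after a further reduction of that Statement-1 instance back to a Statement-2 instance of length $n-2$. The paper avoids both problems by making the outer induction structural on $s$ (so Statement 1 for the subtrees $s_i$ is always available as the outer induction hypothesis) and using path length only as an inner induction to chain the root-level edges within a Statement-2 path. Your scheme can be repaired by switching to a lexicographic measure such as (size of $s$, path length); as written, however, the induction does not terminate in the degenerate case above.
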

\begin{proof}
We prove Statements 1 and 2 by two (nested) inductions on the structure of $s$ as follows:
\begin{itemize}
 \item Base case: prove that Statement 1 holds for $s\in \Gamma^{(0)}$. Statement 2 holds for $s$ because $k=0$. 

 \item Induction step I: by using the induction hypothesis of Statement 1 on the subtrees of $s$, 
         prove that Statement 2 holds for $s\in T_\Sigma$ by induction on the length of the path in       
         $D_{A}(E_{Q'}(s))$.
 \item Induction step II: by using the fact that Statement 2 holds for $s\in T_\Gamma$, 
          prove that Statement 1 holds for $s$.   
\end{itemize}

\smallskip\noindent
\emph{Base case.} 
Let $s=\sigma\in \Gamma^{(0)}$. Let $s_0\in H(T_\Sigma)$ and $u\in V(s_0)$ such that $s_0/u=s$.  
Let $Q'=\Sts_M(\{q_0\},s_0,u)$, $\<q_1, j>\in I$, and $q_2\in S$.  
Assume that there exists a path from $(\<q_1,j>,\varepsilon)$ to $(q_2,\varepsilon)$ in 
$D_{A}(E_{Q'}(\sigma))$. 
Since $E_{Q'}(\sigma)=[\sigma,Q']\in \Gamma_M^{(0)}$,  
the path is only a direct edge from $(\<q_1, j>,\varepsilon)$ to $(q_2,\varepsilon)$ 
in $D_{A}([\sigma,Q'])$. 
From the construction of rules for labels with rank 0, 
the edge originates from the rule $q_2(\pi)\to \Top'_{q_2}(\rhs_M(q_2,\sigma))$ of $A$, and $q_2\in Q'$. 
The edge from $(\<q_1,j>,\varepsilon)$ to $(q_2,\varepsilon)$ exists only if 
$\<q_1, j>(\pi)$ occurs in the rhs of the rule of $A$. 
By the definition of $\Top'_{q_2}$, inherited attributes with states other than $q_2$ do not occur 
in the rhs of the rule.  Hence, $q_1$ must be equal to $q_2$. 
Since $q_2\in Q'$, we get $\ct_M(s_0,u,q_1)=\ct_M(s_0,u,q_2)\neq \emptyset$. 
Let $t\in \ct_M(s_0,u,q_2)$. 
Since $t[\varepsilon]=\<q_2, x>$, 
$t[\![s]\!]=\rhs_M(q_2,\sigma)[y_l\leftarrow t/l[\![s]\!]\mid l\in [\rank_Q(q_2)]]$.  
Since $M$ is nonerasing and nondeleting, $\size(t/j[\![s]\!])<\size(t[\![s]\!])$. 
By the dynamic FV property of $M$, we get $\size(t_1/j[\![s]\!])<\size(t_2[\![s]\!])$  
for every $t_1,t_2\in \ct_M(s_0,u,q_1)$. 

\bigskip\noindent
\emph{Induction step I.}
Let $s=\sigma(s_1,\ldots,s_k)$ and $k>0$. Let $s_0\in H(T_\Sigma)$ and $u\in V(s_0)$ such that $s_0/u=s$.  
Let $Q'=\Sts_M(\{q_0\},s_0,u)$, $\<q_1, j>\in I$, and $a\in S\cup I$. 
Let $i, i'\in [k]$. 
Assume that there exists a path from $(\<q_1,j>,i)$ to $(a,i')$ in $D_{A}(E_{Q'}(s))$. 
By the definition of $E$, $E_{Q'}(s)[\varepsilon]=[\sigma,Q']$ and 
$E_{Q'}(s)/i=E_{Q_i}(s_i)$ for $i\in [k]$  
where $Q_i=\Sts_M(Q', s, i)$. 

We can regard the path $w$ from $(\<q_1,j>,i)$ to a node $v\in (S\cup I)\times [k]$ in 
$D_{A}(E_{Q'}(s))$ as a sequence of nodes of $D_{A}([\sigma,Q'])$ such that every two consecutive nodes 
are connected by an edge of $D_{A}([\sigma,Q'])$, or a path of a subgraph 
$D_{A}(E_{Q_i}(s_i))$ for some $i\in [k]$. 
We denote by $\to_\varepsilon$ and $\to_i$ connections by an edge of $D_{A}([\sigma,Q'])$ 
and a path of a subgraph $D_{A}(E_{Q_i}(s_i))$, respectively. 
Then, we prove by induction on the length of $w$, by using the induction hypothesis of Statement 1 on the subtrees of $s$, that for every node $v$ in $(S\cup I)\times [k]$ on $w$, 
\begin{itemize}
 \item if $v=(\<q,j'>,i')$ where $\<q,j'>\in I$ and $i'\in [k]$ then 
          $\size(t_1/j[\![s_i]\!])<\size(t_2/j'[\![s_{i'}]\!])$ for every $t_1\in \ct_M( s_0, ui, q_1)$ and $t_2\in \ct_M( s_0, ui', q)$.  
 \item if $v=(q,i')$ where $q\in S$ and $i'\in [k]$ then 
          $\size(t_1/j[\![s_i]\!])<\size(t_2[\![s_{i'}]\!])$ for every $t_1\in \ct_M( s_0, ui, q_1)$ and $t_2\in \ct_M( s_0, ui', q)$.  
\end{itemize}

\smallskip\noindent
\emph{Base case IB. }
Let $w=(\<q_1,j>,i)\to_i (q,i)$. 
There is a path from $(\<q_1,j>,\varepsilon)$ to $(q,\varepsilon)$ in $D_{A}(E_{Q_i}(s_i))$. 
By the induction hypothesis of Statement 1, we have 
$\ct_M(s_0,ui,q_1)\neq \emptyset$ and 
$\size(t_1/j[\![s_i]\!])<\size(t_2[\![s_i]\!])$ 
for every $t_1\in \ct_M(s_0,ui,q_1)$ and $t_2\in \ct_M(s_0,ui,q)$. 

\smallskip\noindent
\emph{Induction step I1.} 
Let $w=(\<q_1,j>,i)\to^+ (\<q',j'>,i') \to_{i'} (q,i')$. 
Since $(\<q_1,j>,i)\to^+ (\<q',j'>,i')$, 
by induction hypothesis of the inner induction, 
we have $\ct_M(s_0,ui,q_1)\neq \emptyset$ and
$\size(t_1/j[\![s_i]\!])<\size(t_2/j'[\![s_{i'}]\!])$ 
for every $t_1\in \ct_M(s_0,ui,q_1)$ and $t_2\in \ct_M(s_0,ui',q')$. 
Since $(\<q',j'>,i') \to_{i'} (q,i')$, there is a path from 
$(\<q',j'>,\varepsilon)$ to $(q,\varepsilon)$ in $D_{A}(E_{Q_{i'}}(s_{i'}))$. 
By induction hypothesis of Statement 1, 
$\ct_M(s_0,ui',q')\neq \emptyset$ and 
$\size(t_2/{j'}[\![s_{i'}]\!])<\size(t_3[\![s_{i'}]\!])$ 
for every $t_2\in \ct_M(s_0,ui',q')$ and $t_3\in \ct_M(s_0,ui',q)$. 
Thus, $\size(t_1/j[\![s_i]\!])<\size(t_3[\![s_{i'}]\!])$ 
for every $t_1\in \ct_M(s_0,ui,q_1)$ and $t_3\in \ct_M(s_0,ui',q)$. 

\smallskip\noindent
\emph{Induction step I2.} 
Let $w=(\<q_1,j>,i)\to^+ (q',i'') \to_{\varepsilon} (\<q,j'>,i')$. 
From the construction of rules for the symbol $[\sigma,Q']$, 
the edge from $(q',i'')$ to $(\<q,j'>,i')$ originates from the rule 
$\<q,j'>(\pi i')\to \Top'_{q''}(\zeta/uj')$ 
such that $q''\in Q'$, $\zeta=\rhs_M(q'', \sigma)$, 
$\zeta[v]=\<q, x_{i'}>$ for some $v\in V(\zeta)$, 
and $q'(\pi i'')$ occurs in $\Top'_{q''}(\zeta/uj')$. 
By the definition of $\Top'_{q''}$, $\<q', x_{i''}>$ occurs in $\zeta/vj'$ at some node $v'$ such that 
every ancestor of $v'$ has a symbol in $\Delta$. 
Since $q''\in Q'$, $\ct_M(s_0,u,q'')\neq \emptyset$. 
Let $t\in \ct_M(s_0,u,q'')$. 
Let $[\![rhs]\!] = [\![\<r, p> \leftarrow \rhs_M(r, \sigma) \mid r\in Q]\!]$. 
For $l\in \{i',i''\}$ let $[\![\setminus l]\!] = [\![\<r,x_c> \leftarrow M_r(s/c) \mid r\in Q, c\in [k]-\{l\}]\!]$ and $[\![l]\!] = [\![\<r,x_l> \leftarrow \<r,x> \mid r\in Q]\!]$, and let $[\![s, l]\!]=[\![rhs]\!][\![\setminus l]\!][\![l]\!]$. 
Let $\eta_{i'}=t[\![s, i']\!]$ and then $\eta_{i'}=\zeta\psi[\![\setminus i']\!][\![i']\!]$ where 
$\psi=[y_l\leftarrow t/l[\![rhs]\!]\mid l\in [\rank_Q(q'')]]$. 
Then, there exists $\tilde{v}\in V(\eta_{i'})$ such that $\eta_{i'}/\tilde{v}=(\zeta/v)\psi[\![\setminus i']\!][\![i']\!]$. 
Since $M$ is nondeleting, $\eta_{i'}$ is a subtree of $M(s_0[ui'\leftarrow x])$. 
Thus, $\eta_{i'}/\tilde{v}\in \ct_M(s_0, ui', q) \neq \emptyset$. 
Let $\eta_{i''}=t[\![rhs]\!][\![\setminus i'']\!][\![i'']\!]$. By the same argument, 
$\eta_{i''}=\zeta\psi[\![\setminus i'']\!][\![i'']\!]$ and 
there exists $\tilde{v}'\in V(\eta_{i''})$ such that $\eta_{i''}/\tilde{v}'=(\zeta/vj'v')\psi[\![\setminus i'']\!][\![i'']\!]$. 
Also, $\eta_{i''}/\tilde{v}'\in \ct_M(s_0, ui'', q') \neq \emptyset$. 
Since $[\![\setminus i']\!][\![i']\!][\![s_{i'}]\!]=[\![\setminus i'']\!][\![i'']\!][\![s_{i''}]\!]$ and $M$ is nondeleting, 
$\eta_{i''}/\tilde{v}'[\![s_{i''}]\!]$ is a subtree of $\eta_{i'}/\tilde{v}j'[\![s_{i'}]\!]$. 
Thus, $\size(\eta_{i''}/\tilde{v}'[\![s_{i''}]\!])\leq \size(\eta_{i'}/\tilde{v}j'[\![s_{i'}]\!])$. 
By the dynamic FV property, $\size(t_2[\![s_{i''}]\!])\leq \size(t_3/j'[\![s_{i'}]\!])$ for every $t_2\in \ct_M(s_0,ui'',q')$ and $t_3\in \ct_M(s_0,ui',q)$. 
Since $(\langle q_1,j\rangle,i)\to^+ (q',i'')$, 
by induction hypothesis of the inner induction, 
$\ct_M(s_0,ui, q_1)\neq \emptyset$ and 
$\size(t_1/j[\![s_i]\!])<\size(t_2[\![s_{i''}]\!])$ for every $t_1\in \ct_M(s_0,ui,q_1)$ 
and $t_2\in \ct_M(s_0,ui'',q')$. 
Hence, for every $t_1\in \ct_M(s_0,ui,q_1)$ and $t_3\in \ct_M(s_0,ui',q)$, 
$\size(t_1/j[\![s_{i}]\!])<\size(t_3/j'[\![s_{i'}]\!])$.

\bigskip\noindent
\emph{Induction step II.}
Let $s=\sigma(s_1,\ldots,s_k)$ and $k>0$. Let $s_0\in B(T_\Sigma)$ and $u\in V(s_0)$ such that $s_0/u=s$. 
Let $Q'=\Sts_M(\{q_0\},s,u)$, $\<q_1, j>\in I$, and $q_2\in S$. 
Assume that there exists a path $w$ from $(\<q_1,j>,\varepsilon)$ to $(q_2,\varepsilon)$ in $D_{A}(E_{Q'}(s))$. 
By the definition of $E$, $E_{Q'}(s)[\varepsilon]=[\sigma,Q']$ and 
$E_{Q'}(s)/i=E_{Q_i}(s_i)$ for $i\in [k]$  
where $Q_i=\Sts_M(Q', s, i)$. 

\smallskip
\emph{Case 1.}
Let $w=(\<q_1,j>,\varepsilon)\to_\varepsilon (q_2,\varepsilon)$. 
By the same argument of the base case, 
we can show that $q_1=q_2$, $\ct_M(s_0,u,q_1)\neq \emptyset$, and 
$\size(t_1/j[\![s]\!])<\size(t_2[\![s]\!])$ for every $t_1,t_2\in \ct_M(s_0,u,q_1)$. 

\smallskip
\emph{Case 2.}
Let $w=(\<q_1,j>,\varepsilon)\to_\varepsilon (\<q', j'>,i) \to^+ (q'',i') \to_{\varepsilon} (q_2,\varepsilon)$. 
It is sufficient to show that 
\begin{enumerate}[(i)]
\item $\ct_M(s_0,u,q_1)\neq \emptyset$ and $\size(t_1/j[\![s]\!])\leq \size(t_2/j'[\![s_i]\!])$ for every call trees $t_1\in \ct_M(s_0,u,q_1)$ and $t_2\in \ct_M(s_0,ui,q')$, 
\item $\ct_M(s_0,ui,q')\neq \emptyset$ and $\size(t_2/j'[\![s_i]\!])<\size(t_3[\![s_{i'}]\!])$ for every call trees $t_2\in \ct_M(s_0,ui,q')$ and $t_3\in \ct_M(s_0,ui',q'')$, and
\item $\ct_M(s_0,ui',q'')\neq \emptyset$ and $\size(t_3[\![s_{i'}]\!])\leq \size(t_4[\![s]\!])$ for every call trees $t_3\in \ct_M(s_0,ui',q'')$ and $t_4\in \ct_M(s_0,u,q_2)$. 
\end{enumerate}
Since (ii) can be shown by the induction hypothesis of Statement 2, we will show (i) and (iii) below. 
Let $[\![rhs]\!] = [\<r, x> \leftarrow \rhs_M(r, \sigma)\mid r\in Q]$, and 
for $l\in \{i,i'\}$ let $[\![\setminus l]\!] = [\![\<r,x_c> \leftarrow M_r(s/c) \mid r\in Q, c\in [k]-\{l\}]\!]$, 
$[\![l]\!] = [\![\<r,x_l> \leftarrow \<r, x> \mid r\in Q]\!]$, and $[\![s, l]\!]=[\![rhs]\!][\![\setminus l]\!][\![l]\!]$. 
Note that the above substitutions are nondeleting because $M$ is nondeleting.

(i) Since $(\<q_1,j>,\varepsilon)\to_\varepsilon (\<q', j'>,i)$, 
from the construction of rules for the symbol $[\sigma,Q']$, 
the edge originates from the rule $\<q',j'>(\pi i)\to \Top'_{q_1}(\zeta/vj')$ 
such that $q_1\in Q'$, $\zeta=\rhs_M(q_1, \sigma)$, 
$\zeta[v]=\<q', x_i>$ for some $v\in V(\zeta)$, 
and $\<q_1,j>(\pi)$ occurs in $\Top'_{q_1}(\zeta/vj')$. 
By the definition of $\Top'_{q_1}$, $y_j$ occurs in $\zeta/vj'$ at some node $v'$. 
Since $q_1\in Q'$, $\ct_M(s_0,u,q_1)\neq \emptyset$.  
Let $t_1\in \ct_M(s_0,u,q_1)$ and $\eta=t_1[\![s,i]\!]$. 
Since $t_1[\varepsilon]=\<q_1, x>$, $\eta=\zeta\psi[\![\setminus i]\!][\![i]\!]$ where $\psi=[y_l\leftarrow t_1/l[\![rhs]\!]\mid l\in [\rank_Q(q_1)]]$. 
There exists $\tilde{v}\in V(\eta)$ such that 
$\eta/\tilde{v}=(\zeta/v)\psi[\![\setminus i]\!][\![i]\!]$ and
$\eta/\tilde{v}j'v'=(\zeta/vj'v')\psi[\![\setminus i]\!][\![i]\!]=t_1/j[\![rhs]\!][\![\setminus i]\!][\![i]\!]$ because $y_j=\zeta[vj'v']$. 
Then, $\eta[\tilde{v}]=\<q', x>$.  Thus, $\eta/\tilde{v}\in \ct_M(s_0,ui,q')\neq \emptyset$. 
Let $t_2=\eta/\tilde{v}$. Since $t_2/j'v'=t_1/j[\![s,i]\!]$, $t_2/j'v'[\![s_i]\!]=t_1/j[\![s,i]\!][\![s_i]\!]=t_1/j[\![s]\!]$. 
Thus, $\size(t_1/j[\![s]\!])=\size(t_2/j'v'[\![s_i]\!])\leq \size(t_2/j'[\![s_i]\!])$. 
By the dynamic FV property of $M$, (i) holds. 

(iii) Since $(q'',i') \to_{\varepsilon} (q_2,\varepsilon)$, 
from the construction, the edge originates from the rule 
$q_2(\pi)\to \Top'_{q_2}(\zeta)$ where $\zeta=\rhs_M(q_2,\sigma)$, and 
$q_2\in Q'$. 
By the definition of $\Top'_{q_2}$, $\zeta[v]=\<q'', x_{i'}>$ for some $v\in V(\zeta)$ and 
every ancestor of $v$ has a symbol in $\Delta$.  
Since $q_2\in Q'$, $\ct_M(s_0,u,q_2)\neq \emptyset$. 
Let $t_3\in \ct_M(s_0,u,q_2)$ and $\eta=t_3[\![s,i']\!]$. 
Since $t_3[\varepsilon]=\<q_2, x>$, $\eta=\zeta\psi[\![\setminus i']\!][\![i']\!]$ where 
$\psi=[y_l\leftarrow t_3/l[\![rhs]\!]\mid l\in [\rank_Q(q_2)]]$. 
There exists $\tilde{v}\in V(\eta)$ such that 
$\eta/\tilde{v}=(\zeta/v)\psi[\![\setminus i']\!][\![i']\!]$. 
Then, $\eta[\tilde{v}]=\<q'', x>$.  Thus, $\eta/\tilde{v}\in \ct_M(s_0,ui',q'')\neq \emptyset$. 
Since $\eta[\![s_{i'}]\!]=t_3[\![s,i']\!][\![s_{i'}]\!]=t_3[\![s]\!]$, 
$\size(\eta/\tilde{v}[\![s_{i'}]\!])\leq \size(\eta[\![s_{i'}]\!])=\size(t_3[\![s]\!])$. 
By the dynamic FV property of $M$, (iii) holds. 
\QED
\end{proof}

\begin{lemma}\label{lem:non-circularity}
\rm
$A$ is non-circular on $E_A(T_\Sigma)$. 
\end{lemma}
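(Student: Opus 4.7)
The plan is to argue by contradiction, using Lemma~\ref{lem:size-increase2}(2) as the core tool. Suppose $D_A(s^*)$ contains a cycle $C$ for some $s^*\in E_A(T_\Sigma)$, and write $s^*=E(s_0)$ with $s_0=H(s_1)\in H(T_\Sigma)$. The key observation about the top-down relabeling $E$ is that $E(s_0)/u=E_{Q'}(s_0/u)$ for $Q'=\Sts_M(\{q_0\},s_0,u)$, so any subgraph of $D_A(s^*)$ induced on proper descendants of $u$ corresponds (via position relabeling) to a subgraph of $D_A(E_{Q'}(s_0/u))$.

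Let $U\subseteq V(s^*)$ collect all positions appearing on $C$, and let $u^*$ be their greatest common ancestor. I would first show $u^*\notin U$ by a short inspection of the four possible shapes of an edge $((\gamma',uj),(\gamma,ui))$ in an ATT dependency graph: a synthesized attribute $(q,u^*)$ only emits edges to the parent or siblings of $u^*$, neither of which are descendants of $u^*$; an inherited attribute $(\beta,u^*)$ (possible only when $u^*\neq\varepsilon$) only receives edges from the parent or siblings of $u^*$. In either case $(\alpha,u^*)$ cannot lie on a cycle whose positions are all descendants of $u^*$. Hence every position in $U$ is a \emph{proper} descendant of $u^*$.

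Next, because $u^*$ is the GCA, $U$ meets at least two distinct child-subtrees of $u^*$, so $C$ must contain at least one edge whose source and target lie in different children of $u^*$. A case analysis on the four edge shapes, together with the fact that source and target of any edge share a common parent (the rule-position $u'$), forces such an edge to have sibling-to-sibling form $(q,u^*a)\to(\<q_1,j>,u^*b)$ with $a,b\in[k]$, $a\neq b$, and $u'=u^*$. Thus $(\<q_1,j>,u^*b)$ is an inherited-attribute node on $C$, corresponding to the node $(\<q_1,j>,b)$ in $D_A(E_{Q'}(s_0/u^*))$. I expect this cross-children argument to be the main obstacle of the proof, as it hinges on a careful use of the GCA property and the precise edge structure of $D_A$.

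Tracing $C$ from $(\<q_1,j>,u^*b)$ back to itself then yields a closed nonempty path in $D_A(E_{Q'}(s_0/u^*))$ from $(\<q_1,j>,b)$ to $(\<q_1,j>,b)$, because $u^*$ itself is not visited by $C$ and every cycle edge corresponds to an edge in the subtree's dependency graph. Applying Lemma~\ref{lem:size-increase2}(2) with $u:=u^*$, $s:=s_0/u^*$, $a:=\<q_1,j>\in I$, and $i'=i:=b$ gives $\ct_M(s_0,u^*b,q_1)\neq\emptyset$ and the strict inequality $\size(t_1/j[\![s_0/(u^*b)]\!])<\size(t_2/j[\![s_0/(u^*b)]\!])$ for every $t_1,t_2\in\ct_M(s_0,u^*b,q_1)$. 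Choosing $t_1=t_2$ forces a size to be strictly less than itself, the desired contradiction, so no such $C$ exists.
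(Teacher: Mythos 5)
Your overall strategy is the same as the paper's: assume a cycle, localize it to a suitable subtree, extract a nonempty closed path at an inherited-attribute node at depth one, and apply Lemma~\ref{lem:size-increase2}(2) with $q_2=q_1$, $i=i'$, $j=j'$ to obtain $\size(t/j[\![\cdot]\!])<\size(t/j[\![\cdot]\!])$. The final step is applied correctly. However, your localization step has a genuine gap. The claim that the greatest common ancestor $u^*$ of the cycle's positions is not itself visited by the cycle rests on an incomplete inspection of the edge shapes of $D_A$. By the definition of the dependency graph, an edge $((\gamma',uj),(\gamma,ui))$ allows $i=j\geq 1$: if $\alpha'(\pi i)$ occurs in $\rhs_A(s[u],\beta(\pi i))$ for the \emph{same} $i$, there is an edge from the synthesized node $(\alpha',ui)$ to the inherited node $(\beta,ui)$ at the \emph{same} position $w=ui$ (generated by the rule set of $w$'s parent). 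Combined with an edge $(\beta,w)\to(\alpha',w)$ (generated by $w$'s own rule $\alpha'(\pi)\to\cdots\beta(\pi)\cdots$), this permits a cycle passing through, or even consisting entirely of, attribute instances at a single position $w$. For such a cycle $u^*=w\in U$, your subsequent claim that $U$ meets two distinct child-subtrees of $u^*$ fails, and no sibling-to-sibling edge with $a\neq b$ need exist. (Ruling out such same-position cycles for this particular $A$ is not a purely combinatorial matter -- it amounts to excluding self-nesting, which is exactly what the size lemma is for, so you cannot assume it.)

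The paper avoids this by choosing $u$ as the root of the \emph{minimal subtree whose dependency graph contains the cycle}, rather than the GCA of the cycle's positions: since the same-position edge into $(\beta,w)$ is generated by the rules at $w$'s parent, that minimal $u$ is a proper ancestor of every position on the cycle, and the cycle then necessarily visits an inherited node at depth one of $s/u$. Your argument is repairable either by adopting that choice of $u$, or by adding the case $u^*\in U$: there the cycle must visit an inherited node $(\langle q_1,j\rangle,u^*)$ (a synthesized node at $u^*$ on the cycle would otherwise send the cycle to $u^*$'s parent, contradicting the GCA property), and $(\langle q_1,j\rangle,u^*)$ is itself the required depth-one inherited node relative to the parent of $u^*$, to which Lemma~\ref{lem:size-increase2}(2) applies verbatim. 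As written, though, the step ``$u^*\notin U$'' is not established.
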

\begin{proof}
The proof is done by contradiction. Assume that $A$ is circular on $E_A(T_\Sigma)=E(H(T_\Sigma))$. 
Then there exists a tree $s\in H(T_\Sigma)$ such that $D_{A}(E(s))$ has a cycle. 
Let $u\in V(s)$ be a node such that $E(s)/u$ is a minimal subtree that includes the cycle. 
Then $s/u\in T_\Gamma-\Gamma^{(0)}$ because no cycle can be made in $D_{A}(\sigma')$ for any $\sigma'\in \Gamma^{(0)}_M$. 
Let $\sigma=s[u]\in \Gamma^{(k)}$ with $k>0$.  
Let $Q'=\Sts_M(\{q_0\},s,u)$. 
By the definition of $E$ and $A$, $Q'\neq \emptyset$ 
because if $Q'=\emptyset$ then there were no edges between any two nodes on $D_{A}(E(s))/u$. 
Then, there is a cycle of length greater than 0 from some $(\langle q, j\rangle,i)$ to itself in $D_{A}(E_{Q'}(s/u))$ for some $\langle q, j\rangle\in I$ and $i\in [k]$. 
By Lemma~\ref{lem:size-increase2}, $\size(t/j[\![s/ui]\!])<\size(t/j[\![s/ui]\!])$ for any $t\in \ct_M(s,ui,q)$. This is a contradiction. 
\QED
\end{proof}

We show the equivalence of $H\com M$ and $E_A\com A$.  
\begin{lemma}\label{lem:equivalence}
\rm
$\tau_H\com \tau_M=\tau_{H}\com \tau_{E}\com \tau_{A}$. 
\end{lemma}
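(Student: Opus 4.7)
Since $\tau_H$ is a common left factor, it suffices to show $M_{q_0}(s) = A_{q_0}(E(s))$ for every $s \in H(T_\Sigma)$; the right-hand side is defined because $A$ is non-circular on $E_A(T_\Sigma)\supseteq E(H(T_\Sigma))$ by Lemma~\ref{lem:non-circularity}. Let $\nf_u(a)$ denote the normal form of the attribute occurrence $a(u)$ under $\Rightarrow_{A,E(s)}$. The plan is to pin down $\nf_u$ by a joint invariant on all synthesized and inherited attribute occurrences, obtained by a top-down induction for inherited attributes and a bottom-up induction for synthesized ones.

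The invariants I shall prove are: \textbf{(S)} for every $u\in V(s)$ and every $q\in \Sts_M(\{q_0\},s,u)$, $\nf_u(q)=M_q(s/u)[y_j\leftarrow \nf_u(\langle q,j\rangle)\mid j\in[\rank_Q(q)]]$; and \textbf{(I)} for every non-root $u\in V(s)$ and every $\langle q,j\rangle\in I$ with $q\in \Sts_M(\{q_0\},s,u)$, $\nf_u(\langle q,j\rangle)=t/j\,[\![s/u]\!]$ for any (equivalently, every) $t\in \ct_M(s,u,q)$. The dynamic FV property of $M$ makes (I) well-posed independently of the choice of $t$. Applied at $u=\varepsilon$ with $q=q_0\in Q^{(0)}$, (S) immediately gives $A_{q_0}(E(s))=\nf_\varepsilon(q_0)=M_{q_0}(s)$.

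The verification of (S) at $u$ with $s[u]=\sigma$ is routine: the ATT rule $q(\pi)\to \Top'_q(\rhs_M(q,\sigma))$ is obtained from $\rhs_M(q,\sigma)$ by copying $\Delta$-symbols verbatim, replacing each $y_j$ by $\langle q,j\rangle(\pi)$ (which is resolved by (I) at $u$), and replacing each state call $\langle r,x_i\rangle$ by $r(\pi i)$ (resolved by (S) at $ui$; indeed $r\in \Sts_M(\{q_0\},s,ui)$ by the definition of $E$). Once these substitutions are performed, the result matches the one-step MTT unfolding of $M_q(s/u)$.

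The real work lies in (I), and is also the main obstacle. At $u=u'i$ with $\sigma=s[u']$ and $Q''=\Sts_M(\{q_0\},s,u')$, the right-hand side of the rule for $\langle q,j\rangle(\pi i)$ is $\Top'_{q_\nu}(\rhs_M(q_\nu,\sigma)/v'j)$, where $(q_\nu,v')$ is the canonical pair fixed by the post-order traversal in Definition~\ref{def:constructAttDFV}, with $q_\nu\in Q''$ and $\rhs_M(q_\nu,\sigma)[v']=\langle q,x_i\rangle$. Unfolding $M_{q_\nu}$ one level at the root of $s/u'$ and reading the subtree at $v'$ exhibits a specific $t\in \ct_M(s,u,q)$ such that $(\rhs_M(q_\nu,\sigma)/v'j)$ under the MTT substitution $[\![\langle r,x_\ell\rangle\leftarrow M_r(s/u'\ell)\mid r,\ell]\!]$ equals $t/j$. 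To finish, I need to show that the ATT-evaluation of this right-hand side under the attribute values at $u'$ and its children reproduces the same MTT tree; I plan to do this via an auxiliary claim that for every $w\in V(\rhs_M(q_\nu,\sigma))$, evaluating $\Top'_{q_\nu}(\rhs_M(q_\nu,\sigma)/w)$ under the inductively-established attribute values yields the MTT-value of $(\rhs_M(q_\nu,\sigma)/w)$ at $u'$. The subtle point, and the one where dynamic FV is essential, is that at each state-call leaf of $\Top'_{q_\nu}$ the ATT ``forgets'' the parameter tree that the MTT actually carries; dynamic FV is used to certify that the parameter recovered via (I) at the deeper input position matches, modulo $[\![s/u]\!]$, the one that was forgotten, so that the canonical choice in Definition~\ref{def:constructAttDFV} is correct no matter which $(q_\nu,v')$ was selected.
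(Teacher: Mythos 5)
Your proposal is correct and follows essentially the same route as the paper: your invariants (S) and (I) are the paper's Statements 1 and 2 phrased via the substitutions $[s,u,q]_M$ and $[s,u]_A$, your auxiliary claim over all $w\in V(\rhs_M(q_\nu,\sigma))$ is exactly the paper's Statement 2 (proved there by structural induction on $s$ with an inner induction on the traversal order $\preceq$), and you correctly locate the one place where the dynamic FV property is needed, namely to certify that the canonical parameter tree fixed in Definition~\ref{def:constructAttDFV} agrees, modulo $[\![s/u]\!]$, with every actual call tree.
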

\begin{proof}
For $s\in H(T_\Sigma)$, $u\in V(s)$, and $q\in Q'$ where $Q'=\Sts_M(\{q_0\}, s, u)$ , let
\begin{eqnarray*}
 {[s,u,q]_M} &=& [y_l\leftarrow t/l[\![s/u]\!]\mid t\in \ct_M(s,u,q), l\in [\rank_Q(q)]], \\
 {[s,u]_A} &=& [\<q,l>(\varepsilon) \leftarrow t/l[\![s/u]\!]\mid q\in Q', t\in \ct_M(s,u,q), l\in [\rank_Q(q)]].
\end{eqnarray*}
Note that by the dynamic FV property of $M$ on $H(T_\Sigma)$, $[s,u,q]_M$ and $[s,u]_A$ are well-defined.
We prove the following statements for every $s\in T_\Gamma$. 
\begin{enumerate}
\item For every $s_0\in H(T_\Sigma)$, $u\in V(s)$ such that $s_0/u=s$, 
and $q\in Q'$ where $Q'=\Sts_M(\{q_0\}, s_0, u)$,
\[
M_q(s)[s_0,u,q]_M=\nf(\Rightarrow_{A,E_{Q'}(s)}, q(\varepsilon))[s_0,u]_A. 
\]

\item For every $s_0\in H(T_\Sigma)$, $u\in V(s)$ such that 
$s_0/u=s$ and $s_0[u]=\sigma\in \Sigma^{(k)}$ with $k\geq 0$, 
$q\in Q'$ where $Q'=\Sts_M(\{q_0\},s_0,u)$, and $v\in V(\zeta_q)$, 
\[
(\zeta_q/v)\theta[s_0,u,q]_M=\nf(\Rightarrow_{A, E_{Q'}(s)}, \Top'_{q}(\zeta_q/v)[\pi\leftarrow \varepsilon])[s_0,u]_A
\]
where $\zeta_{q}=\rhs_M(q,\sigma)$ 
 and $\theta=[\![\<r, x_i> \leftarrow M_r(s/i) \mid r\in Q, i\in [k]]\!]$. 
\end{enumerate}

We first prove that Statement 2 implies Statement 1 for all $s\in T_\Gamma$. 

\smallskip\noindent
\emph{(2$\Longrightarrow$1).} 
Let $s=\sigma(s_1,\ldots,s_k)\in T_\Gamma$ where $\sigma\in \Gamma^{(k)}$, $k\geq 0$, and $s_1,\ldots,s_k\in T_\Gamma$. Let $s_0\in H(T_\Sigma)$ and $u\in V(s)$ such that $s_0/u=s$, and $s'=E_{Q'}(s)$. 
Let $q\in\Sts_M(\{q_0\}, s_0, u)$.  
Let $\xi=\rhs_M(q,\sigma)$ and $\theta=[\![\<r, x_i> \leftarrow M_r(s/i) \mid r\in Q, i\in [k]]\!]$. We assume that Statement 2 holds for $s$. 
\begin{eqnarray*}
&&M_q(s)[s_0,u,q]_M\\
&=&\xi\theta[s_0,u,q]_M\\
&=&\nf(\Rightarrow_{A,s'}, \Top'_{q}(\xi)[\pi\leftarrow \varepsilon])[s_0,u]_A~~\mbox{(by Statement 2 with $v=\varepsilon$)}\\
&=&\nf(\Rightarrow_{A,s'}, q(\pi)[\pi\leftarrow \varepsilon])[s_0,u]_A~~~~~\mbox{(by the construction)}.
\end{eqnarray*}

By using the above fact, we prove that Statement 2 holds for all $s\in T_\Gamma$ by induction on the structure of $s$. 

\smallskip\noindent
\emph{Base case.} 
Let $s=\sigma\in \Gamma^{(0)}$. Let $s_0\in H(T_\Sigma)$ and $u\in V(s_0)$ such that $s_0/u=s$. 
Let $Q'=\Sts_M(\{q_0\},s_0,u)$ and $q\in Q'$. Let $m=\rank_Q(q)$. 
Let $\zeta=\rhs_M(q,\sigma)$, and let $v\in V(\xi)$ and $\xi=\zeta/v$. 
Since $E_{Q'}(\sigma)=[\sigma,Q']=\sigma'\in \Gamma_M^{(0)}$ and $q\in Q'$, 
from the construction of rules for labels with rank 0, 
the edge originates from the rule $q(\pi)\to \Top'_{q}(\xi)$ of $A$. 
By the definition of $\Top'_{q}$ and $\xi\in T_{\Delta\cup Y_m}$, 
\begin{eqnarray*}
&&\nf(\Rightarrow_{A,\sigma'}, \Top'_{q}(\xi)[\pi\leftarrow \varepsilon])[s_0,u]_A\\
&=& \nf(\Rightarrow_{A,\sigma'}, \xi[y_l\leftarrow \<q,l>(\varepsilon)\mid l\in [m]])[s_0,u]_A\\
&=& \xi[y_l\leftarrow \<q,l>(\varepsilon)\mid l\in [m]][s_0,u]_A\\
&=& \xi[y_l\leftarrow \<q,l>(\varepsilon)\mid l\in [m]]\\
&& \quad\quad [\<q',l>(\varepsilon) \leftarrow t/l[\![s]\!]\mid q'\in Q', t\in \ct_M(s_0,u,q'), l\in [m]]\\
&=& \xi[y_l\leftarrow t/l[\![s]\!]\mid t\in \ct_M(s_0,u,q), l\in [m]]\\
&=& \xi[s_0,u,q]_M. 
\end{eqnarray*}
Since $\theta$ is an empty substitution when $k=0$, $\xi[s_0,u,q]_M=\xi\theta[s_0,u,q]_M$. 

\smallskip\noindent
\emph{Induction step.}
Let $\sigma'=[\sigma,Q']$. 
We denote by $\preceq$ the traverse order on the nodes of $\rhs_M(q,\sigma)$ for all $q\in Q'$ 
when defining $t^{\sigma'}_{i,q,j}$ in the construction. 
Let $c(i, q)$ be the state $q''$ such that $t^{\sigma'}_{i,q,j}$ (for all $j\in [\rank_Q(q)]$) is picked up from 
$\rhs_M(q'',\sigma)$. 
We prove this part by induction on the total order $\preceq$. 
Let $q\in Q'$ and $v\in V(\zeta_q)$, and let $\xi=\zeta_q/v$. 
Let us denote by (IH2) the induction hypothesis of the outer induction on $s$, by (IH$\preceq$) 
that of the inner induction on $\preceq$, and by (IH1) the fact that Statement 1 holds for $s$, 
implied by (IH2) and $2\Longrightarrow 1$.  

\noindent
\emph{Case 1.} $\xi=y_j$. It is trivial from the fact that $\Top'_q(y_j)=\<q, j>(\varepsilon)$. 

\noindent
\emph{Case 2.} $\xi=\delta(\xi_1,\ldots,\xi_l)$.
We get $\xi\theta[s_0,u,q]_M=\delta(\xi'_1, \ldots, \xi'_l)$ where $\xi'_i=$
$\xi_i\theta[s_0,u,q]_M$, and
$\nf(\Rightarrow_{A,s'}, \Top'_{q}(\xi)[\pi\leftarrow \varepsilon])[s_0,u]_A=\delta(\xi''_1,\ldots,\xi''_l)$
where $\xi''_i=\nf(\Rightarrow_{A,s'}, \Top'_{q}(\xi_i)[\pi\leftarrow \varepsilon])[s_0,u]_A$. 
Since $\xi'_i=\xi''_i$ by (IH$\preceq$), this case holds. 

\noindent
\emph{Case 3.} $\xi=\<q', x_i>(\xi_1,\ldots,\xi_m)$. 
\begin{align*}
\xi\theta[s_0,u,q]_M
&=M_{q'}(s/i)[s_0,ui,q']_M\\
&=\nf(\Rightarrow_{A,s'/i}, q'(\varepsilon))[s_0,ui]_A & \text{(by IH1)}.
\end{align*}
On the other hand,
\begin{align*}
&[s_0,ui]_A\\
&=[\<q', l>(\varepsilon) \leftarrow t/l[\![s/i]\!]\mid q'\in Q_i, t\in \ct_M(s_0,ui,q'), l\in [\rank_Q(q')]]\\
&=[\<q', l>(\varepsilon) \leftarrow t^{\sigma'}_{i,q',l}\theta[s_0,u,q'']_M
\mid q'\in Q_i, c(i,q')=q'', l\in [\rank_Q(q')]]\\
&=[\<q', l>(\varepsilon) \leftarrow \nf(\Rightarrow_{A,s'}, \Top'_{q''}(t^{\sigma'}_{i,q',l})[\pi\leftarrow \varepsilon])[s_0,u]_A\\
&\phantom{=[\<q', l>(\varepsilon) \leftarrow t^{\sigma'}_{i,q',l}\theta}
\mid q'\in Q_i, c(i,q')=q'', l\in [\rank_Q(q')]] \quad\quad \text{(by IH2)}\\
&=[\<q', l>(\varepsilon) \leftarrow \nf(\Rightarrow_{A,s'}, \<q',l>(\pi i)[\pi\leftarrow \varepsilon])
\mid q'\in Q_i, l\in [\rank_Q(q')]][s_0,u]_A. 
\end{align*}
Hence, 
\begin{align*}
&\nf(\Rightarrow_{A,s'/i}, q'(\varepsilon))[s_0,ui]_A\\
&=\nf(\Rightarrow_{A,s'/i}, q'(\varepsilon))\\
&\quad\quad\quad[\<q', l>(\varepsilon) \leftarrow \nf(\Rightarrow_{A,s'}, \<q',l>(\pi i)[\pi\leftarrow \varepsilon])\mid q'\in Q_i, l\in [\rank_Q(q')]][s_0,u]_A\\
&=\nf(\Rightarrow_{A,s'}, q'(\pi i)[\pi\leftarrow \varepsilon])[s_0,u]_A\\
&=\nf(\Rightarrow_{A,s'}, \Top'_q(\xi)[\pi\leftarrow \varepsilon])[s_0,u]_A.
\end{align*}

When $s_0=s$ and $u=\varepsilon$, $M_{q_0}(s)[s,\varepsilon,q_0]_M=\nf(\Rightarrow_{A,s'}, q_0(\varepsilon))[s,\varepsilon]_A$. 
Since $\rank_Q(q_0)=0$ and thus $[s,\varepsilon,q_0]_M$ and $[s,\varepsilon]_A$ are empty substitutions, 
$M_{q_0}(s)=\nf(\Rightarrow_{A,s'}, q_0(\varepsilon))$.  
\QED
\end{proof}

Next we show that $\CMTTDFV$ is closed under pre-composition with $\CTREL$. 

\begin{lemma}\label{lem:DFV-TREL-closed}
\rm
For a composition of a $\TREL$ $E$, an $\MTT$ $M$, and a set $L$ of trees,
if $M$ has the dynamic FV property on $E(L)$, then 
there exists an $\MTT$ $M'$ equivalent with $E\com M$ such that $M'$ has 
the dynamic FV property on $L$. 
\end{lemma}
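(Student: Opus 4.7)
The plan is to apply the standard product construction that composes the top-down relabeling $E$ with the MTT $M$, and then to show that this construction transfers the dynamic FV property from $E(L)$ to $L$.

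Let $E = (P, \Sigma, \Sigma', p_0, R_E)$ and $M = (Q, \Sigma', \Delta, q_0, R_M)$. I would set $M' = (Q', \Sigma, \Delta, (q_0, p_0), R')$ with $Q' = \{(q, p)^{(\rank_Q(q))} \mid q \in Q, p \in P\}$ and one rule for each pair consisting of an $E$-rule $p(\sigma(x_1, \ldots, x_k)) \to \sigma'(p_1(x_1), \ldots, p_k(x_k))$ and an $M$-rule $\<q, \sigma'(x_1, \ldots, x_k)>(y_1, \ldots, y_m) \to \zeta$, namely
\[
\<(q, p), \sigma(x_1, \ldots, x_k)>(y_1, \ldots, y_m) \to \zeta',
\]
where $\zeta'$ arises from $\zeta$ by replacing each call $\<q', x_i>$ by $\<(q', p_i), x_i>$. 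A routine induction on the input tree then gives $M'_{(q, p)}(t) = M_q(E_p(t))$ for all $t \in T_\Sigma$, where $E_p$ denotes $E$ started in state $p$; in particular $\tau_{M'} = \tau_E \com \tau_M$.

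Because $E$ is deterministic top-down, for every $s \in T_\Sigma$ and $u \in V(s)$ there is a unique state $p(s, u) \in P$ that $E$ reaches at $u$. Let $\phi_p$ denote the symbol renaming $[\![\<q, x> \leftarrow \<(q, p), x>(y_1, \ldots, y_{\rank_Q(q)}) \mid q \in Q]\!]$. The key identity I would establish next is
\[
M'_{(q_0, p_0)}(s[u \leftarrow x]) \;=\; M_{q_0}(E(s)[u \leftarrow x])\, \phi_{p(s, u)},
\]
proved by induction on the length of $u$ after generalizing the starting state to arbitrary $(q, p)$: above $u$, each rewrite step of $M'$ mirrors a rewrite of $M$ combined with the corresponding step of $E$; and at the placeholder $x$, the only calls that $M'$ can produce are of the form $\<(q', p(s, u)), x>$, since the $P$-component is forced by $E$'s descent. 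A direct consequence is that $\ct_{M'}(s, u, (q, p))$ is empty whenever $p \ne p(s, u)$, and for $p = p(s, u)$ the map $\xi' \mapsto \xi'\phi_{p(s, u)}$ is a bijection from $\ct_M(E(s), u, q)$ onto $\ct_{M'}(s, u, (q, p))$.

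With this correspondence in hand, the transfer of the dynamic FV property is direct. Fix $s \in L$, $u \in V(s)$, $(q, p) \in Q'^{(m)}$, $j \in [m]$, and $\xi_1, \xi_2 \in \ct_{M'}(s, u, (q, p))$, and write $\xi_i = \xi_i' \phi_{p(s, u)}$ with $\xi_i' \in \ct_M(E(s), u, q)$. Using $M'_{(q'', p(s, u))}(s/u) = M_{q''}(E(s)/u)$, I compute
\[
\xi_i/j\, [\![s/u]\!]_{M'} \;=\; (\xi_i'/j)\,[\![\<q'', x> \leftarrow M_{q''}(E(s)/u) \mid q'' \in Q]\!] \;=\; \xi_i'/j\, [\![E(s)/u]\!]_M,
\]
and since $E(s) \in E(L)$ and $M$ has the dynamic FV property on $E(L)$, the two right-hand sides for $i = 1, 2$ coincide, giving $\xi_1/j\, [\![s/u]\!]_{M'} = \xi_2/j\, [\![s/u]\!]_{M'}$. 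The main technical obstacle lies in the key identity above: it is intuitively clear, but the induction requires care about the substitutions at the placeholder and about how $M'$'s $P$-component is forced to match $E$'s descent from the root to $u$.
\QED
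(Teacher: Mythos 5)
Your proposal is correct and follows essentially the same route as the paper: the same product construction $Q' = Q \times P$, the same two key facts (the relabeling identity $M'(s[u\leftarrow x]) = M(E(s)[u\leftarrow x])\phi_{p(s,u)}$ and the pointwise identity $M'_{(q,p(s,u))}(s/u) = M_q(E(s)/u)$), and the same transfer of the dynamic FV property via the induced bijection on call trees. Your write-up is if anything slightly more explicit than the paper's (which simply asserts the two properties "from the product construction"), so there is nothing to add.
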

\begin{proof}
Let $L\subseteq T_\Sigma$, and let $E=(Q_E, \Sigma, \Gamma, r_0, R_E)$ and $M=(Q, \Gamma, \Delta, q_0, R)$ 
be an $\TREL$ and an $\MTT$ such that $M$ has the dynamic FV property on $E(L)$. 
We can get an $\MTT$ equivalent with $E\com M$ by the product construction. 
Let $M'=(Q', \Sigma, \Delta, (r_0, q_0), R')$ where $Q'=Q_E\times Q$ 
be the $\MTT$ obtained by the product construction.  
Note that $\rank_{M'}((r,q))=\rank_M(q)$ for $(r,q)\in Q'$. 
Here we show only that $M'$ has the dynamic FV property on $L$. 

Let $s\in L$ and $u\in V(s)$. 
Let $\xi=M'(s[u\leftarrow x])$ and $\<r, x>=E(s[u\leftarrow x])[u]$. 
We have the following properties from the product construction. 
\begin{enumerate}
\item $\xi=M(E(s)[u\leftarrow x])[\![\<q, x>\leftarrow \<(r, q), x>\mid q\in Q]\!]$.
\item $M'_{(r,q)}(s/u)=M_q(E_r(s/u))$.
\end{enumerate}
 
Let $v_1, v_2\in V_{\<Q', \{x\}>}(\xi)$ such that $\xi[v_1]=\xi[v_2]$. 
Then, it follows from the first property that $\xi[v_1]=\xi[v_2]=\<(r, q),x>$ for some $q\in Q$, 
and that $M(E(s)[u\leftarrow x])[v_1]=M(E(s)[u\leftarrow x])[v_2]=\<q,x>$. 
By the dynamic FV property of $M$ on $E(L)$, 
$(M(E(s)[u\leftarrow x])/v_1j)[\![\<q', x> \leftarrow M_{q'}(E(s)/u)\mid q'\in Q]\!]=(M(E(s)[u\leftarrow x])/v_2j)[\![\<q', x>\leftarrow M_{q'}(E(s)/u)\mid q'\in Q]\!]$ 
for every $j\in [\rank_M(q)]$. 
It follows from the above properties that  
$(\xi/v_1j)[\![\<(r,q'), x> \leftarrow M'_{(r,q')}(s/u)\mid q'\in Q]\!]=(\xi/v_2j)[\![\<(r,q'), x>\leftarrow M_{(r,q')}(s/u)\mid q'\in Q]\!]$ for every $j\in [\rank_M((r,q))]$. 
Therefore, $M'$ has the dynamic FV property on $L$.
\QED
\end{proof}

Theorem~\ref{thm:ATTR=MTTRF} and Lemmas~\ref{lem:FVtoDFV}, \ref{lem:MTTDFV-TREL+ATT}, and \ref{lem:DFV-TREL-closed} 
yield the main result of this section. 
\begin{theorem}\label{thm:MTTRDFV=MTTUDFV}
\rm
$\CATTU=\CMTTRDFV = \CMTTUDFV$ 
\end{theorem}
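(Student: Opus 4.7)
The plan is to close the circle of inclusions $\CMTTUDFV \subseteq \CATTU \subseteq \CMTTRDFV \subseteq \CMTTUDFV$, drawing all three inclusions from machinery already established. The hard technical work lies entirely in the lemmas preceding the theorem, so the proof itself is essentially a chasing argument that composes them in the right order.

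For the inclusion $\CMTTUDFV \subseteq \CATTU$, I would cite Lemma~\ref{lem:MTTDFV-TREL+ATT} directly; the non-circularity (Lemma~\ref{lem:non-circularity}) and equivalence (Lemma~\ref{lem:equivalence}) arguments behind that lemma, together with the post-processing step from~\cite{BE00} used to make the constructed ATT non-circular on all inputs, are exactly what is needed. For the trivial inclusion $\CMTTRDFV \subseteq \CMTTUDFV$, I would simply observe that $\CBREL \subseteq \CTRREL$ (every bottom-up relabeling is a regular look-around in which the top-down component is the identity), so that any decomposition $E \com M$ witnessing a translation in $\CMTTRDFV$ is automatically a decomposition witnessing it in $\CMTTUDFV$, and the condition ``$M$ has dynamic FV on $E(T_\Sigma)$'' transfers verbatim.

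The substantive step is $\CATTU \subseteq \CMTTRDFV$. Here I would chain three ingredients already proved. First, Theorem~\ref{thm:ATTR=MTTRF} gives $\CATTU = \CMTTUFV = \CTRREL \com \CMTTFV = \CBREL \com \CTREL \com \CMTTFV$. Second, Lemma~\ref{lem:FVtoDFV} says $\CMTTFV \subseteq \CMTTDFV$, so $\CATTU \subseteq \CBREL \com \CTREL \com \CMTTDFV$. Third, I would apply Lemma~\ref{lem:DFV-TREL-closed} with $L = T_\Sigma$ (the full domain of the $\TREL$) to absorb the top-down relabeling: for any $\TREL$ $E$ and any $\MTT$ $M$ with the dynamic FV property on $E(T_\Sigma)$ (which holds in particular when $M$ has dynamic FV property on all of its input trees), the lemma produces an equivalent $\MTT$ $M'$ with the dynamic FV property on $T_\Sigma$. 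This yields $\CTREL \com \CMTTDFV \subseteq \CMTTDFV$, and therefore $\CATTU \subseteq \CBREL \com \CMTTDFV = \CMTTRDFV$.

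I do not expect a substantive obstacle: the only point requiring a moment of care is the correct invocation of Lemma~\ref{lem:DFV-TREL-closed} with $L = T_\Sigma$, ensuring that the resulting transducer is an $\MTT$ with the dynamic FV property on its entire input domain (as demanded by the definition of $\CMTTDFV$) rather than only on the image of some proper subset. Once this is checked the three inclusions chain together immediately to give $\CATTU = \CMTTRDFV = \CMTTUDFV$.
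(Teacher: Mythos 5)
Your proposal is correct and follows exactly the route the paper takes: the paper's own proof is a one-line citation of Theorem~\ref{thm:ATTR=MTTRF} and Lemmas~\ref{lem:FVtoDFV}, \ref{lem:MTTDFV-TREL+ATT}, and \ref{lem:DFV-TREL-closed}, which is precisely the chain of inclusions you spell out (including the use of Lemma~\ref{lem:DFV-TREL-closed} to absorb the top-down relabeling so that only regular look-ahead remains). Your explicit handling of the trivial inclusion via $\CBREL\subseteq\CTRREL$ and the remark about invoking Lemma~\ref{lem:DFV-TREL-closed} on the full domain are both consistent with the paper's definitions.
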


Let us consider the decidability of the dynamic FV property. 
While the FV property is easily decidable, 
we do not know how to decide the dynamic FV property; in fact, we are able to show
that this problem is at least as difficult as deciding  equivalence  of ATTs. 
The proof constructs from two given ATTs an $\MTTR$ which has the dynamic FV property
if and only if the ATTs are equivalent. This is done by nesting the start calls
for the ATTs under a new fixed state. 
\begin{theorem}\label{l:equiv}
\rm
Deciding the dynamic FV property for $\CMTTR$ is at least as hard
as deciding equivalence of ATTs.
\end{theorem}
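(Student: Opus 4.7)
The plan is to reduce ATT equivalence to deciding the dynamic FV property for $\CMTTR$. Given two total ATTs $A_1, A_2$ over a common input alphabet $\Sigma$ and output alphabet $\Delta$, I will construct an $\MTTR$ $E\com M$ such that $M$ has the dynamic FV property on $E(T_{\Sigma_\#})$ if and only if $\tau_{A_1}=\tau_{A_2}$, where $\Sigma_\#=\Sigma\cup\{\#^{(1)}\}$ extends $\Sigma$ by a fresh unary symbol serving as a forced root. The key idea is to nest both ATT start-calls under a single fresh state $q$ so that, in one rule, $q$ is invoked twice on the same subtree but with the two ATT start-calls appearing as argument trees, forcing the dynamic FV requirement for $q$ at that subtree to coincide with equivalence of the two ATTs on that subtree.

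First, by $\CATT\subseteq\CMTTRFV\subseteq\CMTTRDFV$ (Lemma~\ref{lem:FVo-ourFV1} and Lemma~\ref{lem:FVtoDFV}), each $A_i$ is equivalent to an $\MTTR$ $E_i\com M_i$ with pairwise disjoint state set $Q_i$, initial state $S_i$, and $M_i$ having the dynamic FV property on $E_i(T_\Sigma)$. Combine $E_1$ and $E_2$ into one $\BREL$ $E$ via the standard product construction on a product alphabet $\Gamma$, extended to act as the identity on $\#$. Now define the combined MTT $M$ over $Q_1\cup Q_2\cup\{q_0,q\}$, with rank-$0$ initial state $q_0$ and fresh rank-$1$ state $q$. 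The crucial rule is
\[
\langle q_0,\#(x_1)\rangle\to f\bigl(\langle q,x_1\rangle(\langle S_1,x_1\rangle),\;\langle q,x_1\rangle(\langle S_2,x_1\rangle)\bigr),
\]
where $f$ is a fresh binary output symbol, and for every $\sigma\in\Gamma^{(k)}$ we set
\[
\langle q,\sigma(x_1,\ldots,x_k)\rangle(y_1)\to y_1.
\]
All remaining rules are those of $M_1$ and $M_2$, adapted so that each $M_i$-rule inspects only its own component of the product alphabet; dummy rules are added for totality.

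For the iff, observe that since $q$'s rule invokes no state, its call trees exist only at the position $u=1$ directly below $\#$. There the two call trees are $\langle q,x\rangle(\langle S_1,x\rangle)$ and $\langle q,x\rangle(\langle S_2,x\rangle)$, so the dynamic FV condition for $q$ at $u=1$ (with $j=1$) reads $M_{S_1}(s/1)=M_{S_2}(s/1)$ for every $s\in E(T_{\Sigma_\#})$, which after unfolding $E$ is exactly $\tau_{A_1}(s_0)=\tau_{A_2}(s_0)$ for every $s_0\in T_\Sigma$. For any $r\in Q_i$ at any node below $\#$, the call trees of $r$ in the combined $M$ are duplicates of those produced by $M_i$ alone on the $i$-th projection of the corresponding subtree; since $Q_i$-states only expand $M_i$-rules, the substitution $[\![s/u]\!]$ restricted to $Q_i$ coincides with that of $M_i$, and the dynamic FV property for $r$ is inherited from $M_i$. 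The state $q_0$ contributes only one trivial call tree at the root.

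The main obstacle is this inheritance step: one must verify rigorously that embedding $M_1$ and $M_2$ beneath $q$ does not create spurious new constraints on $Q_i$-states, i.e., that the two $\langle S_i,x_1\rangle$ subcomputations triggered by the two $q$-calls produce \emph{identical} copies of each $M_i$-call tree, and that the product look-ahead together with the alphabet-projected $M_i$-rules preserves the semantics of $E_i\com M_i$ on the $i$-th projection. Both facts follow routinely once one notes the disjointness of $Q_1$, $Q_2$, and $\{q_0,q\}$ and the absence of cross-references between their rule sets. Since the construction is clearly effective, any algorithm deciding the dynamic FV property for $\CMTTR$ would decide equivalence of ATTs.
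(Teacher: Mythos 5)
Your construction is, at its core, the same as the paper's: obtain $E_i\com M_i$ from $A_i$ via Lemmas~\ref{lem:FVo-ourFV1} and~\ref{lem:FVtoDFV}, combine them over a product alphabet with disjoint state sets, and nest the two start calls as the single arguments of a fresh rank-one state whose rules all have right-hand side $y_1$, so that the dynamic FV condition for that state at the child of the root becomes exactly $M_{S_1}(s/1)=M_{S_2}(s/1)$; the inheritance argument for the states in $Q_1\cup Q_2$ is likewise the one the paper relies on. So the idea is right and matches the intended proof.

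However, your specific variant of adjoining a fresh unary root symbol $\#$ opens a genuine gap in the direction ``$\tau_{A_1}=\tau_{A_2}$ implies $M$ has the dynamic FV property.'' By definition the property is quantified over all of $E(T_{\Sigma_\#})$, and $T_{\Sigma_\#}$ contains trees in which $\#$ occurs at non-root positions. For $s=\#(t)$ with $t$ containing inner occurrences of $\#$, the condition for $q$ at node $1$ forces $M_{S_1}$ and $M_{S_2}$ to agree on (the relabeling of) $t$, where both are evaluated using your unspecified ``dummy rules added for totality'' at the inner $\#$-nodes. Equivalence of $A_1$ and $A_2$ on $T_\Sigma$ does not force the dummy-extended transducers to agree there: if, say, $A_1$ renders some subtree via one attribute while $A_2$ renders it as $h(\cdot,\cdot)$ applied to two attributes, then at an inner $\#$ the dummies yield $e$ on one side and $h(e,e)$ on the other, so $M$ fails the dynamic FV property even though the ATTs are equivalent, and the reduction breaks. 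You would need to repair this, e.g.\ by using the regular look-ahead to fire the interesting $q_0$-rule only when the subtree below the root is $\#$-free, or, as the paper does, by introducing no new input symbols at all and firing the rule on an existing unary input symbol (assumed to exist), so that $s/1$ ranges over exactly $T_\Sigma$. Apart from this point the argument is sound.
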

\begin{proof}
We show that the equivalence problem of ATTs can be reduced to the decision problem of the dynamic FV property for $\CMTTR$. 
Let $A_1$ and $A_2$ be $\ATT$s. 
From Lemmas~\ref{lem:FVo-ourFV1} and \ref{lem:FVtoDFV}, 
for $i=1,2$, we can get a composition of a $\BREL$ $E_i$ 
and an MTT $M_i=(Q_i, \Sigma_i,\Delta,q_{i0},R_i)$ 
such that it is equivalent with $A_i$ and it has the dynamic FV property.  
We can assume with loss of generality that $Q_1\cap Q_2=\emptyset$. 
Let $\Sigma_{M}=\{(\sigma_1,\sigma_2)^{(k)}\mid \sigma_1\in \Sigma_1^{(k)}, \sigma_2\in \Sigma_2^{(k)}\}$. 
We assume that $\Sigma^{(1)}_{M}\neq \emptyset$, $e\in \Delta^{(0)}$, and $\delta\in \Delta^{(2)}$. 

Let $E$ be the relabeling such that for every input tree $t$ $E(t)$ is the convolution tree of $E(t_1)$ 
and $E_2(t)$, i.e., $E(t)[u]=(E_1(t)[u], E_2(t)[u])$ for every $u\in V(t)$. 
Let $M=(Q_1\cup Q_2\cup \{q_0^{(0)}, {q'}^{(1)}\}, \Sigma_M, \Delta, q, R_1'\cup R_2'\cup R)$ 
where $q, q'$ are new distinct states not in $Q_1\cup Q_2$ such that for every $a\in \Sigma_M^{(1)}$ 
let the rule
\[
\<q_0, a(x_1)>\to \delta(\<q',x_1>(\<q_{10}, x_1>), \<q',x_1>(\<q_{20}, x_1>))
\]
be in $R$.  
Let the rule $\<q_0, \sigma(x_1,\ldots,x_k)>\to e$ be in $R$ 
for every $\sigma\in \Sigma^{(k)}_M$ with $k\neq 1$, 
and let the rule $\<q', \sigma(x_1,\ldots,x_k)>(y_1)\to y_1$ be in $R$ for every $\sigma\in \Sigma^{(k)}$ with $k\geq 0$. 
For $q\in Q_i$, let the rule $\<q, [\sigma_1,\sigma_2](x_1,\ldots, x_k)> \to \rhs_{M_i}(q, \sigma_i)$ be in $R'_i$.
Since $E_1\com M_1$ and $E_2\com M_2$ have the dynamic FV property and $Q_1\cap Q_2=\emptyset$, 
it follows that $M$ has the dynamic FV property if and only if 
$M_1(E_1(s))=M_{q_{10}}(E(s))=M_{q_{20}}(E(s))=M_2(E_2(s))$ for every $s\in T_\Sigma$. 
\QED
\end{proof}

\section{Conclusions}

We have presented two new characterizations of attributed tree transformation
with regular look-around 
in terms of macro tree transducers
(with regular look-around or look-ahead): first a static restriction that is
similar as the one given by F{\"u}l{\"o}p and Vogler~\cite{FV99},
but is extended to nondeleting MTTs with regular look-ahead.
We show that for every MTT with the restriction, an equivalent
\emph{non-circular} ATT can be constructed. 
Our second restriction (called \emph{dynamic FV property})
requires that during any computation, all $j$-th parameter trees of a given state
evaluate to the same output tree. This restriction captures many more MTTs than
previous restrictions, however, it remains an open problem how to decide this
restriction.

One may wonder if every MTT that has the {\bf LIN} property (cf. the Introduction)
also has the dynamic FV property. Alas, this is not the case:
\[
\begin{array}{lcllcl}
\la q_0,\#(x_1)\ra&\to& \la q,x_1\ra(e)\\
\la q,a(x_1)\ra(y_1)&\to& f(f(\la q,x_1\ra(y_1),\la q,x_1\ra(y_1)),\la q,x\ra(f(y_1,y_1)))\\
\la q,e\ra(y_1)&\to& y_1
\end{array}
\]
After $n$ applications of the second rule, state $q$ has $n+1$ distinct parameter trees.
Hence the dynamic FV property is violated.
Note that this MTT translates monadic trees into full binary
trees (this can be done by a simple top-down or bottom-up transducer).
We would like to know, if there is a normal form that guarantees
the dynamic FV property for every MTT with the {\bf LIN} property.



\bibliographystyle{elsarticle-num} 
\bibliography{main}

%
%
%
%
\end{document}